\DeclarePairedDelimiter\ket{\lvert}{\rangle}
\let\mod\relax
\DeclareMathOperator{\mod}{mod}
\newtheoremstyle{break}%
    {}{}%
    {}{}%
    {\bfseries}{}
    {\newline}{}
\theoremstyle{break}
\newtheorem{proposition}{Proposition}[section]
\newtheorem{example}{Example}[section]
\theoremstyle{remark}
\begin{document}
\author{Mark A. Webster}
\email{mark.webster@sydney.edu.au}
\affiliation{Centre for Engineered Quantum Systems, School of Physics,
University of Sydney, Sydney, NSW 2006, Australia}
\affiliation{Sydney Quantum Academy, Sydney, NSW, Australia}

\author{Armanda O. Quintavalle}
\affiliation{Department of Physics and Astronomy, University of Sheffield, Sheffield S3 7RH, United Kingdom}

\author{Stephen D. Bartlett}
\affiliation{Centre for Engineered Quantum Systems, School of Physics,
University of Sydney, Sydney, NSW 2006, Australia}

\title{Transversal Diagonal Logical Operators for Stabiliser Codes}
\begin{abstract}
    Storing quantum information in a quantum error correction code can protect it from errors, but the ability to transform the stored quantum information in a fault tolerant way is equally important. Logical Pauli group operators can be implemented on Calderbank-Shor-Steane (CSS) codes, a commonly-studied category of codes, by applying a series of physical Pauli X and Z gates. Logical operators of this form are fault-tolerant because each qubit is acted upon by at most one gate, limiting the spread of errors, and are referred to as transversal logical operators. Identifying transversal logical operators outside the Pauli group is less well understood. Pauli operators are the first level of the Clifford hierarchy which is deeply connected to fault-tolerance and universality. In this work, we study transversal logical operators composed of single- and multi-qubit diagonal Clifford hierarchy gates. We demonstrate algorithms for identifying all transversal diagonal logical operators on a CSS code that are more general or have lower computational complexity than previous methods. We also show a method for constructing CSS codes that have a desired diagonal logical Clifford hierarchy operator implemented using single qubit phase gates. Our methods rely on representing operators composed of diagonal Clifford hierarchy gates as diagonal XP operators and this technique may have broader applications. 
\end{abstract}

\maketitle
\thispagestyle{plain}

\section{Overview}
Quantum error correction has become a very active area of research because of its potential to mitigate noise in complex quantum devices. Recent experimental results have validated  the storage of quantum information in the codespace of a quantum error correction code as a practical way of protecting it from noise (see \cite{google}, \cite{quantinuum} and \cite{eth}). 
Many of these initial demonstrations have made use of CSS codes~\cite{css}, a well-studied class of quantum error correction codes that are relatively simple to analyse and implement.

To implement algorithms on quantum computers, we also need to transform the stored quantum information in a fault-tolerant way. One method of implementing fault-tolerant logical operations on CSS codes is to use transversal logical operators. Transversal logical operators have depth-one circuit implementations involving single or multi-qubit gates. Such implementations are considered fault-tolerant because an error on one physical qubit can only spread to a limited number of other qubits when applying the logical operator. Whilst the Eastin-Knill theorem rules out the existence of a quantum error correcting code with a set of transversal operators that is universal \cite{eastin_knill}, determining the transversal gates of a quantum error correction code is key to designing a fault-tolerant architecture.

Deeply connected to fault tolerance and universality is the Clifford hierarchy \cite{clifford_hierarchy} of unitary operators. The first level of the Clifford hierarchy is the Pauli group $\mathcal{CH}_1 := \langle iI, X, Z\rangle$. Conjugation of Paulis by operators at level $t+1$ results in an operator at level $t$. The level $t+1$ operators $A \in \mathcal{CH}_{t+1}$ are then defined recursively as those for which $ABA^{-1} \in \mathcal{CH}_t$ for all $B \in \mathcal{CH}_1$. Level $2$ Clifford hierarchy gates include the single-qubit Hadamard and $S:=\sqrt{Z}$ gates, as well as the 2-qubit controlled-Z ($CZ$) gates. Level $3$ gates include the single-qubit $T:=\sqrt{S}$ gate as well as the multi-qubit controlled-S ($CS$) and controlled-controlled-Z ($CCZ$) gates.  A set of gates that includes all level-2 gates and at least one level-3 gate is universal \cite{clifford_universality}.

Logical Pauli group operators can be implemented transversally on CSS codes and identifying these is relatively straightforward. Identifying transversal logical operators at higher levels of the Clifford hierarchy is more challenging and existing methods are of exponential complexity in either the number of physical or logical qubits in the code. Some classes of CSS codes with high degrees of symmetry are known to have non-Pauli transversal logical operators. Examples using single-qubit diagonal gates include the 7-qubit Steane code \cite{steane_code},  two-dimensional color codes \cite{2d_colour} and triorthogonal codes \cite{triorthogonal}. Examples of CSS codes which have logical operators made from single and multi-qubit gates include the  two-dimensional toric code  \cite{moussa}, codes with ZX-symmetries \cite{fold_transversal} and symmetric hypergraph product codes \cite{partitioning_qubits}. 

In this paper, we present a suite of method and algorithms for identifying diagonal transversal logical operators on any CSS code, without any knowledge of any symmetries of the code. The building blocks of our logical operators are physical single- or multi-qubit diagonal gates, at a given level $t$ of the Clifford hierarchy. Our methods scale as a polynomial in the number of physical and/or logical qubits in the code, with one exception. We also give a method for constructing a CSS code that has a transversal implementation of a desired diagonal logical Clifford hierarchy operator using single-qubit gates. Our new algorithms use the XP formalism, introduced in Ref.~\cite{XP}, which is a powerful tool for representing the logical operator structure of a stabiliser code.   

\subsection{Existing Work on Transversal Logical Operators}
We briefly review previous methods for identifying diagonal logical operators of arbitrary CSS codes, and methods for constructing CSS codes with a desired transversal logical operator.
In Ref.~\cite{GF4}, a method is given to find all logical operators at level 2 of the Clifford hierarchy for a CSS code by mapping it to a classical code over $GF(4)$. This method involves calculating the automorphism group of the classical code, which has exponential complexity in the number of qubits in the stabiliser code \cite{automorphism_complexity}. 

There has also been a significant amount of work on logical operators constructed from single- and multi-qubit diagonal Clifford hierarchy gates. In Ref.~\cite{unifying_clifford}, operators composed of diagonal Clifford hierarchy gates on one or two qubits are shown to be representable as symmetric matrices over $\mathbb{Z}_N$, referred to as Quadratic Form Diagonal (QFD) gates. Necessary and sufficient conditions for a QFD gate to act as a logical operator on a CSS code are then presented. 
In Ref.~\cite{synthesis}, a method of generating circuits using multi-qubit gates which implement arbitrary logical operators at level 2 of the Clifford hierarchy is presented. A method for generating CSS codes with transversal diagonal logical operators at increasing levels of the Clifford hierarchy is presented in Ref.~\cite{climbing}, along with a method to increase the $Z$-distance of such codes. In Ref.~\cite{XP}, we demonstrated an algorithm for finding all diagonal logical operators composed of single-qubit phase gates which, for CSS codes, involves taking the kernel modulo $N$ of a matrix with $n+ 2^k$ columns where $n$ and $k$ are the number of physical and logical qubits respectively.

\subsection{Contribution of this Work}
In this work, we present efficient methods to identify and test diagonal logical operators on CSS codes using both single and multi-qubit diagonal Clifford hierarchy gates as building blocks. 
These methods generalise to non-CSS stabiliser codes.  We also present a technique for generating CSS codes with implementations of any desired diagonal Clifford hierarchy logical operator using single-qubit phase gates. 


We first consider operators composed of single-qubit phase gates at level $t$ of the Clifford hierarchy. We show that these can be represented as diagonal XP operators of precision $N=2^t$. For logical operators of this form, we demonstrate the following algorithms that apply to any CSS code and at any desired level of the Clifford hierarchy:
\begin{enumerate}
    \item \textbf{Finding a generating set of diagonal logical identity operators for the code}: An XP operator may act as a logical identity, but may not be an element of the stabiliser group of a CSS code. The logical identities are used as inputs to several other algorithms (\Cref{sec_LI})
    \item \textbf{Search for an implementation of a desired logical controlled-phase operator on the code}: useful for checking if a given CSS code has a transversal implementation of a particular logical operator and for checking the results of other algorithms (\Cref{sec_ker_search});
    \item \textbf{Determining if a given diagonal operator acts as a logical operator on the code}: This method is of linear complexity in the number of independent $X$-checks whereas existing methods are of exponential complexity (\Cref{sec_LO_test});
    \item \textbf{Finding a generating set of diagonal logical operators on the code}: The generating set gives us a complete understanding of the diagonal logical operator structure of a CSS code, and can be used on CSS codes with a large number of physical and logical qubits at any desired level of the Clifford hierarchy (\Cref{sec_comm_LO});
    \item \textbf{Expressing the action of a diagonal logical operator as a product of logical controlled-phase gates}: The action of a logical operator can be difficult to interpret, particularly for codes with a large number of logical qubits. This method greatly simplifies the interpretation of logical actions (\Cref{sec_logical_action}).
\end{enumerate}
We then show that multi-qubit diagonal Clifford hierarchy gates acting on a codesepace can be represented as diagonal XP operators acting on a larger Hilbert space via an embedding operator (\Cref{sec_embedded_codes}). We demonstrate algorithms for:
\begin{enumerate}
\setcounter{enumi}{5}
    \item \textbf{Finding depth-one implementations of logical operators composed of diagonal Clifford hierarchy gates}: on small CSS codes, this allows us to identify and verify the depth-one logical operators of \cite{moussa,fold_transversal,partitioning_qubits} with no knowledge of the symmetry of the code (\Cref{sec_depth_one_algorithm});
    \item \textbf{Canonical implementations of a desired logical controlled-phase operator composed of multi-qubit controlled-phase gates}: this allows us to write closed-form expressions for arbitrary diagonal Clifford hierarchy logical operators \Cref{sec_canonical_logical_operators};
    \item \textbf{Construction of CSS codes which have an implementation of a desired logical controlled-phase operator composed of single qubit phase gates}: the canonical logical operator implementation allows us to construct families of CSS codes which have transversal implementations of a desired diagonal Clifford hierarchy logical operator \Cref{sec_constructing_css_codes}.
\end{enumerate}
Apart from the depth-one search algorithm, the eight algorithms have complexity that is polynomial in the parameters $n,k,r$ of the CSS code (see below). As a result, they can be applied to `large' codes that have so far been out of reach of existing methods. There are no restrictions on the level of the Clifford hierarchy or maximum support size of the physical gates used in the methods.  

A summary of the characteristics and computational complexity of search and test algorithms is presented in \Cref{tab_alg_summary}. Complexity is expressed in terms of the following variables:
\begin{itemize}
    \item Required level of the Clifford hierarchy $t$;
    \item Number of physical qubits $n$ in the CSS code;
    \item Number of logical qubits $k$ in the CSS code;
    \item Number of independent $X$-checks $r$ in the CSS code;
\end{itemize}
The space complexity of the algorithm is expressed in terms of the size of the key matrices used. The time complexity is expressed in terms of the number of kernel operations performed on the key matrices - these operations dominate the complexity of the algorithms.

The algorithms have been implemented in a Python \href{https://github.com/m-webster/CSSLO}{GitHub repository} accessible under the GNU General Public License. A range of sample codes are also available for testing in this repository, including Reed-Muller codes, hyperbolic surface codes, triorthogonal codes and symmetric hypergraph product codes.

\begin{table}[t]
	\centering
\begin{tabularx}{0.97\textwidth} { 
		| >{\raggedright\arraybackslash}p{5cm} 
		|| >{\raggedright\arraybackslash}X 
		| >{\raggedright\arraybackslash}X |
		| >{\raggedright\arraybackslash}X | }
	\hline
	 \textbf{Algorithm} & \textbf{Gate Type} & \textbf{Space Complexity}&\textbf{Time Complexity}\\
	 \hline
  	\hline
   	\textbf{1. Diagonal Logical Identity Group Generators} & Single-qubit  & $\mathcal{O}((k+r)^t\times n)$ & $\mathcal{O}(1)$\\
      
	  \hline
        \textbf{2. Search by Logical Action} & Single-qubit  & $\mathcal{O}((k+r)^t\times n)$ & $\mathcal{O}(1)$\\
  \hline
	 \textbf{3. Logical Operator Test\textsuperscript{*}} & Single-qubit  & $\mathcal{O}(1\times n)$ & $\mathcal{O}(r)$\\
	  \hline
   	\textbf{4. Diagonal Logical Operator Group Generators\textsuperscript{*}} & Single-qubit  & $\mathcal{O}(n \times n)$ & $\mathcal{O}(r)$\\
    	\hline
   	\textbf{5. Determine Action of Diagonal Logical Operator} & Single-qubit  & $\mathcal{O}((k+r)^t \times n)$ & $\mathcal{O}(1)$\\
        \hline
   	\textbf{6. Depth-One Logical Operators\textsuperscript{**}} & Multi-qubit  & $\mathcal{O}(n^t \times n^t)$ & $\mathcal{O}(2^n)$\\
    \hline
\end{tabularx}
\caption{Comparison of Search and Test Algorithms for Diagonal Logical Operators. Note that entries annotated with {*} require the diagonal logical identities of algorithm 1 as input. Entries annotated with {**} require the diagonal logical operators of algorithm 4 as input.}
\label{tab_alg_summary}
\end{table}

\section{Background}

This Section reviews the necessary background material for this work. We first introduce the Clifford hierarchy of diagonal operators and introduce a vector representation of these. We then outline notation and fundamental properties of CSS codes. Next, we define what we mean by a diagonal logical operator on a CSS code. We then present an example illustrating the types of diagonal logical operators we consider in this work for the well-known $[[4,2,2]]$ code. We then review the XP stabiliser formalism and some fundamental properties of the XP operators, which we will use to represent logical operators composed of diagonal Clifford hierarchy gates. We explain the logical operator group structure in the XP formalism, which is somewhat different than in the Pauli stabiliser formalism. 

\subsection{Diagonal Clifford Hierarchy Operators}\label{sec_controlled_phase_operators}
Here we review the properties of operators in the diagonal Clifford hierarchy. We will use diagonal gates at level $t$ of the  Clifford hierarchy on $n$ qubits as the building blocks for logical operators.  
The diagonal Clifford hierarchy operators at each level form a group generated by the following operators~\cite{cui}:
\begin{itemize}
    \item Level 1: Pauli $Z$ gate on qubit $i: 0 \le i < n$ denoted $Z_i$;
    \item Level 2: Controlled-$Z$ ($CZ_{ij}$) and $S_i:= \sqrt{Z_i}$;
    \item Level 3: $CCZ_{ijk}, CS_{ij}$ and $T_i:= \sqrt{S_i}$;
    \item Level $t+1$: Square roots and controlled versions of operators from level $t$.
\end{itemize}

At each level, we refer to the generators as \textbf{level-$t$ controlled-phase gates}. Where an operator is an element of the diagonal Clifford hierarchy group at level $t$, we say that it is \textbf{composed of level-$t$ controlled-phase gates}. 

The single-qubit \textbf{phase gate} at  level $t$ is of form $\text{diag}(1, \exp(2\pi i /N))$ where $N:=2^t$. If an operator is an element of the group generated by single-qubit phase gates at level $t$, we say it is \textbf{composed of level-$t$ phase gates}.

The matrix form of any diagonal transversal logical operator of a CSS code must have entries of form $\exp(q\pi i /2^t)$ for integers $q, t$, as shown in Ref.~\cite{anderson2014classification}. Such matrices are elements of the diagonal Clifford hierarchy group at some level, and so considering logical operators composed of controlled-phase gates yields all possible diagonal transversal logical operators on a CSS code.

\subsection{Vector Representation of Controlled-Phase Operators}\label{sec_CP_vector}
We now introduce a vector representation of controlled-phase operators that underpins our analytical methods. Fix a level $t$ of the Clifford hierarchy (\Cref{sec_controlled_phase_operators}) and let $N:= 2^t$. Let $\omega:= e^{\pi i/N}$ be a $(2N)$-th root of unity. The operator $\text{CP}_N(q,\mathbf{v})$, where $q \in \mathbb{Z}_{2N}$ and $\mathbf{v}$ is a binary vector of length $n$, is defined as follows by its action on a computational basis vectors $|\mathbf{e}\rangle$ for $\mathbf{e} \in \mathbb{Z}_2^n$:
\begin{align}
    \text{CP}_N(q,\mathbf{v})|\mathbf{e}\rangle := \begin{cases}
        \omega^{q}|\mathbf{e}\rangle &\text{ if }\mathbf{v} \preccurlyeq \mathbf{e};\\
        |\mathbf{e}\rangle & \text{ otherwise}.\label{eq_CP_def}
    \end{cases}
\end{align}
The relation $\preccurlyeq$ is a partial order for binary vectors based on their support (the set of indices where the vector is non-zero). The expression $\mathbf{v} \preccurlyeq \mathbf{e}$ indicates $\text{supp}(\mathbf{v}) \subseteq \text{supp}(\mathbf{e})\iff \mathbf{e}\mathbf{v} = \mathbf{v}$ where vector multiplication is componentwise. For an integer $0 \le i < n$, we will also write $i \preccurlyeq \mathbf{v}$ if $\mathbf{v}[i] = 1$. The phase applied can be expressed more concisely as follows:
\begin{align}
    \text{CP}_N(q,\mathbf{v})|\mathbf{e}\rangle = \omega^{q\cdot p_\mathbf{v}(\mathbf{e})}|\mathbf{e}\rangle\text{ where }p_\mathbf{v}(\mathbf{e}) := \prod_{i \preccurlyeq \mathbf{v}}\mathbf{e}[i].\label{eq_CP_phase}
\end{align}
Each generator of the diagonal Clifford hierarchy can be written in vector form. To see this, we note that the phase gate at level $t$ can be written as $P:= \text{diag}(1,\omega^2)$. The phase operator acting on qubit $i$ can be written in vector form as $P_i = \text{CP}_N(2,\mathbf{b}^n_i)$ where $\mathbf{b}^n_i$ is the length $n$ binary vector, which is all zero apart from component $i$ which is one. Similarly, the operator $CP_{ij} = \text{CP}_N(2,\mathbf{b}^n_{ij})$ where $\mathbf{b}^n_{ij}$ is zero apart from components $i$ and $j$. 
 The operators of form $\text{CP}_N(2^{\text{wt}(\mathbf{v})},\mathbf{v})$ with $1 \le \text{wt}(\mathbf{v}) \le t$ are the generators of the level-$t$ controlled-phase operators presented in \Cref{sec_controlled_phase_operators}. 
\begin{example}[Vector Representation of Level $3$ Controlled-Phase Operators]
This example illustrates the vector representation of level $3$ diagonal Clifford hierarchy operators. At level $t=3$ the generators have vector representations as follows:
\begin{align}
T_{i} &= CP_8(2,\mathbf{b}^n_{i})\\
CS_{ij} &= CP_8(4,\mathbf{b}^n_{ij})\\
CCZ_{ijk} &= CP_8(8,\mathbf{b}^n_{ijk}).
\end{align}
We also include $\omega I = CP_8(1,\mathbf{0})$ as a generator at the third level of the hierarchy as phases of this form occur in the commutation relation for controlled-phase operators - see \Cref{eq_cpXi,eq_CP_relation}.
\end{example}

\subsection{CSS Codes}\label{sec_css_codes}
Here we introduce some key notation and results for CSS codes. Our notation for CSS codes is somewhat different to that in the literature and is used because it simplifies the statement of our results. Although we focus on CSS codes in this work, the methods are applicable to any stabiliser code as set out in \Cref{app_non_css}. For our purposes, a CSS code on $n$ qubits is specified by an $r \times n$ binary matrix $S_X$ the rows of which we refer to as the \textbf{$X$-checks} and a $k \times n$ binary matrix $L_X$ whose rows are referred to as the \textbf{$X$-logicals}. We assume that the rows of $S_X$ and $L_X$ are independent binary vectors - otherwise we can use linear algebra modulo $2$ to ensure this. The \textbf{$Z$-checks} can be calculated by taking the kernel modulo $2$ of the $X$-checks and $X$-logicals, i.e., 
\begin{align}
S_Z := \ker_{\mathbb{Z}_2}\begin{pmatrix}S_X\\L_X\end{pmatrix}.\label{eq_Z-checks}
\end{align}
In \Cref{eq_Z-checks}, the notation $\ker_{\mathbb{Z}_2}$ refers to the basis in reduced row echelon form of the kernel modulo $2$ of a binary matrix. We form \textbf{stabiliser generators} $\mathbf{S}_X, \mathbf{S}_Z$ from the rows of $S_X$ and $S_Z$ in the obvious way - if $\mathbf{x}$ is a row of $S_X$ then the corresponding stabiliser generator is $\prod_{0 \le i < n}X_i^{\mathbf{x}[i]}$. The \textbf{codespace} is the simultaneous $+1$ eigenspace of the \textbf{stabiliser group} $\langle \mathbf{S}_X, \mathbf{S}_Z \rangle$ and is a subspace of $\mathcal{H}_2^n$. The codespace is spanned by $2^k$ \textbf{canonical codewords} which are indexed by binary vectors $\mathbf{v}$ of length $k$ and are defined as follows:
\begin{align}
|\mathbf{v}\rangle_L := \sum_{\mathbf{u} \in \mathbb{Z}_2^r}|\mathbf{e_{uv}}\rangle := \sum_{\mathbf{u} \in \mathbb{Z}_2^r}|\mathbf{u}S_X + \mathbf{v}L_X\rangle.\label{eq_codewords}
\end{align}
In the above expression, matrix operations are modulo $2$. For simplicity, we are not concerned with normalising codeword states. It may be possible to make a different \textbf{choice of basis} for the span $\langle L_X \rangle$ over $\mathbb{Z}_2$. The choice of basis affects the labelling of the canonical codewords by binary vectors $\mathbf{v}$ of length $k$, but does not otherwise change the set of canonical codewords. 

\subsection{Logical Operators of CSS Codes}\label{sec_LO_of_CSS_codes}
We now describe what we mean by a logical operator on a CSS code. Let $\mathcal{C}:\mathcal{H}_2^k \rightarrow \mathcal{H}_2^n$ be the \textbf{encoding operator} which takes computational basis vectors to canonical codewords of \Cref{eq_codewords} i.e. $\mathcal{C}\ket{\mathbf{v}} =\ket{\mathbf{v}}_L$ for $\mathbf{v}\in \mathbb{Z}_2^k$. Now let $B$ be a unitary operator acting on $k$ qubits. We say that an operator $\overline{B}$ acting on $n$ qubits is a \textbf{logical $B$ operator} if 
\begin{align} \overline{B}\mathcal{C} = \mathcal{C}B.\label{eq_LO_condition}\end{align} 
A unitary operator $B$ is \textbf{diagonal} if we can write $B:= \text{diag}(\mathbf{c})$ for some complex-valued vector $\mathbf{c}$ of length $2^k$ representing the phase applied to each computational basis vector, i.e. $B\ket{\mathbf{v}} = \mathbf{c_v}\ket{\mathbf{v}}$ for $\mathbf{v} \in \mathbb{Z}_2^k$ and $\mathbf{c_v}\in \mathbb{C}$.
If $\overline{B}$ is a diagonal logical operator, then $B$ is diagonal as well, though the converse is not necessarily true. 
From \Cref{eq_LO_condition} and \Cref{eq_codewords}, we have:
\begin{align}
    \overline{B}\mathcal{C}\ket{\mathbf{v}} &= \overline{B}\ket{\mathbf{v}}_L= \overline{B}\sum_{\mathbf{u} \in \mathbb{Z}_2^r}|\mathbf{e_{uv}}\rangle= \sum_{\mathbf{u} \in \mathbb{Z}_2^r}\overline{B}|\mathbf{e_{uv}}\rangle\\
    &= \mathcal{C}B\ket{\mathbf{v}} = \mathbf{c_v}\ket{\mathbf{v}}_L = \sum_{\mathbf{u} \in \mathbb{Z}_2^r}\mathbf{c_v} |\mathbf{e_{uv}}\rangle.\label{eq_LO_CSS}
\end{align}
As a result, we can check if $\overline{B}$ is a logical $B$ operator by doing the following:
\begin{enumerate}
\item For each $\mathbf{v}\in \mathbb{Z}_2^k$, calculate $\mathbf{c_v} \in \mathbb{C}$ such that $B \ket{\mathbf{v}} = \mathbf{c_v}\ket{\mathbf{v}}$;
\item For each $\mathbf{u} \in \mathbb{Z}_2^r$, check that $\overline{B} \ket{\mathbf{e_{uv}}} = \mathbf{c_v}\ket{\mathbf{e_{uv}}}$.
\end{enumerate}
This method of checking whether a diagonal unitary is a logical operator involves $\mathcal{O}(2^{r+k})$ steps; we present a method in \Cref{sec_LO_test} with linear complexity in $r$. 

We say that an operator $\overline{B}$ is a \textbf{logical identity} if $\overline{B}\ket{\mathbf{v}}_L = \ket{\mathbf{v}}_L$ for all $\mathbf{v}\in \mathbb{Z}_2^k$ - that is, it fixes each canonical codeword and hence each element of the codespace. If $\overline{B}$ is diagonal, as a consequence of \Cref{eq_LO_CSS}, it is a logical identity if and only if $\overline{B}\ket{\mathbf{e_{uv}}} =\ket{\mathbf{e_{uv}}}$ for all $\mathbf{u}\in \mathbb{Z}_2^r, \mathbf{v}\in \mathbb{Z}_2^k$.

Whether a diagonal operator is a logical identity or a logical operator is independent of the choice of basis for the span $\langle L_X\rangle$ (see \Cref{sec_css_codes}). However, the logical action of the operator depends on the labelling the canonical codewords and so is dependent on the choice of basis for $\langle L_X\rangle$.

\begin{example}[Transversal Logical Operators of {[[4,2,2]]} Code]\label{eg_CSS_LO}
We use the [[4,2,2]] code to illustrate  the  types of transversal logical operators we consider in this work. Using the notation introduced in \Cref{sec_css_codes}, the $X$-checks and $X$-logicals of the code are:
\begin{align}
    S_X &:= \begin{pmatrix}1111\end{pmatrix}\\
    L_X &:= \begin{pmatrix}0101\\0011\end{pmatrix}
\end{align}
In this case, there are $r=1$ $X$-checks and $k=2$ $X$-logicals. 
There are $2^k = 4$ canonical codewords which we calculate using \Cref{eq_codewords}:
\begin{align}
\begin{matrix}
|00\rangle_L &:= |0000\rangle + |1111\rangle \\
|01\rangle_L &:= |0011\rangle + |1100\rangle \\
|10\rangle_L &:= |0101\rangle + |1010\rangle \\
|11\rangle_L &:= |0110\rangle + |1001\rangle 
\end{matrix}
\end{align}
We can calculate the single $Z$-check as follows:
\begin{align}
    S_Z &:= \ker_{\mathbb{Z}_2}\begin{pmatrix}
        S_X\\L_X
    \end{pmatrix} = \begin{pmatrix}
        1111
    \end{pmatrix}.
\end{align}
Readers can verify that $Z^{\otimes 4}$ acts as a logical identity by checking that $Z^{\otimes 4}\ket{\mathbf{v}}_L = \ket{\mathbf{v}}_L$ for each of the canonical codewords. 

The following are examples of transversal diagonal logical operators composed of controlled-phase gates at level $2$ whose actions can be verified by applying the method of \Cref{sec_LO_of_CSS_codes}:
\begin{enumerate}
	\item \textbf{Single-qubit phase gates} Controlled-Z:  $\overline{CZ_{01}} = S_0^3S_2S_2S_3^3$
	\item \textbf{Multi-qubit controlled-phase gates} S operator on both logical qubits:  $\overline{S_0S_1}= S_1S_2CZ_{03}$
\end{enumerate}
\end{example}

\subsection{The XP Formalism}\label{sec_XP}
The XP formalism is a generalisation of the Pauli stabiliser formalism, and we will show that diagonal Clifford hierarchy operators can be represented as diagonal XP operators. In the XP formalism, we fix an integer \textbf{precision} $N \ge 2$ and let $\omega = \exp(\pi i/N)$ be a $(2N)$-th root of unity. We define a diagonal phase operator $P = \text{diag}(1, \omega^2)$ which is a $1/N$ rotation around the Z axis and consider the group of XP operators $\mathcal{XP}^n_N$ that is generated by $\omega I, X_i, P_i$ where $P_i$ is a $P$ operator applied to qubit $i$. By setting $N:=2^t$, it is easy to see that the $P_i$ correspond to the level $t$ phase gates of \Cref{sec_controlled_phase_operators}, and so any operator composed of single-qubit phase gates can be represented as a diagonal XP operator. For example, setting $t=1$ results in $N=2, \omega = i$ and $P = Z$ so $\mathcal{XP}_2^n$ is the Pauli group on $n$ qubits.

The XP formalism has a fundamental commutation relation that allows us to move $P$ operators to the right of $X$ operators:
\begin{align}
    PX = \omega^2XP^{-1}. \label{eq_XP_relation}
\end{align}
All XP operators have a unique \textbf{vector representation} with a phase component $p\in \mathbb{Z}_{2N}$, an $X$-component $\mathbf{x} \in \mathbb{Z}_2^n$ and a $Z$-component $\mathbf{z} \in \mathbb{Z}_N^n$. 
The $Z$-component is modulo $N$, for instance, because $P^N = I$.
The XP operator formed from these components is:
\begin{align}
XP_N(p|\mathbf{x}|\mathbf{z}) := \omega^p \prod_{0 \le i < n}X_i^{\mathbf{x}[i]}P_i^{\mathbf{z}[i]}.\label{eq_XP_def}
\end{align}
\textbf{Diagonal XP operators} are those with a zero $X$-component. 
The vector form of XP operators allows us to perform algebraic operations efficiently via componentwise addition and multiplication of vectors - examples are given in Table 4 of \cite{XP}. In particular, the \textbf{action of an XP operator} on a computational basis element $|\mathbf{e}\rangle$ where $\mathbf{e} \in \mathbb{Z}_2^n$ is determined as follows:
\begin{align}
XP_N(p|\mathbf{x}|\mathbf{z})|\mathbf{e}\rangle = \omega^{p 
 + 2 \mathbf{e} \cdot \mathbf{z}}|\mathbf{e} \oplus \mathbf{x}\rangle.\label{eq_XP_action}
\end{align}
Where $N=2^t$, we can determine the lowest level of the Clifford hierarchy at which a diagonal operator $B := XP_N(0|\mathbf{0|z})$ occurs. Let $g := GCD(N, \mathbf{z})$ be the GCD of $N$ and each component of $\mathbf{z}$. As $N=2^t, g$ is a power of $2$ and $B = XP_{N/g}(p/g|\mathbf{0|z}/g)$. Accordingly, $B$ occurs at level $t - \log_2(g)$ of the diagonal Clifford hierarchy.

\begin{example}[Determining Clifford Hierarchy Level of XP operators]
Let $t=3$ and $B = XP_8(0|\mathbf{0}|4444) $, so that $g = GCD(8,4) = 4$. Hence $B = XP_2(0|\mathbf{0}|1111)= Z^{\otimes 4}$ and occurs at level $t - \log_2(4) = 3-2= 1$ of the Clifford hierarchy.
\end{example}

\subsection{Logical Identity and Logical Operator Groups in the XP Formalism}\label{sec_XP_groups}
We now look at the logical group structure of a CSS code in the XP formalism with reference to the definitions of logical operators in  \Cref{sec_LO_of_CSS_codes}. 
In the stabiliser formalism, a Pauli operator acts as a logical identity if and only if it is in the stabiliser group $\langle \mathbf{S}_X,\mathbf{S}_Z\rangle$. 
In the XP stabiliser formalism, an XP operator may act as a logical identity but not be in the stabiliser group - we will see an instance of this in \Cref{eg_hypercube_LI}. 
The \textbf{logical XP identity group}, $\mathcal{I}_\text{XP}$, are the XP operators of precision $N$ which fix each element of the codespace. The stabiliser group is a subgroup of $\mathcal{I}_\text{XP}$ but may not be equal to it.

The \textbf{logical XP operator group}, $\mathcal{L}_\text{XP}$, are the XP operators of precision $N$ that are logical $B$ operators for some unitary $B$ acting on $k$ qubits. 
Logical XP operators may have actions outside the Pauli group, and the logical $\overline{CZ_{01}}$ operator of \Cref{eg_CSS_LO} is an instance of such an operator. Logical identities are elements of $\mathcal{L}_\text{XP}$ that have a trivial action. 
The logical groups in the XP formalism are summarised in \Cref{fig_groupdiagram}.
\begin{figure}[hbt!]
\centering
\begin{tikzpicture} 
[set/.style = {thick,
    fill=Rhodamine,
    opacity = 0.2,
    text opacity = 1}]
    
\filldraw[set] (0,1) circle (1);
\filldraw[set] (0.5,1) ellipse[x radius=2, y radius=1.5];
\filldraw[set] (1,1) ellipse[x radius=3, y radius=2];
\draw (-2.5,-1.5) rectangle (5,3.5) ;
\draw (4,-1) node {$\mathcal{XP}^n_N$};
\draw (3,1) node  {$\mathcal{L}_\text{XP}$};
\draw (1.5,1) node  {$\mathcal{I}_\text{XP}$};
\draw (0,1) node  {$\langle \mathbf{S}_X,\mathbf{S}_Z  \rangle$};
\end{tikzpicture}
\caption{\textbf{Relationship between XP Operator Groups}:  Here, $\mathcal{XP}^n_N$ is the group of all XP operators of precision $N$ on $n$ qubits.  The stabiliser group $\langle \mathbf{S}_X,\mathbf{S}_Z  \rangle$ of a CSS code is a subgroup of the logical XP identity group $\mathcal{I}_\text{XP}$ which fixes all elements of the codespace which, in turn, is a subgroup of the logical operators of XP form $\mathcal{L}_\text{XP}$.}\label{fig_groupdiagram}
\end{figure}
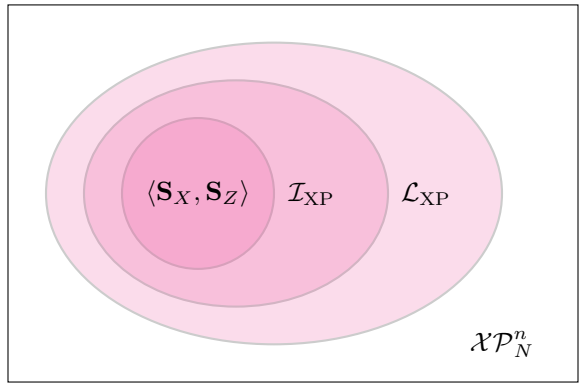

\section{Logical Operators Composed of Single-Qubit Phase Gates}
In this Section, we present methods for identifying and testing logical operators composed of single-qubit phase gates at a given level $t$ of the Clifford hierarchy. 
Operators of this form can be identified with diagonal XP operators of precision $N=2^t$. 
The algorithms in this Section are of polynomial complexity in the code parameters $n,k,r$ (\Cref{sec_css_codes}), so they can be used on CSS codes with a large number of physical or logical qubits.

This Section is structured as follows.  We first show how to calculate generators for the diagonal logical identity XP group. This is an important first step for a number of our algorithms.
We then demonstrate an algorithm that searches for a diagonal XP operator with a desired logical action. Next, we set out an efficient method for testing if a given diagonal XP operator is a logical operator on a CSS code. We then show how to use this test to find all diagonal logical operators of XP form. Finally, we show how to express the action of a diagonal logical XP operator in terms of a product of logical controlled phase operators.
We use the Hypercube code of \cite{hypercube,Kubica1} which has a rich logical operator structure an example throughout this Section. We also demonstrate the use of the algorithms on larger codes such as hyperbolic color codes \cite{LDPC_colour}, poset codes \cite{poset_codes} and triorthogonal codes \cite{triorthogonal}.

\subsection{Diagonal Logical XP Identity Group Generators}\label{sec_LI}
Calculating generators for the logical identity group of a CSS code is an important first step for several of the algorithms discussed in this paper. An algorithm for determining the logical identity group is set out in Section 6.2 of \cite{XP}. Here, we present a simplified version for CSS codes.


Due to the discussion in \Cref{sec_LO_of_CSS_codes}, a diagonal logical identity operator fixes all $\ket{\mathbf{e_{uv}}}$ in the canonical codewords of \Cref{eq_codewords}.  
Now let $N:=2^t$ and let $B:=XP_N(2p|\mathbf{0}|\mathbf{z})$ be a diagonal XP operator. 
Using \Cref{eq_XP_action}, the action of $B$ on the computational basis vector $\ket{\mathbf{e_{uv}}}$ is $B|\mathbf{e_{uv}}\rangle = \omega^{2p + 2\mathbf{e_{uv}}\cdot \mathbf{z}}|\mathbf{e_{uv}}\rangle$. 
Considering the action of $B$ on $\ket{\mathbf{e_{00}}} = \ket{\mathbf{0}}$, we see that $p=0 \mod 2N$.
As $\omega^{2N} =1$, $B$ applies a trivial phase to $\ket{\mathbf{e_{uv}}}$ if and only if $\mathbf{e_{uv}}\cdot \mathbf{z} = 0 \mod N$. 
We can find all such solutions by taking the kernel of a suitably constructed matrix modulo $N$. 
This is done via the \textbf{Howell matrix form} \cite{howell} which is a generalisation of the reduced row echelon form for modules over rings such as $\mathbb{Z}_N$. 
The notation $\ker_{\mathbb{Z}_N}(E_M)$ means the Howell basis of the kernel of the matrix $E_M$ modulo $N$.

\begin{framed}
\vspace{-0.25cm}
\paragraph{Algorithm 1: Logical Identity Group Generators}
\paragraph{Input:}
\begin{enumerate}
    \item The $X$-checks $S_X$ and $X$-logicals $L_X$ of a CSS code (\Cref{sec_css_codes});
    \item The desired level of the Clifford hierarchy $t$ (\Cref{sec_controlled_phase_operators}).
\end{enumerate}
\paragraph{Output:}
A matrix $K_M$ whose rows are the $Z$-components of a set of generators for the diagonal logical identity XP group of precision $N=2^t$ (\Cref{sec_XP_groups}).
\paragraph{Method:}
\begin{enumerate}
    \item Let $E_M$ be the binary matrix whose rows are the $\mathbf{e_{uv}}:= \mathbf{u}S_X + \mathbf{v}L_X$ of \Cref{eq_codewords};
    \item Let $N:= 2^t$ and calculate $K_M:=\ker_{\mathbb{Z}_N}(E_M)$ in Howell matrix form;
    \item Return $K_M$
\end{enumerate}
\end{framed}

Because $E_M$ has $2^{r + k}$ rows, the complexity of the logical identity algorithm is highly sensitive to the number of $X$-checks $r$ and logical qubits $k$. However, due to Proposition E.13 of Ref.~\cite{XP}, we only need to consider $\mathbf{e_{uv}}$ where $\text{wt}(\mathbf{u}) + \text{wt}(\mathbf{v}) \le t$ to determine the logical identity group up to level $t$ of the Clifford hierarchy. Hence, we only require $\genfrac[]{0pt}{2}{r+k}{t} := \sum_{0 \le j \le t}\genfrac(){0pt}{2}{r+k}{j}$ rows from $E_M$ and so the dimensions of the matrix and scale as a polynomial of degree $t$ in $r+k$. 

\begin{example}[Logical Identity Algorithm - Hypercube Code]\label{eg_hypercube_LI}
\begin{figure}[hbt!]
\centering
\includegraphics[width=0.9\textwidth]{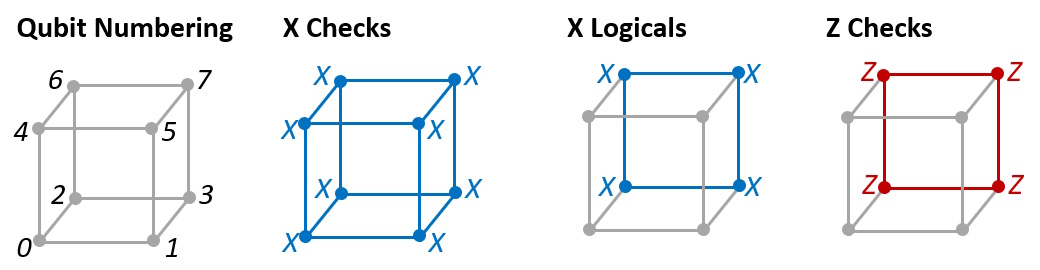}
\caption{\textbf{Hypercube Code of Dimension $3$}: qubits reside on the vertices of a cube. The blue-coloured $X$-logicals are associated with the 2D faces, whilst the $X$-check is associated with the single 3D volume. The red-coloured $Z$-checks are associated with the 2D faces.}\label{fig_hypercube_SXLXSZ}
\end{figure}

In this example, based on Refs \cite{Kubica1} and \cite{hypercube},  qubits reside on the eight vertices of a cube. 
The single $X$-check is the all-ones vector indicating an X operator on all vertices of the cube: 
\begin{align}S_X =\begin{pmatrix}11111111\end{pmatrix}.\end{align}
The three $X$-logicals are weight 4 vectors associated with three faces meeting at a point which we write in the notation of \Cref{sec_css_codes} as follows: 
\begin{align}
L_X = \begin{pmatrix}01010101\\00110011\\00001111\end{pmatrix}.
\end{align}
We calculate the $Z$-checks by applying \Cref{eq_Z-checks} and find that the $Z$-checks also correspond to faces:
\begin{align}
    S_Z := \ker_{\mathbb{Z}_2}\begin{pmatrix}S_X\\L_X\end{pmatrix} = \begin{pmatrix}10010110\\01010101\\00110011\\00001111\end{pmatrix}.
\end{align}
This process is exactly the same as finding the diagonal logical identities at level $t=1$ as outlined in \Cref{sec_LI}. In this case, $E_M$ has $r+k = 1+3 = 4$ rows and the logical identities are the kernel of $E_M$ modulo $2$. Now applying the logical identity algorithm at level $t=3$, $E_M$ has $15$ rows representing the sum modulo 2 of up to $3$ rows from $S_X$ and $L_X$. Taking the kernel of $E_M$ modulo $N=2^3 = 8$, we find:
\begin{align}
    K_M := \ker_{\mathbb{Z}_8}(E_M) = \begin{pmatrix}
         22222222\\
04040404\\
00440044\\
00004444
    \end{pmatrix}
\end{align}
The rows of $K_M$ are the $Z$-components of diagonal XP operators which act as logical identities, and form a generating set of all such operators of precision $N$. For instance, the operator $XP_8(0|\mathbf{0}|22222222) = S^{\otimes 8}$ acts as a logical identity, but is not in the stabiliser group $\langle \mathbf{S}_X,\mathbf{S}_Z\rangle$. An interactive version of this example is in the \href{https://github.com/m-webster/CSSLO/blob/main/notebooks/01.0_logical_identity.ipynb}{linked Jupyter notebook}.
\end{example}

\subsection{Algorithm 2: Search for Diagonal XP Operator by Logical Action}\label{sec_ker_search}
We now demonstrate a method that searches for diagonal logical operators of XP form with a desired action. 
Aside from verifying if a CSS code has a transversal implementation of a particular logical operator, this is a useful method for cross-checking other algorithms.


\begin{framed}
\paragraph{Algorithm 2: Search for Diagonal XP Operator by Logical Action}
\paragraph{Input:}
 \begin{enumerate}
    \item The $X$-checks $S_X$ and $X$-logicals $L_X$ of a CSS code (\Cref{sec_css_codes});
     \item A level-$t$ controlled-phase operator $B$ on $k$ qubits (\Cref{sec_controlled_phase_operators}) such that $B\ket{\mathbf{0}} = \ket{\mathbf{0}}$ .
 \end{enumerate}
 \paragraph{Output:}
 A diagonal XP operator of precision $N = 2^t$ which acts as a logical $B$ operator or \texttt{FALSE} if this is not possible. 
 \paragraph{Method:}
\begin{enumerate}
    \item For $\mathbf{v} \in \mathbb{Z}_2^k$ calculate the phase $\mathbf{q_v} \in \mathbb{Z}_N$ such that $B\ket{\mathbf{v}} = \omega^{2\mathbf{q_v}}\ket{\mathbf{v}}$;
    \item Form the matrix $E_B$ that has rows of form $(-\mathbf{q_v}|\mathbf{e_{uv}})$ where $\mathbf{e_{uv}}:= \mathbf{u}S_X + \mathbf{v}L_X$;
    \item Calculate the kernel $K_B :=\ker_{\mathbb{Z}_N}(E_B)$;
    \item If there is an element $(1|\mathbf{z}) \in K_B$ then $\mathbf{z}$ is the $Z$-component of a logical $B$ operator $\overline{B}:= XP_N(0|\mathbf{0|z})$. This is because $(1|\mathbf{z}) \cdot (-\mathbf{q_v}|\mathbf{e_{uv}}) = 0\mod N \iff  \mathbf{e_{uv}} \cdot \mathbf{z} = \mathbf{q_v}\, \mod N$ for all $\mathbf{e_{uv}}$, which corresponds to the action of a logical $B$ operator on the codewords $\ket{\mathbf{v}}_L$.
\end{enumerate}
\end{framed}
The above algorithm requires that $B|\mathbf{0}\rangle =|\mathbf{0}\rangle$. If this is not the case, let $B|\mathbf{0}\rangle = \omega^p|\mathbf{0}\rangle$, run the algorithm using $B':=\omega^{-p}B$ and adjust for phase on the result. The results of the algorithm are dependent on the choice of basis for the span $\langle L_X\rangle$ (see \Cref{sec_css_codes}).

The logical action search algorithm involves finding the kernel of a matrix $E_B$ of dimension $2^{r+k} \times (n + 1)$. Hence the  complexity of the algorithm is sensitive to the number of logical qubits $k$ and independent $X$-checks $r$, but can be reduced as follows. 
Due to \Cref{prop_action_weight}, where $N = 2^t$ the dot product $\mathbf{e_{uv} \cdot z}$ can always be written as a $\mathbb{Z}_N$ linear combination of terms of form $\mathbf{e_{u'v'} \cdot z}$ where $\text{wt}(\mathbf{u'}) + \text{wt}(\mathbf{u'}) \le t$.
Hence, we only need to consider $\mathbf{e_{uv}}$ where $\text{wt}(\mathbf{u}) + \text{wt}(\mathbf{v}) \le t$ and $\mathbf{q_v}$ where $\text{wt}(\mathbf{v}) \le t$. The number of rows required in $E_M$ is therefore $\genfrac[]{0pt}{2}{k+r}{t}$ where $\genfrac[]{0pt}{2}{r}{t} := \sum_{0 \le j \le t}\genfrac(){0pt}{2}{r}{j}$. The dimensions of the matrix $E_M$ are polynomial rather than exponential in $n, k$ and $r$.
\begin{example}[Search for Diagonal XP Operator by Logical Action]\label{eg_ker_search}
The \href{https://github.com/m-webster/CSSLO/blob/main/notebooks/01.2_kernel_search.ipynb}{linked Jupyter notebook} illustrates the operation of the search algorithm on the hypercube code of \Cref{eg_hypercube_LI}. Users can enter the desired logical operator to search for in text form - for example \texttt{CZ[1,2]}, \texttt{S[1]} or \texttt{CCZ[0,1,2]}. 
The script either returns a diagonal XP operator with the desired logical action, or \texttt{FALSE} if there is no such operator. 
We find logical operators $\overline{CZ_{12}} = XP_8(0|\mathbf{0}|02060602)$ and $\overline{CCZ_{012}} = XP_8(0|\mathbf{0}|13313113)$ but no solutions for transversal logical $S$ operators.
\end{example}

\subsection{Logical Operator Test for Diagonal XP Operators}\label{sec_LO_test}
We now present an efficient method for determining whether a given diagonal XP operator acts as a logical operator on a CSS code, which relies on a commutator property of logical operators. This is used to find a generating set of all diagonal logical XP operators of a given precision and to check the results of other algorithms. 

Due to proposition E.2 of Ref.~\cite{XP}, an XP operator $B$ acts as a logical operator on the codespace if and only if the group commutator with any logical identity $A$ is again an element of the logical identity group $\mathcal{I}_\text{XP}$ (see \ref{sec_XP_groups}). That is:
\begin{align}
    [[A,B]] := ABA^{-1}B^{-1} \in \mathcal{I}_\text{XP}, \forall A \in \mathcal{I}_\text{XP}.
\end{align}
When $B := XP_N(0|\mathbf{0}|\mathbf{z})$ is diagonal and $A := XP_N(0|\mathbf{x}|\mathbf{0})$ is non-diagonal, by applying the COMM rule of Table 4 in Ref.~\cite{XP} we have:
\begin{align}
    [[A,B]] = XP_N(2\mathbf{x}\cdot\mathbf{z}|\mathbf{0}|-2\mathbf{x}\mathbf{z}).\label{eq_XP_commutator}
\end{align}
As  $B$ is a diagonal operator, we only need to consider commutators with non-diagonal elements of the logical identity group. In \Cref{prop_comm_X_checks} we show that this reduces to finding $\mathbf{z} \in \mathbb{Z}_N^n$ such that for all $X$-checks $\mathbf{x}$, both $\mathbf{x}\cdot \mathbf{z} = 0 \mod N$ and $2\mathbf{x}\mathbf{z} \in \langle K_M \rangle_{\mathbb{Z}_N}$  where $K_M$ is a generating set of $Z$-components of the diagonal logical identities as defined in \Cref{sec_LI} and $\langle K_M \rangle_{\mathbb{Z}_N}$ the row span of $K_M$ over $\mathbb{Z}_N$.

As $2\mathbf{xz}$ and $N$ are both divisible by 2, we apply the method of \Cref{sec_XP} and see that the group commutator must be at most a level $t-1$ Clifford hierarchy operator. 
For instance, for $t=2, N=4$ logical operators must commute up to level $t=1, N=2$ logical identities which are the $Z$-checks (see \Cref{eg_hypercube_LI}).
This observation either eliminates the need to calculate the logical identities (for $t \le 2$) or reduces the complexity of calculating them (the number of rows in the matrix $E_M$ of \Cref{sec_LI} is a polynomial of degree $t$).

Hence, we have an $\mathcal{O}(r)$ algorithm for checking whether a diagonal XP operator is a logical operator of a CSS code where $r$ is the number of independent $X$-checks, but we may need to first run the diagonal logical identity algorithm of \Cref{sec_LI} at level $t-1$. 
\begin{framed}
\paragraph{Algorithm 3: Logical Operator Test for Diagonal XP Operators}
\paragraph{Input:}
\begin{enumerate}
    \item The $X$-checks $S_X$ of a CSS code (\Cref{sec_css_codes});
    \item The matrix $K_M$ corresponding to the $Z$-components of the level $t-1$ diagonal logical identity generators (\Cref{sec_LI};
    \item A diagonal XP operator $B = XP_N(0|\mathbf{0|z})$ on $n$ qubits of precision $N = 2^t$ (\Cref{sec_XP}).
\end{enumerate}
\paragraph{Output:}
\texttt{TRUE} if $B$ acts as a logical operator on the code or \texttt{FALSE} otherwise.
\paragraph{Method:}
\begin{enumerate}
    \item For each row $\mathbf{x}$ of $S_X$:
    \begin{enumerate}
        \item Check if $\mathbf{x\cdot z} = 0 \mod N$; and
        \item Check if $2\mathbf{xz}$ is in the rowspan of $K_M$ over $\mathbb{Z}_N$; 
        \item If either is not the case, return \texttt{FALSE}.
    \end{enumerate} 
    \item Return \texttt{TRUE}.
\end{enumerate}
\end{framed}

\begin{example}[Logical Operator Test]
In this example, we apply the logical operator test to the logical $\overline{CZ_{12}}$ found for the hypercube code in \Cref{eg_ker_search}. As $\overline{CZ_{12}}:=XP_8(0|\mathbf{0}|02060602)$, we let $\mathbf{z} = 02060602$. Let $\mathbf{x} = 11111111$ corresponding to the single $X$-check. We calculate the group commutator $C:= (2\mathbf{x}\cdot \mathbf{z}|\mathbf{0}|-2\mathbf{x} \mathbf{z})$. We find that $\mathbf{x}\cdot\mathbf{z} = 16 = 0 \mod 8$ and $-2\mathbf{xz} = 04040404 \mod 8$. Referring to \Cref{eg_hypercube_LI}, we see that this vector is a row of $K_M$. As both $\mathbf{x}\cdot\mathbf{z} = 0 \mod 8$ and $-2\mathbf{x} \mathbf{z} \in \langle K_M\rangle_{\mathbb{Z}_N}$, $C$ is a logical identity. Accordingly, we have verified that $\overline{CZ_{12}}$ is a diagonal logical operator on the code. Applying the method of \Cref{sec_XP}, we note that $\overline{CZ_{12}}$ is at level $2$ of the Clifford hierarchy and the group commutator $C$ is at level $1$.
\end{example}

\subsection{Diagonal Logical XP Operator Group Generators}\label{sec_comm_LO}
We now show how to apply the test for diagonal logical XP operators of \Cref{sec_LO_test} to find all diagonal logical operators of XP form for a CSS code. 
\begin{framed}
\paragraph{Algorithm 4: Diagonal Logical XP Operator Group Generators}
\paragraph{Input:}
\begin{enumerate}
    \item The $X$-checks $S_X$ of a CSS code (\Cref{sec_css_codes});
    \item The desired level of the Clifford hierarchy $t$ (\Cref{sec_controlled_phase_operators});
    \item The matrix $K_M$ corresponding to the $Z$-components of the level $t-1$ diagonal logical identity generators (\Cref{sec_LI}).
\end{enumerate}
\paragraph{Output:} A matrix $K_L$ over $\mathbb{Z}_N$ representing the $Z$-components of a generating set of diagonal logical operators of XP form (\Cref{sec_XP_groups}). 
\paragraph{Method:}
\begin{enumerate}
    \item For each $X$-check $\mathbf{x} \in S_X$, find solutions $\mathbf{z} \in \mathbb{Z}_N^n$ such that both $\mathbf{x}\cdot \mathbf{z} = 0$ and $2\mathbf{x} \mathbf{z} \in \langle K_M\rangle_{\mathbb{Z}_N}$. Details of solving within these constraints are set out in \Cref{sec_comm_x}. Denote the solutions $\text{Comm}_N(K_M, \mathbf{x})$;
    \item Find the intersection of all such solution sets $K_L := \bigcap_{\mathbf{x} \in S_X}\text{Comm}_N(K_M, \mathbf{x})$. The method for determining intersections of spans over $\mathbb{Z}_N$ is covered in Appendix A.4 of \cite{XP};
    \item Return $K_L$.
\end{enumerate}
\end{framed}
The rows of $K_L$ correspond to the $Z$-components of a generating set of the logical XP group (\Cref{sec_XP_groups}), which includes the logical identity XP group. Determining the logical action of the operators is discussed in \Cref{sec_logical_action}.
\subsection{Determine Action of Diagonal Logical XP Operator}\label{sec_logical_action}
Here we demonstrate an algorithm expressing the action of a diagonal logical XP operator in terms of logical controlled-phase operators. This is important because the algorithm in \Cref{sec_comm_LO} does not yield any information on the action of the resulting diagonal logical operators. 
\begin{framed}
\paragraph{Algorithm 5: Determine Action of Diagonal Logical XP Operator}
\paragraph{Input:}
\begin{enumerate}
    \item The $X$-logicals $L_X$ of a CSS code (\Cref{sec_css_codes}) with $k$ logical qubits;
    \item A diagonal XP operator $\overline{B}$ of precision $N:= 2^t$ that acts as a logical operator on the code (\Cref{sec_LO_of_CSS_codes}).
\end{enumerate}
\paragraph{Output:}
A diagonal Clifford hierarchy operator $B$ on $k$ qubits representing the logical action of $\overline{B}$.
\paragraph{Method:}
\begin{enumerate}
    \item Let $V:=\{\mathbf{v} \in \mathbb{Z}_2^n : \text{wt}(\mathbf{v}) \le t\}$; 
    \item For each $\mathbf{v} \in V$, calculate $\mathbf{q_v}$ such that  $\overline{B}\ket{\mathbf{v}L_X} = \omega^{\mathbf{q_v}}\ket{\mathbf{v}L_X}$;
    \item Loop over each $\mathbf{v} \in V$ ordered by weight. For any $\mathbf{v} \preccurlyeq \mathbf{u} \in V \setminus \{\mathbf{v}\}$, update $\mathbf{q_u} := (\mathbf{q_u} - \mathbf{q_v})\mod 2N$;
    \item Return $B :=\prod_{\mathbf{v} \in V}CP_N(\mathbf{q_v}, \mathbf{v})$ in terms of the vector form of controlled-phase operators of \Cref{sec_CP_vector}.
\end{enumerate}
\end{framed}
The above algorithm involves calculating $\mathcal{O}(k^t)$ phase components $\mathbf{q_v}$, and this is sufficient due to \Cref{prop_action_weight}. A naive approach which calculates the phase applied to each codeword would involve calculating $\mathcal{O}(2^k)$ such phase components, and would be impractical for CSS codes with a large number of logical qubits. The results of the algorithm are dependent on the choice of basis for the span $\langle L_X\rangle$ (see \Cref{sec_css_codes}).

\begin{example}[Action of Diagonal Logical XP Operators - Hypercube Codes]
In this example, we apply the method of \Cref{sec_comm_LO} to the Hypercube code of \Cref{eg_hypercube_LI}  at level $t=3$. The output of the method of \Cref{sec_comm_LO} is a set of length $8$ vectors over $\mathbb{Z}_8$ corresponding to $Z$-components of diagonal logical XP operators. Using the method of \Cref{sec_logical_action}, we obtain the following list of logical actions corresponding to the $Z$-components: 
$$
\begin{array}
{|l|l|l|}
\hline
\mathbf{z} &\textbf{Logical Action}& \textbf{Clifford level}\\
\hline
00000044  &Z_{0}&1\\
00000404  &Z_{1}&1\\
00040004  &Z_{2}&1\\
00002662 &CZ_{01}&2\\
00260062  &CZ_{02}&2\\
02060602   &CZ_{12}&2\\
13313113   &CCZ_{012}&3\\
\hline
\end{array}
$$
In \Cref{fig_hypercube} we display the resulting logical operators on the cube and notice that that the Clifford hierarchy level of the logical operator corresponds to the dimension of the support of the operator. 
An interactive version of this example is available in the \href{https://github.com/m-webster/CSSLO/blob/main/notebooks/02.1_commutator_small.ipynb}{linked Jupyter notebook}.

\begin{figure}[hbt!]
\centering
\includegraphics[width=0.9\textwidth]{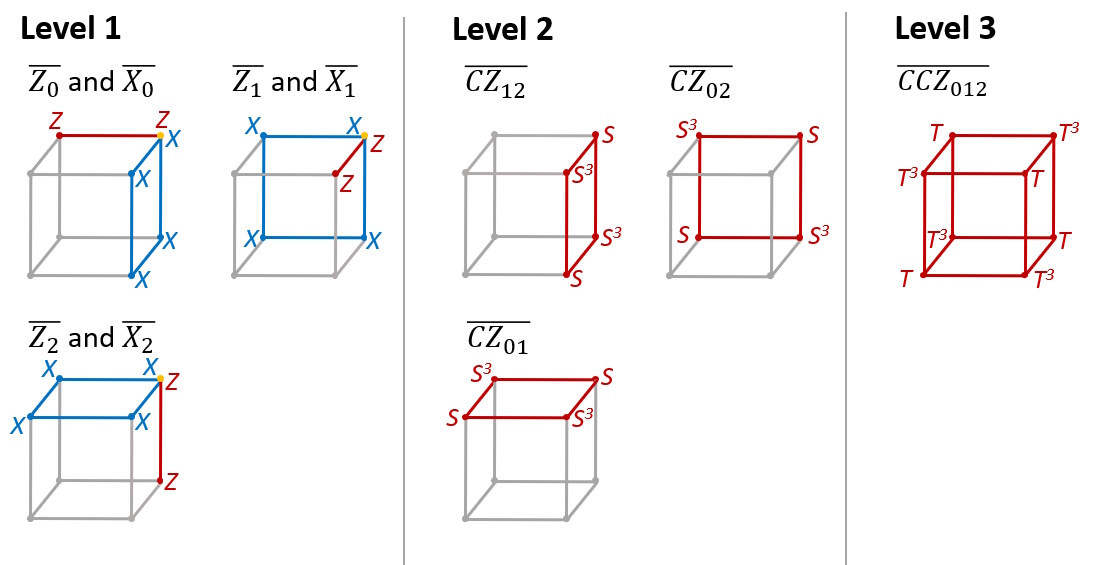}

\caption{\textbf{Diagonal Logical Operators of Hypercube Code of Dimension $3$}:  Logical XP operators returned by the method of \Cref{sec_comm_LO} are plotted on the cube. 
We note that Clifford hierarchy level 1 logical Z operators have support on 1D edges, level 2 CZ operators have support on 2D faces and level 3 CCZ operators have support on the entire 3D cube.}\label{fig_hypercube}
\end{figure}
\end{example}

\begin{example}[Hyperbolic Quantum Color Codes and Poset Codes]\label{eg_comm_large}
In the \href{https://github.com/m-webster/CSSLO/blob/main/notebooks/02.2_commutator_large.ipynb}{linked Jupyter notebook}, we illustrate the application of the method of \Cref{sec_comm_LO} to codes that have a large number of logical qubits. We choose examples of self-orthogonal codes that are known to have transversal implementations of diagonal level 2 Clifford hierarchy logical operators.

Hyperbolic quantum color codes~\cite{LDPC_colour} involve constructing codes from tesselations of the 2D hyperbolic plane. The tesselations are formed from polygons with an even number of edges, and each vertex is shared by 3 such polygons. We place a qubit on each vertex of the tesselation. For each polygonal face, we have an $X$-check corresponding to the adjacent vertices. The $Z$-checks are the same as the $X$-checks. Applying the method of \Cref{sec_comm_LO}, we find that the codes have a transversal level $2$ logical operator with action which can be expressed as a product of controlled-Z operators.

\captionsetup[subfigure]{margin=5pt}
\begin{figure}[hbt!]
\centering
\begin{subfigure}[t]{.5\textwidth}
  \centering
  \includegraphics[width=0.9\linewidth]{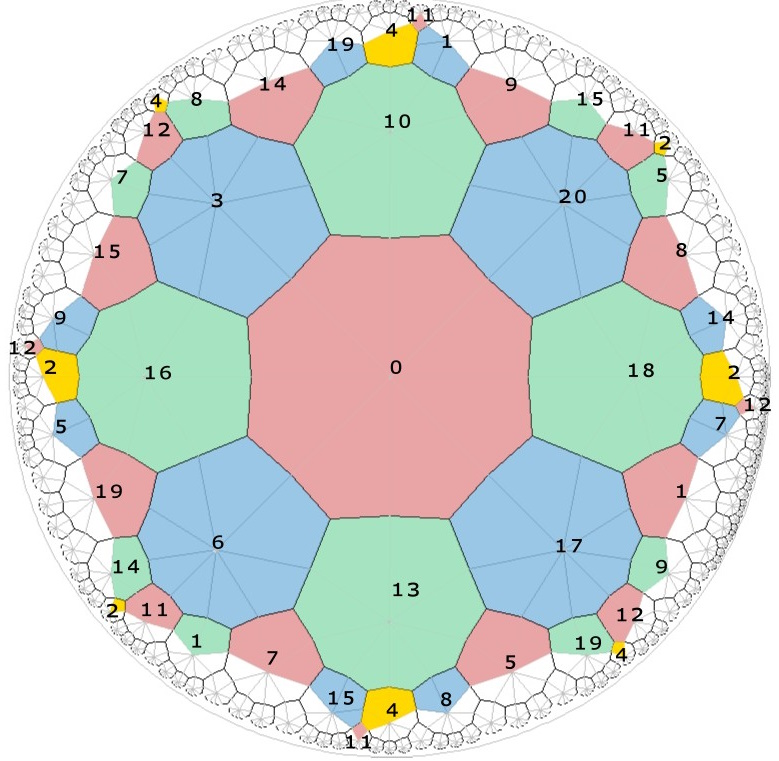}
    \subcaption{{[[56,14,6]] Code}  }
    \end{subfigure}%
\begin{subfigure}[t]{.5\textwidth}
  \centering
      \includegraphics[width=0.9\linewidth]{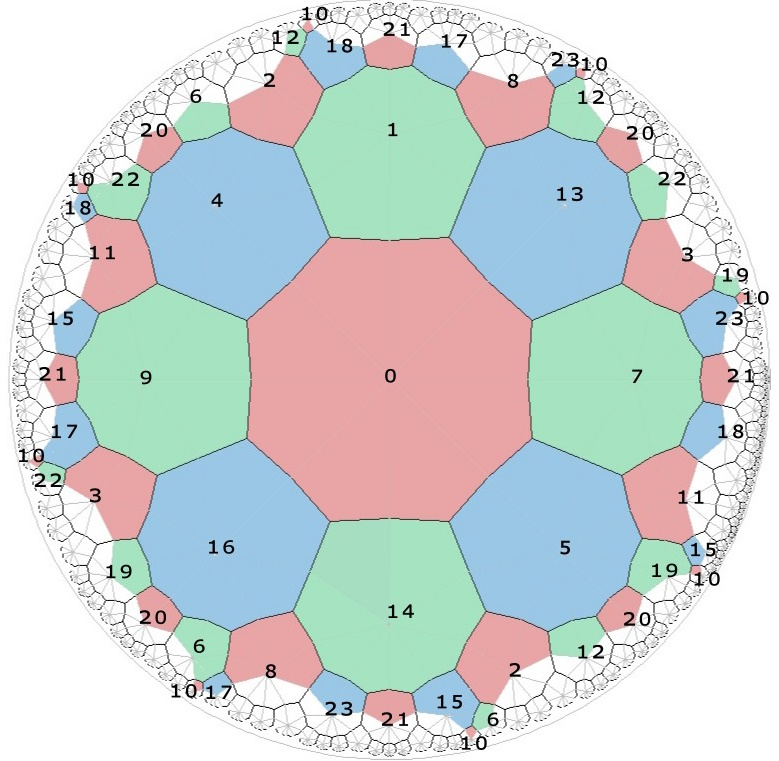}
    \subcaption{{[[64,20,4]] Code} }
     \end{subfigure}
     \caption{\{8,3\} Hyperbolic Color Codes: above are tesselations corresponding to two hyperbolic color codes from \Cref{eg_comm_large}. The [[56,14,6]] code is not globally 3-colourable as there is no valid colour assignment for faces 2 and 4 in the diagram above. Each code has a transversal level 2 diagonal logical operator whose action is a product of logical CZ operators. }
     \label{fig_hyperbolic_colour}
\end{figure}

There are various methods in the literature for constructing classical self-orthogonal codes and these can also be used to make quantum codes with $Z$-checks which are the same as the $X$-checks which we expect to have transversal level 2 diagonal logical operators.  In ~\cite{poset_codes}, self-orthogonal codes are constructed from partially ordered sets (posets). Analysing poset codes using our methods, we see that they have transversal level $2$ logical operators with actions which can be expressed as products of $S$ and $CZ$ operators.
\end{example}

\begin{example}[Triorthogonal Codes]

For triorthogonal codes \cite{triorthogonal}, there is always a logical operator of form $\overline{T^{\otimes k}} := UT^{\otimes n}$ where $U$ is a product of $CZ$ and $S$ operators and $k$ is the number of logical qubits of the code. In the \href{https://github.com/m-webster/CSSLO/blob/main/notebooks/02.3_commutator_triorthogonal.ipynb}{linked Jupyter notebook}, we apply the method of \Cref{sec_comm_LO} to find generating sets of diagonal logical operators for the 38 triorthogonal code classes in Table II of Ref.~\cite{small_triorthogonal}. In this example, we consider codes with $k=3$ logical qubits (this choice can be modified by the user).  

Applying our method, we see that the logical operator structure of triorthogonal codes varies widely. In some cases, the code has a transversal logical $\overline{T_i}$ operator for each logical qubit $0 \le i < k$. For most of the codes, we find a logical $\overline{T^{\otimes k}}$ operator of XP form. The exceptions are codes which require the application of $CZ$ operators to form $\overline{T^{\otimes k}}$, and so would not be identified by our method. We do not see any instances of logical CCZ or CS operators. 
\end{example}

\section{Transversal Logical Operators Composed of Multi-Qubit Controlled-Phase Gates}
In the previous Section, we have shown how to find a generating set of all logical operators of a CSS code that can be constructed from single-qubit phase gates at any level of the Clifford hierarchy. This relied on representing operators composed of single-qubit phase gates as diagonal XP operators. In this Section, we show how to find all transversal (depth-one) logical operators of a CSS code composed of multi-qubit controlled-phase gates. The method relies on representing controlled-phase operators acting on a codespace as diagonal XP operators acting on a larger Hilbert space via an embedding operator. 

The structure of this Section is as follows. We first introduce phase-rotation gates and discuss some of their elementary properties. We then prove a duality result that transforms controlled-phase operators to phase-rotation operators and vice versa. Hence phase-rotation gates are an alternative generating set for diagonal Clifford hierarchy gates. We then describe an embedding operator from the codespace into a larger Hilbert space such that  phase-rotation operators in the codespace  correspond to diagonal XP operators in the embedded codespace. As a result, any diagonal Clifford hierarchy operator can be represented as a diagonal XP operator in the embedded codespace. 

Finally, we demonstrate an algorithm that searches for transversal logical operators composed of single- and multi-qubit controlled-phase gates for a given CSS code. Such implementations are depth one and use operators with bounded support size and so have fault-tolerant properties. Logical operators of this type have recently been studied in Refs.~\cite{fold_transversal}, \cite{partitioning_qubits} and \cite{morphing} and we provide examples of the application of the algorithm to codes in these papers.

\subsection{Phase-Rotation Operators}\label{sec_phase_rotation}
Phase-rotation operators are single or multi-qubit diagonal gates that form an alternative generating set for the diagonal Clifford hierarchy operators of \Cref{sec_controlled_phase_operators}.   Phase-rotation operators are defined as follows. Let $A := XP_2(0|\mathbf{0}|\mathbf{v})$ be a tensor product of $Z$ operators and let $\omega := \exp(\pi i / N)$.  Let  $A_{\pm 1} : =(I \pm A)/2$ be the projectors onto the $\pm 1$ eigenspaces of $A$ and let $q \in \mathbb{Z}_{2N}$. The phase-rotation operator is:
\begin{align}
    \text{RP}_N(q,\mathbf{v}) = \exp(\frac{q\pi i}{N} A_{-1}).\label{eq_RP_def}
\end{align}
This form is similar to the Pauli product rotations of Ref.~\cite{litinski} and operators of this type arise as fundamental gates in NMR \cite{NMR} and quantum dot systems \cite{quantum_dots}. In  \Cref{prop_RP_action}, we show that the action of $\text{RP}_N(q,\mathbf{v})$ on the  computational basis element $|\mathbf{e}\rangle$ for $\mathbf{e} \in \mathbb{Z}_2^n$ is:
\begin{align}
    \text{RP}_N(q,\mathbf{v})|\mathbf{e}\rangle = \begin{cases}\omega^q|\mathbf{e}\rangle& \text{ if }\mathbf{e}\cdot\mathbf{v} \mod 2  = 1\\|\mathbf{e}\rangle& \text{ otherwise}.\end{cases}
\end{align}
We can express the phase applied more concisely as follows:
\begin{align}
    \text{RP}_N(q,\mathbf{v})|\mathbf{e}\rangle = \omega^{q \cdot s_\mathbf{v}(\mathbf{e})}|\mathbf{e}\rangle\text{ where }s_\mathbf{v}(\mathbf{e}) := \bigoplus_{i \preccurlyeq \mathbf{v}}\mathbf{e}[i].\label{eq_RP_phase}
\end{align}
Single qubit phase gates of precision $N$ in this notation are of form $P_i = \text{RP}_N(2, \mathbf{b}^n_i)$ where $\mathbf{b}^n_i$ is the length $n$ binary vector which is all zero, apart from component $i$ which is one.

Where the precision and number of qubits are fixed, we use a more concise notation for phase-rotation operators analogous to the notation for controlled-phase operators. For example, on $n=3$ qubits, the following are examples of precision $N=8$ operators: $RRZ_{012} := RP_8(8, 111), RS_{01} := RP_8(4,110), T_0 = RP_8(2,100)$.

\subsection{Duality of Controlled-Phase and Phase-Rotation Operators}\label{sec_duality}
In \Cref{prop_duality}, we prove a duality result that allows us to convert vector form controlled-phase operators to products of  phase-rotation operators and vice versa:
\begin{align}
    \text{CP}_N(2^{\text{wt}(\mathbf{v})},\mathbf{v}) = \prod_{\mathbf{0} \ne \mathbf{u} \preccurlyeq \mathbf{v}}\text{RP}_N(2 \cdot (-1)^{\text{wt}(\mathbf{u})-1}, \mathbf{u});\\
    \text{RP}_N(2,\mathbf{v}) = \prod_{\mathbf{0} \ne \mathbf{u} \preccurlyeq \mathbf{v}}\text{CP}_N(2 \cdot(-2)^{\text{wt}(\mathbf{u})-1}  , \mathbf{u}).
\end{align}
In \Cref{sec_CP_vector}, we saw that operators of form $\text{CP}_N(2^{\text{wt}(\mathbf{v})},\mathbf{v})$ with $\text{wt}(\mathbf{v}) \le t$ and $N := 2^t$  generate  the level $t$ diagonal Clifford hierarchy operators. As a consequence of the duality result,  phase-rotation operators of form $\text{RP}_N(2,\mathbf{v})$ where $\text{wt}(\mathbf{v}) \le t$ are an alternative generating set. In the \href{https://github.com/m-webster/CSSLO/blob/main/notebooks/10.1_CP_RP_duality.ipynb}{linked Jupyter notebook} we show that $RS_{01} = CZ_{01} S_1S_2 = RS^3_{01}Z_1 Z_2$ by applying the duality result twice - hence phase-rotation operators may have more than one vector representation.

\subsection{Embedded Code Method}\label{sec_embedded_codes}
The embedded code method involves constructing an embedding operator on the codespace of a CSS code such that phase-rotation operators in the original codespace correspond to diagonal XP operators in the embedded codespace. The embedding technique is similar to the one used to represent weighted hypergraph states in Section 5.4 of Ref.~\cite{XP}. We first define the embedding operator in terms of its action on computational basis states, then show how to extend it to phase rotation operators and strings of Pauli X operators. As an example, we show how the embedding operator transforms repetition codes.
\subsubsection{Action of Embedding Operator on Computational Basis States and CSS Codespaces}
Let $M^n_t$ be the matrix whose rows are the binary vectors of length $n$ of weight between $1$ and $t$. Let $V$ be a matrix whose rows are a subset of the rows of $M^n_t$. We define the embedding operator $\mathcal{E}_V: \mathcal{H}_2^n \rightarrow \mathcal{H}_2^{|V|}$ that has the following action on computational basis vectors $\ket{\mathbf{e}}, \mathbf{e} \in \mathbb{Z}_2^n$:
\begin{align}
\mathcal{E}_V|\mathbf{e}\rangle = |\mathbf{e}V^T \mod 2\rangle.\label{eq_embedding}
\end{align}
Now let $S_X, L_X$ be the $X$-checks and $X$-logicals of a CSS code $\mathbf{C}$ on $n$ qubits (see \Cref{sec_css_codes}).  The image of the codespace of $\mathbf{C}$ under $\mathcal{E}_V$ is the codespace of the embedded code $\mathbf{C}_V$ defined as follows:
\begin{itemize}
    \item $X$-checks $S^V_X:=  S_X V^T$
    \item $X$-logicals $L^V_X := L_X V^T$
    \item $Z$-checks $S^V_Z := \ker_{\mathbb{Z}_2}\begin{pmatrix}S^V_X\\L^V_X\end{pmatrix}$
\end{itemize}
Providing $V$ is full rank, the $X$-checks and $X$-logicals of the embedded code are independent (for instance if $V$ includes all rows of $I_n$). We will show that phase-rotation operators acting on the codespace correspond to diagonal XP operators in the embedded codespace. Because operators of form $RP_N(2,\mathbf{v})$ for $\mathbf{v} \in M^n_t$ and $N=2^t$ generate all controlled-phase operators of level $t$ on $n$ qubits (see \Cref{sec_duality}), choosing $V=M^n_t$ allows any such operator to be represented. By limiting $V$ to a subset of $M^n_t$, we can place restrictions on the phase-rotation operators we wish to work with in the embedded codespace. For instance, we can allow only nearest neighbour interactions for a lattice-based code or cater for ZX symmetries and qubit partitions as discussed in Refs.~\cite{fold_transversal} and \cite{partitioning_qubits}.

\subsubsection{Action of Embedding Operator on Phase-Rotation and Pauli X Operators}\label{sec_embedding_operators}
We now demonstrate an extension of the embedding operator $\mathcal{E}_V$ to phase-rotation and Pauli X operators which acts as a group homomorphism. A group homomorphism must respect commutation relations, and this is much simpler to achieve for phase-rotation operators than for controlled-phase operators. In \Cref{prop_CP_relation}, we prove the following commutation relation for controlled-phase and Pauli X operators:
\begin{align}
\text{CP}_N(q,\mathbf{v}) X_i  = \begin{cases}X_i\text{CP}_N(-q,\mathbf{v})\text{CP}_N(q,\mathbf{v}\oplus \mathbf{b}^n_i)& \text{ if } \mathbf{v}[i] = 1\\X_i\text{CP}_N(q, \mathbf{v})& \text{ otherwise.}\end{cases}\label{eq_cpXi}
\end{align}
In \Cref{eq_cpXi}, $\mathbf{b}^n_i$ is the binary vector of length $n$ which is zero apart from entry $i$ which is one. Extending this to arbitrary strings of $X$ operators we obtain the following:
\begin{align}
\text{CP}_N(q, \mathbf{v}) XP_2(0|\mathbf{x}|\mathbf{0})  &= XP_2(0|\mathbf{x}|\mathbf{0}) \prod_{\mathbf{0} \preccurlyeq \mathbf{u} \preccurlyeq \mathbf{xv}}\text{CP}_N(q\cdot (-1)^{\text{wt}(\mathbf{xv})+\text{wt}(\mathbf{u})},\mathbf{v}\oplus\mathbf{u}).\label{eq_CP_relation}
\end{align}
In \Cref{prop_RP_algebra}, we prove the much simpler commutation relation for phase-rotation operators which corresponds closely to the commutation relation for XP operators in \Cref{eq_XP_relation}:
\begin{align}
    \text{RP}_N(q,\mathbf{v})X_i = \begin{cases} \omega^qX_i\text{RP}_N(-q,\mathbf{v})  & \text{ if } \mathbf{v}[i] = 1\\ X_i \text{RP}_N(q,\mathbf{v}) & \text{ otherwise}\end{cases}\label{eq_relation_RP}
\end{align}
The relation in \Cref{eq_relation_RP} also implies that for any $V \subset M^n_t$, we have closure under conjugation with any Pauli $X$ string, which is not the case for controlled-phase operators. 

Now consider the group $\mathcal{XRP}^V_N$ generated by operators of form $\omega I$, $X_i$ and $\text{RP}_N(2, \mathbf{v})$ for $\mathbf{v}$ a row of $V$. Elements of $\mathcal{XRP}^V_N$ can be written in terms of components $p \in \mathbb{Z}_{2N}$, $\mathbf{x} \in \mathbb{Z}_2^n$, $\mathbf{q} \in \mathbb{Z}_N^{|V|}$ such that:
\begin{align}
    \text{XRP}^V_N(p|\mathbf{x}|\mathbf{q}):= \omega^p\prod_{0 \le i < n}X_i^{\mathbf{x}[i]}\prod_{\mathbf{v} \in V}\text{RP}_N(2\mathbf{q}[\mathbf{v}],\mathbf{v}).
\end{align}
We define an embedding map for XRP operators with respect to $V$ as follows:
\begin{align}
    \mathcal{E}_V(\text{XRP}^V_N(p|\mathbf{x}|\mathbf{q})) := XP_N(p|\mathbf{x}V^T|\mathbf{q})
\end{align}
In \Cref{prop_emb_hom}, we show that the embedding operator $\mathcal{E}_V$ respects group operations and so acts as a group homomorphism. As a result, we can use the diagonal logical identity and logical operator algorithm in \Cref{sec_LI} and \Cref{sec_comm_LO} to find logical operators in the embedded codespace. The results can be interpreted as phase-rotation operators in the original codespace. One application of this method is to better understand what kinds of coherent noise a CSS code is inherently protected against as in \cite{coherent_noise}. The logical identity group of the embedded code represents the correlated noise that the code protects against up to configurable constraints (for example connectivity and the level of Clifford hierarchy).

\begin{example}[Embedding the Repetition Code]\label{eg_embedding_rep}
In this example we show how to construct an embedded code based on the repetition code. For example, let $S_X$ be the check matrix of the classical repetition code on $3$ bits and let $L_X$ be a weight one vector. This forms a CSS code $\mathbf{C}$ with:
\begin{align}
    S_X &:= \begin{pmatrix}110\\011\end{pmatrix},\\
    L_X &:= \begin{pmatrix}001\end{pmatrix}.
\end{align}
Let $V := M^3_2$ be the matrix whose rows are binary vectors of length $3$ and weight $1$ or weight $2$. The embedded code $\mathbf{C}_V$ is defined by setting $S^V_X := S_X V^T$ and $L^V_X := L_X V^T$ so that:
\begin{align}
    V^T &:= \begin{pmatrix}100110\\010101\\001011\end{pmatrix},\\
    S^V_X &:= S_X V^T = \begin{pmatrix}101101\\011110\end{pmatrix},\\
    L^V_X &:= L_X V^T =  \begin{pmatrix}001011\end{pmatrix}.
\end{align}
Applying the method of \Cref{sec_comm_LO}, we find that the embedded code has a logical $S$ operator given by $\overline{S}_V:=XP_4(0|\mathbf{0}|113133) = S_0S_1S_2^3S_3S_4^3S_5^3$. In the original codespace, this corresponds to the following product of phase-rotation gates (\Cref{sec_phase_rotation}): 
\begin{align}\overline{S} := RP_4(2, 100)RP_4(2, 010)RP_4(6, 001) RP_4(2, 110) RP_4(6, 101)RP_4(6, 011).\end{align}

In the \href{https://github.com/m-webster/CSSLO/blob/main/notebooks/03.1_embedded_repetition.ipynb}{linked Jupyter notebook}, users can verify that using a repetition code on $d$ bits and $V = M^d_t$ the matrix whose rows are binary vectors of length $d$ of weight between $1$ and $t$, the embedded code has a transversal logical phase gate at level $t$ of the Clifford hierarchy.
\end{example}

\subsection{Algorithm 6: Depth-One Logical Operators}\label{sec_depth_one_algorithm}
We now show how to find the transversal logical operators composed of single and multi-qubit diagonal Clifford hierarchy gates (i.e., depth-one circuit implementations where each physical qubit is involved in at most one gate) for a CSS code. It relies on the method of representing phase-rotation operators on a codespace as XP operators in an embedded codespace of \Cref{sec_embedded_codes}.

\begin{framed}
\paragraph{Algorithm 6: Depth-One Logical Operators}
\paragraph{Input:}
\begin{enumerate}
    \item  The $X$-checks $S_X$ and $X$-logicals $L_X$ of a CSS code (\Cref{sec_css_codes});
    \item The desired level $t$ of the Clifford hierarchy (\Cref{sec_controlled_phase_operators}).
\end{enumerate}
\paragraph{Output:}
 A depth-one implementation of a logical controlled-phase operator at level $t$, or \texttt{FALSE} if there is no such implementation.
\paragraph{Method:}
\begin{enumerate}
    \item Use the embedding $V = M^n_t$ -- all binary vectors of length $n$ of weight between $1$ and $t$;
    \item For the embedded code $\mathbf{C}_V$ (\Cref{sec_embedded_codes}), calculate $K_L$ the rows of which are the $Z$-components of a generating set of the diagonal logical XP operator group (\Cref{sec_comm_LO});
    \item For each row of $K_L$, determine the logical action and the level of the Clifford hierarchy (\Cref{sec_logical_action});
    \item From the rows of $K_L$, choose a vector $\mathbf{z}$ corresponding to a logical operator at level $t$ of the Clifford hierarchy. If there is no such operator, return FALSE. Otherwise, perform the following steps:
    \begin{enumerate}
    \item Remove $\mathbf{z}$ from $K_L$;
    \item For each element $\mathbf{q}$ of the rowspan of $K_L$ over $\mathbb{Z}_N$, check if $\mathbf{z}' := (\mathbf{q} + \mathbf{z}) \mod N$ represents a depth-one operator. If so, return $\mathbf{z}'$;
    \item If no depth-one operator is found, go to Step 4.
    \end{enumerate}
\end{enumerate}
\end{framed}
When the CSS code has a known symmetry, we can search for depth-one logical operators more efficiently by modifying the embedding operator. The depth-one algorithm can take as input a permutation of the physical qubits in cycle form such that the cycles partition the $n$ physical qubits.  Let $\mathbf{c} = (c_1,c_2,\dots,c_l)$ be a cycle in the permutation and let $\mathbf{b}^n_\mathbf{c}$ be the length $n$ binary vector which is zero apart from the components $i \in \mathbf{c}$ that are one. The rows of the embedding matrix $V$ are the vectors $\mathbf{b}^n_\mathbf{c}$ for the cycles $c$ in the permutation.

The algorithm as outlined above yields logical operators composed of physical phase-rotation gates. To search for logical operators composed of controlled-phase gates, transform the matrix  $K_L$ by using the duality result of \Cref{sec_duality}. In this case, due to the commutation relation in \Cref{eq_CP_relation}, we need to ensure that for all $\mathbf{v} \in V$ any length $n$ binary vector whose support is a subset of the support of $\mathbf{v}$ is also in $V$ -- that is  if $\mathbf{v} \in V$ and $\mathbf{u} \preccurlyeq \mathbf{v}$ then $\mathbf{u} \in V$.

Note that $M^n_t$ has $\genfrac[]{0pt}{2}{n}{t}:=\sum_{1 \le j \le t}\genfrac(){0pt}{2}{n}{j}$ rows, so we would generally apply the algorithm only to small codes of around 30 physical qubits (e.g., for $t=2$ and $n=30, M^n_t$ has $465$ rows, but for $n=100, M^n_t$ has $5,050$ rows).  In \Cref{app_depth_one_algorithm} we describe a method for more efficiently exploring the search space. 

\begin{example}[Depth-One Algorithm]
In the \href{https://github.com/m-webster/CSSLO/blob/main/notebooks/03.2_depth_one_search.ipynb}{linked Jupyter notebook}, we illustrate the depth-one search algorithm for small codes. For a given code and a desired level of the Clifford hierarchy $t$, the output is a logical operator with a depth-one circuit implementation whose logical action is at level $t$ of the diagonal Clifford hierarchy, or FALSE if no such operator exists. This is done with no knowledge of the logical action of the operator or symmetries of the code.  For example, we identify the depth-one implementation of the logical $\overline{S}_0\overline{S}_1^3$ of the 2D toric code as discussed in Refs.~\cite{moussa}, \cite{fold_transversal} and \cite{partitioning_qubits}. Users can also apply the algorithm to Bring's code which is a 30-qubit LDPC code discussed in Ref.~\cite{fold_transversal} and various examples of morphed codes which are discussed in Ref.~\cite{morphing}. Users can also choose to use a known symmetry of the code to speed up the search -- this can be used for instance to verify the partitioned logical operators of the symmetric hypergraph product codes of Ref.~\cite{partitioning_qubits}.
\end{example}

\section{Other Applications of Embedded Codes}
In this Section, we discuss other applications of the embedded code method of \Cref{sec_embedded_codes}. We first show that for any CSS code with $k$ logical qubits and any diagonal Clifford hierarchy operator $B$ on $k$ qubits, we can write a closed-form expression for a logical $\overline{B}$ operator on the codespace composed of phase-rotation gates (see \Cref{sec_phase_rotation}). As a consequence, the embedded code has a logical $B$ operator composed of single-qubit phase gates. This leads to a method of generating CSS codes that have transversal implementations of any desired diagonal logical Clifford hierarchy operator.

\subsection{Canonical Implementations of Logical Controlled-Phase Operators}\label{sec_canonical_logical_operators}
Here, we show how to implement a desired logical controlled-phase operator on an arbitrary CSS code  via a canonical form composed of the phase-rotation gates of \Cref{sec_phase_rotation}.  
We demonstrate implementations of logical $S, T, CZ$ and $CS$ operators using the 2D toric code as an example. As the canonical implementation is in terms of phase-rotation operators, we can apply the embedded code method of \Cref{sec_embedded_codes} and implement the logical operator in the embedded codespace using single qubit phase gates. We use this fact to generate families of CSS codes that have transversal implementations of a desired logical controlled-phase operator using single-qubit phase gates. The methodology is illustrated in \Cref{fig_embedded_codes}. 

\begin{figure}[hbt!]
\centering
$$
\begin{CD}
\mathcal{H}_2^k @>\mathcal{C}>> \mathcal{H}_2^n @>\mathcal{E}_V>>  \mathcal{H}_2^{|V|}\\
@VBVV @V\overline{B}VV @V\overline{B}_{V}VV\\
\mathcal{H}_2^k @>\mathcal{C}>> \mathcal{H}_2^n @>\mathcal{E}_V>>  \mathcal{H}_2^{|V|}
\end{CD}
$$
\caption{\textbf{Logical Operators of CSS Codes and Embedded Codes}:  A CSS encoding maps $k$ logical qubits into $n$ physical qubits via $\mathcal{C}: \mathcal{H}_2^k \rightarrow \mathcal{H}_2^n$, which takes computational basis elements $|\mathbf{v}\rangle$ to codewords $|\mathbf{v}\rangle_L$. Consider a  level-$t$ controlled-phase operator $B$ acting on $\mathcal{H}_2^k$. An operator $\overline{B}$ acting on $\mathcal{H}_2^n$ is a logical $B$ operator if $\overline{B}\mathcal{C} = \mathcal{C}B$. We show how to construct a canonical logical $B$ operator $\overline{B}$ from level-$t$  phase-rotation gates. Let $V$ be the matrix whose rows are length $n$ binary vectors representing the support of the controlled-phase operators making up $\overline{B}$. The embedded codespace is formed by applying the embedding $\mathcal{E}_V:\mathcal{H}_2^n \rightarrow \mathcal{H}_2^{|V|}$ which takes the computational basis element $\ket{\mathbf{e}}$ to $\ket{\mathbf{e}V^T \mod 2}$. This enables us to construct a logical $B$ operator $\overline{B}_V$ on the embedded codespace from single qubit phase gates.}\label{fig_embedded_codes}
\end{figure}

\subsection{Canonical Form for Logical Phase Operators}\label{sec_P_canonical}
In the proposition below, we show that logical phase operators have a particularly simple form in terms of the phase-rotation gates of \Cref{sec_phase_rotation}.

\begin{proposition}[Canonical Logical $P$ Operator]\label{prop_logical_P}
Let $\mathbf{z}_i \in \mathbb{Z}_2^n$ be the $Z$-component of a logical $Z_i$ operator $\overline{Z_i}:=XP_2(0|\mathbf{0}|\mathbf{z}_i)$. The operator $\overline{P_i}:=RP_N(2,\mathbf{z}_i)$ is a logical $P_i$ operator.
\begin{proof}
The action of a $P_i$ operator on a computational basis element $\ket{\mathbf{v}}$ where $\mathbf{v}\in \mathbb{Z}_2^k$ can be written $P_i\ket{\mathbf{v}} = \omega^{2\mathbf{v}[i]}\ket{\mathbf{v}}$.
Let $\mathcal{C}:\mathcal{H}_2^k\rightarrow\mathcal{H}_2^n$ be the encoding operator $\mathcal{C}\ket{\mathbf{v}} = \ket{\mathbf{v}}_L$ for $\mathbf{v}\in \mathbb{Z}_2^k$.
From \Cref{eq_LO_condition}, $\overline{P_i}$ is a logical $P_i$ operator if $\overline{P_i}\mathcal{C} = \mathcal{C}P_i$. Hence:
\begin{align}
\mathcal{C}P_i\ket{\mathbf{v}} &= \omega^{2\mathbf{v}[i]}\ket{\mathbf{v}}_L = \sum_{\mathbf{u}\in \mathbb{Z}_2^r}\omega^{2\mathbf{v}[i]}\ket{\mathbf{e_{uv}}} \\&= \overline{P_i}\ket{\mathbf{v}}_L = \sum_{\mathbf{u}\in \mathbb{Z}_2^r}\overline{P_i}\ket{\mathbf{e_{uv}}}.
\end{align}
Hence, we require $\overline{P_i}\ket{\mathbf{e_{uv}}} = \omega^{2\mathbf{v}[i]}\ket{\mathbf{e_{uv}}}$. 
Set the precision $N=2$, and we have $\overline{Z_i}\ket{\mathbf{e_{uv}}} = (-1)^{\mathbf{v}[i]}\ket{\mathbf{e_{uv}}}$. Applying \Cref{eq_XP_action}, we have $\overline{Z_i}\ket{\mathbf{e_{uv}}} = XP_2(0|\mathbf{0}|\mathbf{z}_i)\ket{\mathbf{e_{uv}}} = (-1)^{\mathbf{e_{uv}}\cdot \mathbf{z}_i}\ket{\mathbf{e_{uv}}}$. Therefore, $\mathbf{e_{uv}}\cdot \mathbf{z}_i \mod 2 = \mathbf{v}[i]$.
Now consider the action of $\overline{P_i}:=RP_N(2,\mathbf{z}_i)$ on $\ket{\mathbf{e_{uv}}}$ using \Cref{prop_RP_action}:
\begin{align}
    RP_N(2,\mathbf{z}_i)\ket{\mathbf{e_{uv}}} = \omega^{2(\mathbf{e_{uv}}\cdot\mathbf{z}_i \mod 2)}\ket{\mathbf{e_{uv}}} = \omega^{2\mathbf{v}[i]}\ket{\mathbf{e_{uv}}},
\end{align}
as required for $\overline{P_i}$ to act as a logical $P$ operator.
\end{proof}
\end{proposition}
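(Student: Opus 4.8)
The plan is to verify directly the defining identity $\overline{P_i}\,\mathcal{C} = \mathcal{C}P_i$ of \Cref{eq_LO_condition}. Since $P_i$ is diagonal on the logical qubits, with $P_i\ket{\mathbf{v}} = \omega^{2\mathbf{v}[i]}\ket{\mathbf{v}}$ for $\mathbf{v}\in\mathbb{Z}_2^k$, and $\overline{P_i} := RP_N(2,\mathbf{z}_i)$ is itself diagonal, the expansion over canonical codewords used in \Cref{eq_LO_CSS} reduces the claim to the per-basis-vector statement
\begin{align*}
RP_N(2,\mathbf{z}_i)\ket{\mathbf{e_{uv}}} = \omega^{2\mathbf{v}[i]}\ket{\mathbf{e_{uv}}}\qquad\text{for all }\mathbf{u}\in\mathbb{Z}_2^r,\ \mathbf{v}\in\mathbb{Z}_2^k,
\end{align*}
because the scalar $\omega^{2\mathbf{v}[i]}$ does not depend on $\mathbf{u}$ and so factors out of the sum $\sum_{\mathbf{u}\in\mathbb{Z}_2^r}$.

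The only hypothesis available is that $\overline{Z_i} = XP_2(0|\mathbf{0}|\mathbf{z}_i)$ is a logical $Z_i$ operator, so I would first unpack what this tells us about $\mathbf{z}_i$. Running the same codeword reduction for $\overline{Z_i}$ against $Z_i\ket{\mathbf{v}} = (-1)^{\mathbf{v}[i]}\ket{\mathbf{v}}$ gives $\overline{Z_i}\ket{\mathbf{e_{uv}}} = (-1)^{\mathbf{v}[i]}\ket{\mathbf{e_{uv}}}$, while the XP action formula \Cref{eq_XP_action} independently evaluates the same state as $(-1)^{\mathbf{e_{uv}}\cdot\mathbf{z}_i}\ket{\mathbf{e_{uv}}}$. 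Comparing exponents yields the congruence $\mathbf{e_{uv}}\cdot\mathbf{z}_i \equiv \mathbf{v}[i] \pmod 2$ for all $\mathbf{u},\mathbf{v}$, which is the one fact I will feed into the computation for $\overline{P_i}$.

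To finish, I would invoke the action of a phase-rotation operator from \Cref{prop_RP_action}: $RP_N(2,\mathbf{z}_i)\ket{\mathbf{e_{uv}}} = \omega^{2 s_{\mathbf{z}_i}(\mathbf{e_{uv}})}\ket{\mathbf{e_{uv}}}$, where $s_{\mathbf{z}_i}(\mathbf{e_{uv}}) = \bigoplus_{j\preccurlyeq\mathbf{z}_i}\mathbf{e_{uv}}[j]$ equals $\mathbf{e_{uv}}\cdot\mathbf{z}_i \bmod 2$. Substituting the congruence from the previous step replaces this parity by $\mathbf{v}[i]$, turning the phase into $\omega^{2\mathbf{v}[i]}$, which is exactly the identity needed.

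There is no serious obstacle here; the step to get right is conceptual rather than computational. It is that the phase argument $q=2$ is precisely what makes $RP_N(2,\cdot)$ read its vector argument only through the mod-$2$ parity $s_{\mathbf{z}_i}(\mathbf{e})$, and that this parity is exactly the quantity $\mathbf{e}\cdot\mathbf{z}_i \bmod 2$ already pinned down by the \emph{precision-two} logical $Z_i$ operator. In other words, a logical $Z_i$ constrains $\mathbf{e_{uv}}\cdot\mathbf{z}_i$ only modulo $2$, but that is all the information the logical $P_i = \sqrt{Z_i}$ operator consumes. I would also record the harmless bookkeeping observation from the first paragraph, that pulling $\omega^{2\mathbf{v}[i]}$ through the $\mathbf{u}$-sum makes the per-basis-vector statement genuinely equivalent to $\overline{P_i}\,\mathcal{C} = \mathcal{C}P_i$ rather than merely sufficient.
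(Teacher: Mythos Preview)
Your proposal is correct and follows essentially the same approach as the paper: reduce $\overline{P_i}\,\mathcal{C}=\mathcal{C}P_i$ to the per-basis-vector phase condition, extract $\mathbf{e_{uv}}\cdot\mathbf{z}_i\equiv\mathbf{v}[i]\pmod 2$ from the logical $Z_i$ hypothesis via \Cref{eq_XP_action}, and then apply \Cref{prop_RP_action} to $RP_N(2,\mathbf{z}_i)$. Your added remark that $RP_N(2,\cdot)$ depends on its argument only through the mod-$2$ parity is a nice conceptual gloss, but the logical skeleton is identical to the paper's proof.
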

Using the duality of $\text{RP}$ and $\text{CP}$ operators of \Cref{sec_duality}, we can write $\overline{P}_i$ as a product of CP gates:
\begin{align}\label{eq_canonical_logical_P}
    \overline{P}_i := RP_N(2,\mathbf{z}_i) = \prod_{\mathbf{0} \ne \mathbf{u} \preccurlyeq \mathbf{z}_i}\text{CP}_N( 2\cdot (-2)^{\text{wt}(\mathbf{u})-1},\mathbf{u}).
\end{align}
As $N = 2^t$, any terms with $\text{wt}(\mathbf{u}) > t$ disappear. Hence the support of the CP gates in the implementation are of maximum size $t$. The implementation may not be transversal, as a qubit may be acted upon by more than one gate.

\begin{example}[Logical Phase Operators of the 2D Toric Code]\label{eg_logical_P}
We illustrate the canonical form of logical controlled-phase operators by considering the 2D toric code. Using the XP operator notation of \Cref{eq_XP_def}, let $Z_0 := XP_2(0|\mathbf{0}|\mathbf{z}_0), Z_1 := XP_2(0|\mathbf{0}|\mathbf{z}_1)$ be logical $Z$ operators on logical qubit 0 and 1 respectively with $d := \text{wt}(\mathbf{z}_0) = \text{wt}(\mathbf{z}_1)  \ge 3$. Applying \Cref{eq_canonical_logical_P} and using the notation of \Cref{eq_CP_def} for controlled-phase operators, the canonical forms for the logical $S$ and $T$ operators on qubit 0 are as follows:
\begin{align}
    \overline{S}_0 &:= RP_4(2,\mathbf{z}_0)=\prod_{\mathbf{0} \ne \mathbf{u} \preccurlyeq \mathbf{z}_0}\text{CP}_4(2\cdot  (-2)^{\text{wt}(\mathbf{u})-1},\mathbf{u}) \\
    &=\prod_{\substack{\mathbf{u} \preccurlyeq \mathbf{z}_0\\\text{wt}(\mathbf{u}) =2}}\text{CP}_4(-4,\mathbf{u})\prod_{\substack{\mathbf{u} \preccurlyeq \mathbf{z}_0\\\text{wt}(\mathbf{u}) =1}}\text{CP}_4(2,\mathbf{u})\\
    &= \prod_{i < j \preccurlyeq \mathbf{z}_0}CZ_{ij} \prod_{i \preccurlyeq \mathbf{z}_0}S_i\\
    \overline{T}_0 &:= RP_8(2,\mathbf{z}_0) = \prod_{\mathbf{0} \ne \mathbf{u} \preccurlyeq \mathbf{z}_0}\text{CP}_8(  2\cdot (-2)^{\text{wt}(\mathbf{u})-1},\mathbf{u})\\
    &=\prod_{\substack{\mathbf{u} \preccurlyeq \mathbf{z}_0\\\text{wt}(\mathbf{u}) =3}}\text{CP}_8(8,\mathbf{u})\prod_{\substack{\mathbf{u} \preccurlyeq \mathbf{z}_0\\\text{wt}(\mathbf{u}) =2}}\text{CP}_8(-4,\mathbf{u})\prod_{\substack{\mathbf{u} \preccurlyeq \mathbf{z}_0\\\text{wt}(\mathbf{u}) =1}}\text{CP}_8(2,\mathbf{u})\\
    &= \prod_{i < j < k \preccurlyeq \mathbf{z}_0}CCZ_{ijk} \prod_{i < j \preccurlyeq \mathbf{z}_0}CS^{-1}_{ij} \prod_{i \preccurlyeq \mathbf{z}_0}T_i
\end{align}
These results hold for any CSS code with $\text{wt}(\mathbf{z}_0) \ge 3$, as no other special properties of the toric code have been used.
\end{example}
\subsection{Canonical Form of Logical Phase-Rotation and Controlled-Phase Operators}\label{sec_CP_canonical}
We now generalise the method in \Cref{sec_P_canonical} and show how to implement logical phase-rotation operators for CSS codes using physical phase-rotation gates. Let $L_Z$ be the $k 
\times n$ binary matrix representing logical $Z$ operators such that $L_Z L_X^T \mod 2 = I_k$ where $k =|L_X|$. This means that $XP_2(0|\mathbf{0}|\mathbf{z}_i)$ anti-commutes with $XP_2(0|\mathbf{x}_j|\mathbf{0})$ if and only if $i=j$. Let $\mathbf{u}$ be a binary vector of length $k$. In \Cref{prop_logical_RP} we show that the following is a logical phase-rotation operator:
\begin{align}
\overline{\text{RP}_N(2,\mathbf{u})}:=\text{RP}_N(2,\mathbf{u}L_Z)
\end{align}
By the duality result of \Cref{sec_duality}, we can write logical  phase-rotation operators as follows:
\begin{align}\label{eq_canonical_CP}
    \overline{\text{CP}_N(2^{\text{wt}(\mathbf{v})},\mathbf{v})} &:= \prod_{\mathbf{0} \ne \mathbf{u} \preccurlyeq \mathbf{v} }\overline{\text{RP}_N( 2\cdot (-1)^{\text{wt}(\mathbf{u})-1},\mathbf{u})}\\&=\prod_{\mathbf{0} \ne \mathbf{u} \preccurlyeq \mathbf{v}}\text{RP}_N(2 \cdot (-1)^{\text{wt}(\mathbf{u})-1},\mathbf{u}L_Z)\label{eq_logical_CP_RP}
\end{align}
This in turn can be converted into products of physical controlled-phase gates by applying the duality result a second time.

\begin{framed}
\paragraph{Algorithm 7: Canonical Logical Controlled-Phase Operators}
\paragraph{Input:}
 \begin{enumerate}
     \item The $Z$-logicals $L_Z$ of a CSS code (see above);
     \item A level-$t$ diagonal Clifford hierarchy operator $B$ on $k$ qubits (\Cref{sec_controlled_phase_operators}).
 \end{enumerate}
\paragraph{Output:}
A logical $\overline{B}$ operator (\Cref{sec_LO_of_CSS_codes}) on the code composed of physical phase rotation gates (\Cref{sec_phase_rotation}) with maximum support size $t$.
\paragraph{Method:}
 \begin{enumerate}
     \item Express $B = \prod_{\mathbf{u}} RP_N(\mathbf{q_u, u})$ as a product of phase rotation gates using the duality result of \Cref{sec_duality} where $N = 2^t$ and $\mathbf{u} \in \mathbb{Z}_2^k$;
     \item The operator $\overline{B} = \prod_{\mathbf{u}} RP_N(\mathbf{q_u, u} L_Z)$ is a logical $B$ operator;
     \item Apply the duality result of \Cref{sec_duality} twice to express $\overline{B}$ as a product of phase-rotation gates of maximum support size $t$.
 \end{enumerate}     
\end{framed}

\begin{example}
[Logical Controlled-Phase Operators of Toric Code]
We now demonstrate a canonical implementation of a logical CZ  operator on the 2D toric code of \Cref{eg_logical_P} composed of physical controlled-phase gates. Using \Cref{eq_logical_CP_RP} and the fact that $\text{RP}_4(2,\mathbf{z}_0) = \prod_{i < j \preccurlyeq \mathbf{z}_0}CZ_{ij}\prod_{i \preccurlyeq \mathbf{z}_0}S_i$ from \Cref{eg_logical_P}:
\begin{align}
    \overline{\text{CZ}_{01}}&:= \overline{CP_4(4,11)}\\ 
    &= \prod_{\mathbf{0} \ne \mathbf{u} \preccurlyeq 11}\text{RP}_4(2 \cdot (-1)^{\text{wt}(\mathbf{u})-1},\mathbf{u}L_Z)\\
    &= \text{RP}_4(-2,\mathbf{z}_0 \oplus \mathbf{z}_1)\text{RP}_4(2,\mathbf{z}_0)\text{RP}_4(2,\mathbf{z}_1)\\
    &= \Big(\prod_{i < j \preccurlyeq \mathbf{z}_0\oplus\mathbf{z}_1}CZ_{ij}\prod_{i \preccurlyeq \mathbf{z}_0\oplus\mathbf{z}_1}S_i\Big)^{-1}\Big(\prod_{i < j \preccurlyeq \mathbf{z}_0}CZ_{ij}\prod_{i \preccurlyeq \mathbf{z}_0}S_i\Big)\Big(\prod_{i < j \preccurlyeq \mathbf{z}_1}CZ_{ij}\prod_{i \preccurlyeq \mathbf{z}_1}S_i\Big)\label{eq_CZ_01}
\end{align}
We can choose logical Z operators for the 2D toric code such that $\text{supp}(\mathbf{z}_0) \cap \text{supp}(\mathbf{z}_1) = \emptyset$. In this case, all $S$ operators in \Cref{eq_CZ_01} cancel, as do any CZ operators which lie entirely on the support of either $\mathbf{z}_0$ or $\mathbf{z}_1$, and so we have:
\begin{align}
\overline{\text{CZ}_{01}} &:= \prod_{i \preccurlyeq \mathbf{z}_0, j \preccurlyeq \mathbf{z}_1}CZ_{ij}
\end{align}
This is an instance of Claim 2 in \cite{logical_CZ} for logical multi-controlled-$Z$ operators. Our method applies to arbitrary diagonal Clifford hierarchy logical operators and we can also show:
\begin{align}
\overline{\text{CS}_{01}} &:= \overline{CP_8(4,11)}\\
&=\prod_{\mathbf{0} \ne \mathbf{u} \preccurlyeq 11}\text{RP}_8(2 \cdot (-1)^{\text{wt}(\mathbf{u})-1},\mathbf{u}L_Z)\\
&= \prod_{i \preccurlyeq \mathbf{z}_0, j \preccurlyeq \mathbf{z}_1}CS_{ij}\prod_{i \preccurlyeq \mathbf{z}_0, j < k \preccurlyeq \mathbf{z}_1}CCZ_{ijk}\prod_{i < j \preccurlyeq \mathbf{z}_0,  k \preccurlyeq \mathbf{z}_1}CCZ_{ijk}
\end{align}
Note  that the number of physical gates used in the implementation is $\mathcal{O}(d^t)$. 
 As we are not guaranteed that $\text{supp}(\mathbf{z}_0) \cap \text{supp}(\mathbf{z}_1) = \emptyset$ for arbitrary CSS codes, the above identities are not completely general. In the \href{https://github.com/m-webster/CSSLO/blob/main/notebooks/04.1_canonical_LO.ipynb}{linked Jupyter notebook}, users can calculate identities of this kind for any desired CSS code for any diagonal Clifford hierarchy logical operator.
\end{example}

\subsection{Constructing a CSS Code with a Desired Diagonal Logical Clifford Hierarchy Operator}\label{sec_constructing_css_codes}
In this Section, we apply the canonical logical operator form of \Cref{sec_CP_canonical} to generate a CSS code with a transversal implementation of a desired logical controlled-phase operator using single-qubit phase gates. 
\begin{framed}
\paragraph{Algorithm 8: Constructing CSS Codes with a Desired Diagonal Logical Clifford Hierarchy Operator}
\paragraph{Input:} A controlled-phase operator $B$ on $k$ qubits (\Cref{sec_controlled_phase_operators}) and a target distance $d$. 
\paragraph{Output:} A CSS code with a logical $\overline{B}$ operator (\Cref{sec_LO_of_CSS_codes}) composed of single-qubit phase gates.
\paragraph{Method:}
\begin{enumerate}
    \item Let $\mathbf{C}$ be a $k$-dimensional toric code of distance $d$. We construct the stabiliser generators of $\mathbf{C}$ using the total complex of the tensor product of $k$ classical repetition codes on $d$ bits (see Section II.D of \cite{balanced_product}). The resulting CSS code has $k$ non-overlapping logical Z operators of weight $d$;
    \item Find the canonical implementation of $\overline{B} = \prod_{\mathbf{v}\in V}RP_N(q_\mathbf{v},\mathbf{v})$ composed of phase-rotation gates of maximum support size $t$ using Algorithm 7;
    \item Remove any elements of $V$ where $q_\mathbf{v} = 0$ and apply the embedding $\mathcal{E}_V$ to find the $X$-checks and $X$-logicals of the embedded code $\mathbf{C}_V$ as in \Cref{sec_embedded_codes};
    \item The resulting code has a logical $B$ operator $\overline{B}_V$ composed of level-$t$ phase gates acting on the embedded codespace.
\end{enumerate}
\end{framed}
\begin{example}[Constructing CSS Codes with Transversal Logical Controlled Phase Operators]\label{eg_CS}
In \Cref{table_code_search}, we list the parameters of CSS codes with transversal implementations of various target logical controlled phase operators using the method in \Cref{sec_constructing_css_codes}. 
The CSS codes are generated from toric codes as follows. 
For a  target operator acting on $k$ logical qubits, we use a $k$-dimensional toric code. We generate a series of codes by increasing the distance $d$ of the toric code. 
Looking at the CZ column we have a family of $[[4m^2, 2, 2m]]$ codes with a transversal CZ operator, the first member of which is the $[[4,2,2]]$ code of \Cref{eg_CSS_LO}. 
Looking at the CCZ column, we have a family of $[[8m^3, 3, 2m]]$ codes which have a transversal CCZ operator, the first member of which is the hypercube code of \Cref{eg_hypercube_LI}. 
The 6-qubit code in the S column is the 6-qubit code discussed in \Cref{eg_embedding_rep}.
The 15-qubit code in the T column is the 15-qubit Reed-Muller code.
The first entry in the CS column is the $[[12,2,2]]$ code with the following $X$-checks and $X$-logicals:
\begin{align}
S_X&:=\begin{pmatrix}
111100001111\\
000011111111
\end{pmatrix}; &
L_X&:=\begin{pmatrix}
010101010101\\
001100110011
\end{pmatrix}.
\end{align}

An interactive version is available in the \href{https://github.com/m-webster/CSSLO/blob/main/notebooks/04.2_code_search.ipynb}{linked Jupyter notebook}.
\begin{table}[hbt]
	\centering

\begin{tabular}{|r||r|r|r|r|r|r|r|r|r|r|r|r|r|r|r|}
\hline
\textbf{d} &\multicolumn{3}{|l|}{\textbf{Logical S}} &\multicolumn{3}{l|}{\textbf{Logical CZ}} &\multicolumn{3}{l|}{\textbf{Logical T}} &\multicolumn{3}{l|}{\textbf{Logical CS}}&\multicolumn{3}{l|}{\textbf{Logical CCZ}}\\
\hline
	&\textbf{n}	&$\mathbf{d_X}$	&$\mathbf{d_Z}$	&\textbf{n}	&$\mathbf{d_X}$	&$\mathbf{d_Z}$		&\textbf{n}	&$\mathbf{d_X}$	&$\mathbf{d_Z}$	&\textbf{n}	&$\mathbf{d_X}$	&$\mathbf{d_Z}$		&\textbf{n}	&$\mathbf{d_X}$	&$\mathbf{d_Z}$	\\
\hline
\textbf{2}	&1	&1	&1	&4	&2	&2	&1	&1	&1	&12	&6	&2	&8	&4	&2	\\
\textbf{3}	&6	&3	&2	&15	&4	&3	&1	&1	&1	&33	&14	&2	&63	&16	&3	\\
\textbf{4}	&6	&3	&2	&16	&4	&4	&14	&7	&2	&64	&22	&2	&64&16	&4	\\
\textbf{5}	&15	&5	&3	&35	&6	&5	&15	&7	&3	&155&40	&3	&215&36	&5	\\
\textbf{6}	&15	&5	&3	&36	&6	&6	&35	&15	&2	&228&52	&4	&216&36	&6	\\
\textbf{7}	&28	&7	&4	&63	&8	&7	&36	&15	&3	&385&76	&4	&511&64	&7	\\
\textbf{8}	&28	&7	&4	&64	&8	&8	&92	&29	&3	&512&92	&5	&512&64	&8	\\
\textbf{9}	&45	&9	&5	&99	&10	&9	&93 &29	&3	&819&126&5	&999&100&9	\\
\textbf{10}	&45	&9	&5	&100&10	&10	&165&45	&4	&1020&146&6	&1000&100&10	\\
\hline
\end{tabular}

\caption{Parameters of CSS codes generated by the embedded code method when searching for implementations of logical operators based on the toric code of distance $d$. For a logical operator acting on $k$ qubits, we use a $k$-dimensional toric code.} \label{table_code_search}
\end{table}

\begin{figure}[hbt!]
\centering
\includegraphics[width=0.9\textwidth]{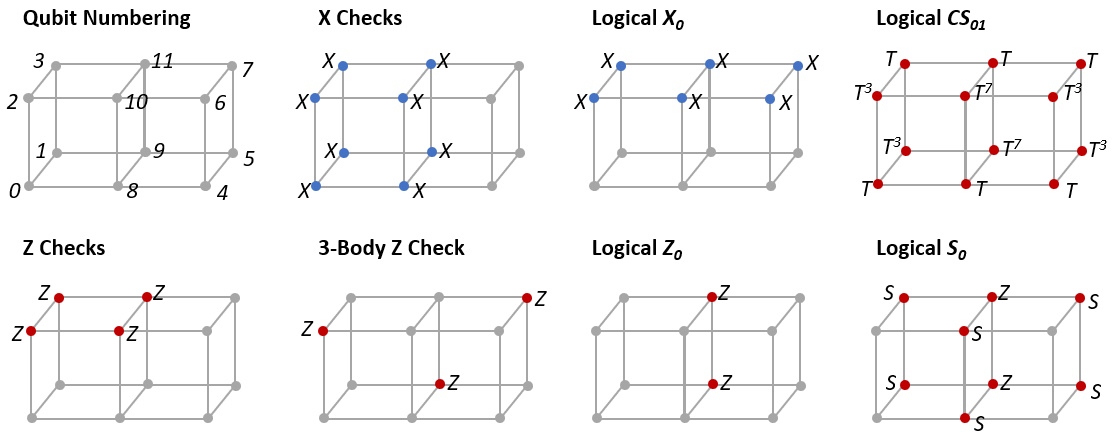}

\caption{\textbf{CSS Code with Transversal Logical Controlled-$S$ Operator}: the $[[12,2,2]]$ code of \Cref{eg_CS} is formed from two $[[8,3,2]]$ hypercube codes of \Cref{eg_hypercube_LI} joined at a common face, with additional 3-body Z-checks. Similarly, the $[[14,1,2]]$ code with a transversal logical $T$ operator is formed from three $[[8,3,2]]$ codes joined pairwise at  faces and sharing a common edge.}\label{fig_CS}
\end{figure}

\end{example}

\section{Conclusion and Open Questions}
We have presented efficient new methods to identify and test diagonal logical operators on CSS codes using both single- and multi-qubit diagonal Clifford hierarchy gates as building blocks.  In addition, we provided a technique for generating CSS codes with implementations of any desired diagonal Clifford hierarchy logical operator using single-qubit phase gates. The methods generalise to non-CSS stabiliser codes as demonstrated in \Cref{app_non_css}. The algorithms are available in a \href{https://github.com/m-webster/CSSLO}{GitHub repository} and are intended to be of benefit to researchers in understanding the logical operator structure of stabiliser codes.

Our methods rely on representing diagonal Clifford hierarchy operators as diagonal XP operators. Our algorithms use the vector representation of XP operators and linear algebra modulo $N$, and so have reduced computational complexity compared to existing work in this area. 

The ability to represent diagonal Clifford hierarchy operators as XP operators may have a number of other possible applications. 
Custom design of CSS codes for devices that have known coherent noise models is one possibility. If the noise can be represented as a series of multi-qubit diagonal operators, we could design a CSS code where these operators are in the logical identity group and so mitigate coherent noise. 
The simulation of quantum circuits could be another application. A circuit composed of multi-qubit diagonal operators, such as those used for measuring the stabiliser generators of a CSS code, could be amenable to simulation using XP representations of the gates used. 
As any diagonal Clifford hierarchy operator can be represented as a diagonal XP operator, there could also be implications for computational complexity theory.

\begin{acknowledgements} 
This research was supported by the Australian Research Council via the Centre of Excellence in Engineered Quantum Systems (EQUS) project number CE170100009. MW was supported by the Sydney Quantum Academy, Sydney, NSW, Australia. MW and AOQ were students at the 2022 IBM Quantum Error Correction Summer School and the idea for this paper originated there; we wish to express our thanks to IBM for the opportunity to attend the Summer School. MW thanks Robert Calderbank for hosting his visit to Duke University in August 2022 -- the basic structure of this paper was developed during the visit.
\end{acknowledgements}

\appendix

\section{Controlled-Phase and Phase-Rotation Operators}\label{app_CP_RP}

In this Appendix, we give the detailed proofs of results relating to controlled-phase and phase-rotation operators. The action  and duality property of these operators derive from sum/product duality properties for binary vectors and binary variables, and we start by proving these results. We then prove results relating to phase-rotation operators. We first show that phase-rotation operators can be written as a sum of projectors. This allows us to calculate the logical action of phase-rotation operators. We then prove the duality result between controlled-phase and  phase-rotation operators.  Finally we prove the key commutation relations for phase-rotation and controlled-phase operators. 

\subsection{Product/Sum Duality Results for Binary Vectors and Variables}

\begin{proposition}[Sum/Product Duality of Binary Vectors]\label{prop_product_sum}
Let $L$ be a binary matrix with rows $\mathbf{x}_i$ for $0 \le i < r$ and $\mathbf{v}$ a binary vector of length $r$. Define:
\begin{align}
    \mathbf{s_v}(L)&:= \bigoplus_{i \preccurlyeq \mathbf{v}}\mathbf{x}_i;\\
    \mathbf{p_v}(L)&:= \prod_{i \preccurlyeq \mathbf{v}}\mathbf{x}_i.
\end{align}
Then over the integers:
\begin{align}
    \mathbf{s_v}(L)&= \sum_{\mathbf{0} \ne \mathbf{u} \preccurlyeq \mathbf{v}}  (-2)^{\text{wt}(\mathbf{u})-1} \mathbf{p_u}(L);\\
    2^{\text{wt}(\mathbf{v})-1}\mathbf{p_v}(L)&= \sum_{\mathbf{0} \ne \mathbf{u} \preccurlyeq \mathbf{v}} 
     (-1)^{\text{wt}(\mathbf{u})-1} \mathbf{s_u}(L).\label{eq_product_sum}
\end{align}
\end{proposition}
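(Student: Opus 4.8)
The plan is to reduce both identities to statements about single binary variables, since each equality is asserted coordinatewise over $\mathbb{Z}$. I would fix a column index $c$ and write $a_i := \mathbf{x}_i[c] \in \{0,1\}$; setting $W := \{i : i \preccurlyeq \mathbf{v}\}$ and $m := \text{wt}(\mathbf{v}) = |W|$ (we take $\text{wt}(\mathbf{v}) \ge 1$, as the statement implicitly requires — for $\mathbf{v} = \mathbf{0}$ the factor $2^{\text{wt}(\mathbf{v})-1}$ is not an integer), the subsets $\emptyset \ne S \subseteq W$ are in bijection with the vectors $\mathbf{0} \ne \mathbf{u} \preccurlyeq \mathbf{v}$ via $S = \text{supp}(\mathbf{u})$, with $|S| = \text{wt}(\mathbf{u})$. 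Hence it suffices to prove for arbitrary $a_i \in \{0,1\}$ that
\begin{align}
\bigoplus_{i \in W} a_i &= \sum_{\emptyset \ne S \subseteq W} (-2)^{|S|-1}\prod_{i \in S} a_i,\\
2^{m-1}\prod_{i \in W} a_i &= \sum_{\emptyset \ne S \subseteq W} (-1)^{|S|-1}\bigoplus_{i \in S} a_i,
\end{align}
after which the proposition follows by applying these in each coordinate of the rows of $L$.

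For the first identity I would use the elementary fact that $(-1)^x = 1 - 2x$ when $x\in\{0,1\}$. Since a power of $-1$ depends only on the parity of its exponent, $(-1)^{\bigoplus_{i\in W}a_i} = \prod_{i\in W}(-1)^{a_i} = \prod_{i\in W}(1-2a_i)$, and expanding the product gives $\sum_{S\subseteq W}(-2)^{|S|}\prod_{i\in S}a_i$. Equating this with $(-1)^{\bigoplus_{i\in W}a_i} = 1 - 2\bigoplus_{i\in W}a_i$, cancelling the $S=\emptyset$ term (which is the constant $1$), and dividing by $-2$ yields the claim.

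For the second identity I would run the dual argument, starting from $x = \tfrac{1}{2}(1 - (-1)^x)$ for $x\in\{0,1\}$, so that $\prod_{i\in W}a_i = 2^{-m}\prod_{i\in W}(1-(-1)^{a_i})$. Expanding the product and using $\prod_{i\in S}(-1)^{a_i} = (-1)^{\bigoplus_{i\in S}a_i} = 1 - 2\bigoplus_{i\in S}a_i$, the purely constant contributions sum to $\sum_{S\subseteq W}(-1)^{|S|} = (1-1)^m = 0$ (this is where $m\ge 1$ is used), leaving $\prod_{i\in W}(1-(-1)^{a_i}) = 2\sum_{\emptyset\ne S\subseteq W}(-1)^{|S|-1}\bigoplus_{i\in S}a_i$; multiplying through by $2^{m-1}$ gives the identity. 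An equivalent route is to substitute the first identity into the right-hand side of the second and observe that the inner sum $\sum_{T\subseteq S\subseteq W}(-1)^{|S|-1}$ vanishes unless $T=W$ by the binomial theorem. Either way, the only thing requiring care — and the closest thing to an obstacle — is the bookkeeping of the empty-set terms and of the signs; there is no conceptual difficulty.
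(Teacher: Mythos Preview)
Your proof is correct. The coordinatewise reduction is the right move, and both identities then follow cleanly from the single trick $(-1)^x = 1 - 2x$ applied in the two directions you describe; the sign and empty-set bookkeeping is handled correctly.

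The paper does not actually prove this proposition: it simply cites it as a restatement of Proposition~E.10 of \cite{XP}. So your argument is not merely a different route but a self-contained replacement for an external reference. What you gain is that the appendix becomes independent of \cite{XP} for this result, and the proof makes transparent why the coefficients $(-2)^{\text{wt}(\mathbf{u})-1}$ and $(-1)^{\text{wt}(\mathbf{u})-1}$ arise --- they are exactly what falls out of expanding $\prod(1-2a_i)$ and $\prod(1-(-1)^{a_i})$, respectively. The alternative M\"obius-style argument you sketch at the end (substituting the first identity into the second and collapsing via $\sum_{T\subseteq S\subseteq W}(-1)^{|S|-1}=0$ unless $T=W$) is also valid and is closer in spirit to how such inclusion--exclusion dualities are often presented.
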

\begin{proof}
Restatement of Proposition E.10 of \cite{XP}.
\end{proof}

\begin{proposition}[Sum/Product Duality of Binary Variables]\label{prop_product_sum_variables}
Let $\mathbf{e}$ be a vector of $r$ binary variables and let $\mathbf{v}$ a binary vector of length $r$. Define:
\begin{align}
    s_\mathbf{v}(\mathbf{e})&:= \bigoplus_{i \preccurlyeq \mathbf{v}}\mathbf{e}[i];\\
    p_\mathbf{v}(\mathbf{e})&:= \prod_{i \preccurlyeq \mathbf{v}}\mathbf{e}[i].
\end{align}
Then over the integers:
\begin{align}
    s_\mathbf{v}(\mathbf{e})&= \sum_{\mathbf{0} \ne \mathbf{u} \preccurlyeq \mathbf{v}} (-2)^{\text{wt}(\mathbf{u})-1} p_\mathbf{u}(\mathbf{e});\\
    2^{\text{wt}(\mathbf{v})-1} p_\mathbf{v}(\mathbf{e}) &= \sum_{\mathbf{0} \ne \mathbf{u} \preccurlyeq \mathbf{v}} (-1)^{\text{wt}(\mathbf{u})-1} s_\mathbf{u}(\mathbf{e}).
\end{align}
\end{proposition}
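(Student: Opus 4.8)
The plan is to obtain \Cref{prop_product_sum_variables} as an immediate specialisation of \Cref{prop_product_sum}. Apply that proposition to the $r\times 1$ binary matrix $L$ whose $i$-th row is the length-one vector $(\mathbf{e}[i])$ for $0\le i<r$. With this choice $\mathbf{s_v}(L)$ is the length-one vector whose single entry is $\bigoplus_{i\preccurlyeq\mathbf{v}}\mathbf{e}[i]=s_\mathbf{v}(\mathbf{e})$, and likewise $\mathbf{p_v}(L)$ has single entry $p_\mathbf{v}(\mathbf{e})$. Reading off the two identities of \Cref{prop_product_sum} component by component then yields exactly the two identities claimed here. The only point worth spelling out is that ``over the integers'' for a \emph{vector of binary variables} $\mathbf{e}$ just means that the identity holds for every assignment of $0$/$1$ values to the entries of $\mathbf{e}$, which is precisely what the vector version supplies as $L$ ranges over all $r\times 1$ binary matrices; and since both sides of each identity are multilinear in the $\mathbf{e}[i]$, agreement on all of $\{0,1\}^r$ also gives equality of polynomials over $\mathbb{Z}$ if that stronger reading is wanted. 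I would present this short reduction as the proof.

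For completeness I would also include a self-contained derivation of the first identity that does not route through \Cref{prop_product_sum}. The key observation is that $(-1)^{b}=1-2b$ for $b\in\{0,1\}$ and that parity is multiplicative under the sign map, so $1-2s_\mathbf{v}(\mathbf{e})=(-1)^{s_\mathbf{v}(\mathbf{e})}=\prod_{i\preccurlyeq\mathbf{v}}(-1)^{\mathbf{e}[i]}=\prod_{i\preccurlyeq\mathbf{v}}(1-2\mathbf{e}[i])$. Expanding the product over subsets of $\text{supp}(\mathbf{v})$ (each index occurs at most once, so no reduction is needed) gives $\prod_{i\preccurlyeq\mathbf{v}}(1-2\mathbf{e}[i])=\sum_{\mathbf{u}\preccurlyeq\mathbf{v}}(-2)^{\text{wt}(\mathbf{u})}p_\mathbf{u}(\mathbf{e})$; separating the $\mathbf{u}=\mathbf{0}$ term, cancelling the leading $1$, and dividing by $-2$ yields $s_\mathbf{v}(\mathbf{e})=\sum_{\mathbf{0}\ne\mathbf{u}\preccurlyeq\mathbf{v}}(-2)^{\text{wt}(\mathbf{u})-1}p_\mathbf{u}(\mathbf{e})$, which is the first claimed identity.

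The second identity I would then obtain by inverting the first. Substituting $s_\mathbf{u}(\mathbf{e})=\sum_{\mathbf{0}\ne\mathbf{w}\preccurlyeq\mathbf{u}}(-2)^{\text{wt}(\mathbf{w})-1}p_\mathbf{w}(\mathbf{e})$ into $\sum_{\mathbf{0}\ne\mathbf{u}\preccurlyeq\mathbf{v}}(-1)^{\text{wt}(\mathbf{u})-1}s_\mathbf{u}(\mathbf{e})$ and swapping the order of summation, the coefficient of $p_\mathbf{w}(\mathbf{e})$ becomes $(-2)^{\text{wt}(\mathbf{w})-1}\sum_{\mathbf{w}\preccurlyeq\mathbf{u}\preccurlyeq\mathbf{v}}(-1)^{\text{wt}(\mathbf{u})-1}$. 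Grouping the inner sum by $\ell=\text{wt}(\mathbf{u})-\text{wt}(\mathbf{w})$ turns it into $(-1)^{\text{wt}(\mathbf{w})-1}\sum_{\ell=0}^{m}(-1)^{\ell}\binom{m}{\ell}$ with $m=\text{wt}(\mathbf{v})-\text{wt}(\mathbf{w})$, which is $0$ whenever $m>0$; so only $\mathbf{w}=\mathbf{v}$ survives, and the surviving term is $(-2)^{\text{wt}(\mathbf{v})-1}(-1)^{\text{wt}(\mathbf{v})-1}p_\mathbf{v}(\mathbf{e})=2^{\text{wt}(\mathbf{v})-1}p_\mathbf{v}(\mathbf{e})$, as required. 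Alternatively one could simply re-specialise \Cref{prop_product_sum}, or run the sign-map expansion in the opposite direction.

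There is essentially no hard step here: the specialisation of \Cref{prop_product_sum} is immediate, and the sign-map expansion is a one-line computation. The only place that needs a little care is the bookkeeping in the inversion argument for the second identity --- keeping track of which $\mathbf{u}$ with $\mathbf{w}\preccurlyeq\mathbf{u}\preccurlyeq\mathbf{v}$ contribute and correctly evaluating the resulting alternating binomial sum --- but even that is standard.
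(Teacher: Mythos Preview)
Your proposal is correct and matches the paper's own proof exactly: the paper simply writes ``Application of \Cref{prop_product_sum} with $L$ the single-column matrix $\mathbf{e}^T$,'' which is precisely your specialisation to the $r\times 1$ matrix with rows $(\mathbf{e}[i])$. Your additional self-contained derivations via the sign map $(-1)^b=1-2b$ and the inversion argument are correct and go beyond what the paper provides, but they are not needed.
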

\begin{proof}
Application of \Cref{prop_product_sum} with $L$ the single-column matrix $\mathbf{e}^T$.
\end{proof}

\subsection{Phase-Rotation Operators}

\begin{proposition}[Projector Form of RP Operators]\label{prop_RP_projector}
Phase-rotation operators can be written in terms of projectors $A_{\pm 1} := (I \pm A)/2$:
\begin{align}
    \text{RP}_N(q,\mathbf{v}) := \exp(\frac{q\pi i}{N} A_{-1}) = A_{+1} + \omega^q A_{-1}.
\end{align}
\end{proposition}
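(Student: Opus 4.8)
The plan is to exploit the fact that $A_{-1}$ is an idempotent, which collapses the exponential power series to a closed form. First I would record the elementary properties of the eigenspace projectors of $A := XP_2(0|\mathbf{0}|\mathbf{v})$: since $A$ is a tensor product of $Z$ operators it is Hermitian and satisfies $A^2 = I$, so $A_{\pm 1} := (I \pm A)/2$ are orthogonal projectors with $A_{+1} + A_{-1} = I$, $A_{+1}A_{-1} = 0$, and in particular $A_{-1}^2 = A_{-1}$. Iterating the last identity gives $A_{-1}^m = A_{-1}$ for every integer $m \ge 1$.

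Next I would expand the matrix exponential as a power series and use idempotency to sum it:
\begin{align}
\exp\!\left(\frac{q\pi i}{N} A_{-1}\right) = \sum_{m=0}^{\infty} \frac{1}{m!}\left(\frac{q\pi i}{N}\right)^m A_{-1}^m = I + \left(\sum_{m=1}^{\infty} \frac{1}{m!}\left(\frac{q\pi i}{N}\right)^m\right) A_{-1} = I + \big(e^{q\pi i/N} - 1\big)A_{-1}.
\end{align}
Recalling that $\omega := \exp(\pi i/N)$, so $e^{q\pi i/N} = \omega^q$, this reads $I + (\omega^q - 1)A_{-1}$.

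Finally I would substitute $I = A_{+1} + A_{-1}$ to obtain $I + (\omega^q-1)A_{-1} = A_{+1} + A_{-1} + (\omega^q - 1)A_{-1} = A_{+1} + \omega^q A_{-1}$, which is the claimed identity. There is essentially no obstacle here: the only point requiring a moment's care is justifying $A_{-1}^m = A_{-1}$, i.e.\ that $A_{-1}$ genuinely is a projector, which follows from $A^2 = I$ for a product of Pauli $Z$ operators; everything else is the standard scalar series for $e^x - 1$ applied entrywise after diagonalising $A_{-1}$ (or, equivalently, the absolutely convergent series manipulation above).
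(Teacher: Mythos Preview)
Your proof is correct and follows essentially the same approach as the paper's: both use the idempotency $A_{-1}^m = A_{-1}$ for $m \ge 1$ to collapse the exponential series, then combine with $I = A_{+1} + A_{-1}$ to reach $A_{+1} + \omega^q A_{-1}$. The only cosmetic difference is that the paper absorbs the $m=0$ term directly into the sum to obtain $A_{+1} + A_{-1}\sum_{m\ge 0}(q\pi i/N)^m/m!$, whereas you first compute $I + (\omega^q - 1)A_{-1}$ and then substitute $I = A_{+1} + A_{-1}$; these are the same manipulation.
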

\begin{proof}
    Because $A_{-1}$ is a projector, $A_{-1}^m =  A_{-1}$ for integers $m > 0$. Also  $A_{-1}^0 = I = A_{+1} + A_{-1}$. Hence:
\begin{align}
\exp((q\pi i/N)A_{-1})&=I + A_{-1}\sum_{m >0}(q\pi i/N)^m/m!\\
&=A_{+1} + A_{-1}\sum_{m \ge 0}(q\pi i/N)^m/m!\\&=A_{+1} + e^{q\pi i/N}A_{-1}
\end{align}
\end{proof}

\begin{proposition}[Action of RP Operators]\label{prop_RP_action}
The action of a phase-rotation operator on a computational basis element $|\mathbf{e}\rangle$ where $\mathbf{e} \in \mathbb{Z}_2^n$ and $\omega := \exp(\pi i/N)$ is:
\begin{align}
    \text{RP}_N(q,\mathbf{v})|\mathbf{e}\rangle = \begin{cases}\omega^{q}|\mathbf{e}\rangle & \text{ if }\mathbf{e}\cdot\mathbf{v} \mod 2  = 1\\|\mathbf{e}\rangle & \text{ otherwise}.\end{cases}
\end{align}
\end{proposition}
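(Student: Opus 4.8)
The plan is to use the projector form of $\text{RP}_N(q,\mathbf{v})$ established in \Cref{prop_RP_projector}, namely $\text{RP}_N(q,\mathbf{v}) = A_{+1} + \omega^q A_{-1}$ with $A := XP_2(0|\mathbf{0}|\mathbf{v})$ and $A_{\pm 1} := (I \pm A)/2$. Since the statement concerns the action on a computational basis element, it suffices to evaluate $A_{+1}\ket{\mathbf{e}}$ and $A_{-1}\ket{\mathbf{e}}$ separately and then recombine.

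First I would apply the XP action formula \Cref{eq_XP_action} to the precision-$2$ operator $A$, being careful with the precision bookkeeping: $A$ is an operator of precision $2$, so the relevant root of unity for its action is $\exp(\pi i/2) = i$, whereas $\text{RP}_N$ uses $\omega = \exp(\pi i/N)$. This gives $A\ket{\mathbf{e}} = i^{2\,\mathbf{e}\cdot\mathbf{v}}\ket{\mathbf{e}} = (-1)^{\mathbf{e}\cdot\mathbf{v}}\ket{\mathbf{e}} = (-1)^{\mathbf{e}\cdot\mathbf{v}\bmod 2}\ket{\mathbf{e}}$, so each computational basis state is an eigenvector of $A$ with eigenvalue $+1$ or $-1$ according to the parity of $\mathbf{e}\cdot\mathbf{v}$.

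Next I would evaluate the projectors on $\ket{\mathbf{e}}$. We have $A_{+1}\ket{\mathbf{e}} = \tfrac12\bigl(1 + (-1)^{\mathbf{e}\cdot\mathbf{v}}\bigr)\ket{\mathbf{e}}$, which equals $\ket{\mathbf{e}}$ when $\mathbf{e}\cdot\mathbf{v}$ is even and is $0$ when it is odd; symmetrically, $A_{-1}\ket{\mathbf{e}} = \tfrac12\bigl(1 - (-1)^{\mathbf{e}\cdot\mathbf{v}}\bigr)\ket{\mathbf{e}}$ vanishes in the even case and equals $\ket{\mathbf{e}}$ in the odd case. Substituting into $\text{RP}_N(q,\mathbf{v})\ket{\mathbf{e}} = A_{+1}\ket{\mathbf{e}} + \omega^q A_{-1}\ket{\mathbf{e}}$ yields $\ket{\mathbf{e}}$ when $\mathbf{e}\cdot\mathbf{v}\bmod 2 = 0$ and $\omega^q\ket{\mathbf{e}}$ when $\mathbf{e}\cdot\mathbf{v}\bmod 2 = 1$, which is exactly the claimed case distinction.

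I do not expect any genuine obstacle: the proof is a short computation once \Cref{prop_RP_projector} is in hand. The only point that requires care is the precision mismatch between $A$ (precision $2$) and $\text{RP}_N$ (precision $N$) mentioned above, so that the sign $(-1)^{\mathbf{e}\cdot\mathbf{v}}$ is obtained from the precision-$2$ action formula and is not accidentally written as $\omega^{2\,\mathbf{e}\cdot\mathbf{v}}$ with the precision-$N$ root of unity.
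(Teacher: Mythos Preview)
Your proposal is correct and follows exactly the approach the paper intends: the paper's own proof is the one-line remark ``Straightforward application of projector form of phase-rotation operators in \Cref{prop_RP_projector},'' and you have simply written out that straightforward application in detail. Your care about the precision mismatch between $A$ (precision $2$) and $\text{RP}_N$ (precision $N$) is well placed and handled correctly.
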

\begin{proof}
Straightforward application of projector form of phase-rotation operators in \Cref{prop_RP_projector}.
\end{proof}

\subsection{Duality of Controlled-Phase and Phase-Rotation Operators}
The proposition below allows us to express controlled-phase operators (\Cref{sec_controlled_phase_operators}) as products of phase-rotation operators (\Cref{sec_phase_rotation}) and vice-versa.
\begin{proposition}[Duality of Controlled-Phase and Phase-Rotation Operators]\label{prop_duality}
For $N= 2^t$ and $\mathbf{u, v}$ binary vectors of length $n$ the following identities hold:
\begin{align}
    \text{RP}_N(2,\mathbf{v}) = \prod_{\mathbf{0} \ne \mathbf{u} \preccurlyeq \mathbf{v}}\text{CP}_N(2 \cdot(-2)^{\text{wt}(\mathbf{u})-1}  , \mathbf{u})\\
    \text{CP}_N(2^{\text{wt}(\mathbf{v})},\mathbf{v}) = \prod_{\mathbf{0} \ne \mathbf{u} \preccurlyeq \mathbf{v}}\text{RP}_N(2\cdot (-1)^{\text{wt}(\mathbf{u})-1}, \mathbf{u})
\end{align}
\end{proposition}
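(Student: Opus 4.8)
The plan is to use that every operator appearing in either identity is diagonal, so it suffices to check that the two sides apply the same phase to each computational basis vector $\ket{\mathbf{e}}$ with $\mathbf{e}\in\mathbb{Z}_2^n$. A product of diagonal operators simply adds the corresponding exponents of $\omega$, so the action formulas \Cref{eq_CP_phase} and \Cref{eq_RP_phase} turn each operator identity into a numerical identity between exponents of $\omega$; since $\omega$ is a $(2N)$-th root of unity it is enough to prove these exponent identities as congruences modulo $2N$, and in fact I will obtain them as equalities over $\mathbb{Z}$. The quantities that arise are precisely $p_\mathbf{u}(\mathbf{e})=\prod_{i\preccurlyeq\mathbf{u}}\mathbf{e}[i]$ and $s_\mathbf{u}(\mathbf{e})=\bigoplus_{i\preccurlyeq\mathbf{u}}\mathbf{e}[i]$, so the whole statement reduces to \Cref{prop_product_sum_variables}.

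Concretely, for the first identity the left-hand side sends $\ket{\mathbf{e}}$ to $\omega^{2s_\mathbf{v}(\mathbf{e})}\ket{\mathbf{e}}$ by \Cref{eq_RP_phase}, while the right-hand side sends it to $\omega^{c}\ket{\mathbf{e}}$ with $c=\sum_{\mathbf{0}\ne\mathbf{u}\preccurlyeq\mathbf{v}}2\cdot(-2)^{\text{wt}(\mathbf{u})-1}p_\mathbf{u}(\mathbf{e})$ by \Cref{eq_CP_phase}; multiplying the first relation of \Cref{prop_product_sum_variables} by $2$ gives $2s_\mathbf{v}(\mathbf{e})=c$ over $\mathbb{Z}$. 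For the second identity, \Cref{eq_CP_phase} gives the left-hand side as $\omega^{2^{\text{wt}(\mathbf{v})}p_\mathbf{v}(\mathbf{e})}\ket{\mathbf{e}}$ and \Cref{eq_RP_phase} gives the right-hand side as $\omega^{c'}\ket{\mathbf{e}}$ with $c'=\sum_{\mathbf{0}\ne\mathbf{u}\preccurlyeq\mathbf{v}}2\cdot(-1)^{\text{wt}(\mathbf{u})-1}s_\mathbf{u}(\mathbf{e})$; multiplying the second relation of \Cref{prop_product_sum_variables} by $2$ yields $2^{\text{wt}(\mathbf{v})}p_\mathbf{v}(\mathbf{e})=c'$ over $\mathbb{Z}$. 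This settles both cases.

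I do not expect a genuine obstacle, since the combinatorial content is already packaged in \Cref{prop_product_sum_variables} and the proof is essentially a change of viewpoint. The only points needing a moment's care are checking that the product on the right of each identity is over diagonal operators whose exponents add with no cross-terms, so that the order of the product is immaterial, and being explicit that it suffices to match exponents modulo $2N$, so one may work directly with the possibly large or negative integer coefficients $2\cdot(-2)^{\text{wt}(\mathbf{u})-1}$ and $2\cdot(-1)^{\text{wt}(\mathbf{u})-1}$ rather than their residues in $\mathbb{Z}_{2N}$. As a sanity check on the first identity, the terms with $\text{wt}(\mathbf{u})>t$ carry coefficients $\pm 2^{\text{wt}(\mathbf{u})}$ divisible by $2N=2^{t+1}$ and therefore drop out, consistent with the earlier observation that such high-weight terms vanish when $N=2^t$.
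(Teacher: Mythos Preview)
Your proposal is correct and matches the paper's own proof essentially line for line: both compute the phase each side applies to an arbitrary computational basis vector via \Cref{eq_CP_phase} and \Cref{eq_RP_phase}, then invoke \Cref{prop_product_sum_variables} (multiplied by $2$) to equate the exponents. Your additional remarks about commutativity of the diagonal factors and the vanishing of high-weight terms are sound but not needed for the argument.
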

\begin{proof}
Using \Cref{eq_RP_phase} and the notation of \Cref{prop_product_sum_variables}, we can write $\text{RP}_N(2,\mathbf{v})\ket{\mathbf{e}} = \omega^{2s_\mathbf{v}(\mathbf{e})}\ket{\mathbf{e}}$. From  \Cref{prop_product_sum_variables}, we have $2s_\mathbf{v}(\mathbf{e}) = \sum_{\mathbf{0} \ne \mathbf{u} \preccurlyeq \mathbf{v}}2\cdot(-2)^{\text{wt}(\mathbf{u})-1}p_\mathbf{u}(\mathbf{e})$.

Similarly, from \Cref{eq_CP_phase}, we can write $\text{CP}_N(2^{\text{wt}(\mathbf{v})},\mathbf{v})\ket{\mathbf{e}} = \omega^{2^{\text{wt}(\mathbf{v})}p_\mathbf{v}(\mathbf{e})}\ket{\mathbf{e}}$ and due to \Cref{prop_product_sum_variables}, we have $2^{\text{wt}(\mathbf{v})} p_\mathbf{v}(\mathbf{e}) = \sum_{\mathbf{0} \ne \mathbf{u} \preccurlyeq \mathbf{v}}2\cdot (-1)^{\text{wt}(\mathbf{u})-1} s_\mathbf{u}(\mathbf{e})$.

Hence the phases applied on the RHS and LHS are the same and the result follows.
\end{proof}

\subsection{Commutator Relations for Controlled-Phase and Phase-Rotation Operators}
In this Section, we prove the commutation relations for Pauli X operators with controlled-phase operators (\Cref{sec_controlled_phase_operators}) and phase-rotation operators (\Cref{sec_phase_rotation}).
\begin{proposition}[Commutator Relation for Phase-Rotation Operators]\label{prop_RP_algebra}
Let $X_i$ denote a Pauli $X$ operator on qubit $i$. The following identity applies for phase-rotation operators:
    \begin{align}
    \text{RP}_N(q,\mathbf{v})X_i = \begin{cases} \omega^q X_i\text{RP}_N(-q,\mathbf{v})  & \text{ if } \mathbf{v}[i] = 1\\  X_i \text{RP}_N(q,\mathbf{v})& \text{ otherwise}\end{cases}
\end{align}
\end{proposition}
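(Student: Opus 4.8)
The plan is to verify the identity by computing the action of both sides on an arbitrary computational basis vector $\ket{\mathbf{e}}$ with $\mathbf{e} \in \mathbb{Z}_2^n$, since two diagonal-times-$X_i$ operators agree iff they agree on every such basis state. First I would use \Cref{prop_RP_action} to record the action of $\text{RP}_N(q,\mathbf{v})$: it multiplies $\ket{\mathbf{e}}$ by $\omega^{q\,s_\mathbf{v}(\mathbf{e})}$ where $s_\mathbf{v}(\mathbf{e}) = \bigoplus_{i \preccurlyeq \mathbf{v}}\mathbf{e}[i] = \mathbf{e}\cdot\mathbf{v} \bmod 2$. The operator $X_i$ sends $\ket{\mathbf{e}}$ to $\ket{\mathbf{e}\oplus\mathbf{b}^n_i}$. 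So the left-hand side acts as $\text{RP}_N(q,\mathbf{v})X_i\ket{\mathbf{e}} = \omega^{q\, s_\mathbf{v}(\mathbf{e}\oplus\mathbf{b}^n_i)}\ket{\mathbf{e}\oplus\mathbf{b}^n_i}$.

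Next I would split on whether $\mathbf{v}[i]=1$. If $\mathbf{v}[i]=0$, then flipping coordinate $i$ of $\mathbf{e}$ does not change the parity $\mathbf{e}\cdot\mathbf{v}\bmod 2$, so $s_\mathbf{v}(\mathbf{e}\oplus\mathbf{b}^n_i) = s_\mathbf{v}(\mathbf{e})$ and the left-hand side equals $\omega^{q\,s_\mathbf{v}(\mathbf{e})}\ket{\mathbf{e}\oplus\mathbf{b}^n_i} = X_i\,\text{RP}_N(q,\mathbf{v})\ket{\mathbf{e}}$, matching the second case. If $\mathbf{v}[i]=1$, then $s_\mathbf{v}(\mathbf{e}\oplus\mathbf{b}^n_i) = s_\mathbf{v}(\mathbf{e}) \oplus 1 = 1 - s_\mathbf{v}(\mathbf{e})$ as integers in $\{0,1\}$. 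Hence the left-hand side is $\omega^{q(1 - s_\mathbf{v}(\mathbf{e}))}\ket{\mathbf{e}\oplus\mathbf{b}^n_i} = \omega^{q}\,\omega^{-q\,s_\mathbf{v}(\mathbf{e})}\ket{\mathbf{e}\oplus\mathbf{b}^n_i} = \omega^q X_i\,\text{RP}_N(-q,\mathbf{v})\ket{\mathbf{e}}$, matching the first case.

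Since the two sides agree on a basis of $\mathcal{H}_2^n$ in each case, the operator identity follows. There is no serious obstacle here; the only point requiring a moment's care is the bookkeeping that $s_\mathbf{v}(\mathbf{e}\oplus\mathbf{b}^n_i)$ should be treated as the integer $1 - s_\mathbf{v}(\mathbf{e})$ rather than $s_\mathbf{v}(\mathbf{e})+1$ when pushing it into the exponent of $\omega$, so that the residual phase is exactly $\omega^q$ and the remaining rotation has parameter $-q$. Everything else is a direct substitution into \Cref{prop_RP_action}.
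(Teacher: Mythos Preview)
Your proof is correct and follows essentially the same approach as the paper: verify the identity on computational basis states using \Cref{prop_RP_action}, split on whether $\mathbf{v}[i]=1$, and use the fact that flipping bit $i$ flips the parity $\mathbf{e}\cdot\mathbf{v}\bmod 2$ exactly when $\mathbf{v}[i]=1$. The only cosmetic difference is that you package the case analysis via the integer identity $s_\mathbf{v}(\mathbf{e}\oplus\mathbf{b}^n_i)=1-s_\mathbf{v}(\mathbf{e})$, whereas the paper writes out the two parity subcases explicitly.
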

\begin{proof}
If $\mathbf{v}[i] =0$,  the support of the operators do not overlap and hence the operators commute and the second case follows.

For the case where $\mathbf{v}[i] = 1$, let $\mathbf{b}^n_i$ be the length $n$ vector which is zero apart from component $i$ which is one. Then, for a computational basis vector $|\mathbf{e}\rangle$, we have:
\begin{align}
    \text{RP}_N(q,\mathbf{v})X_i|\mathbf{e}\rangle &= \text{RP}_N(q,\mathbf{v})|\mathbf{e} \oplus \mathbf{b}^n_i\rangle\\
    &= \begin{cases}|\mathbf{e} \oplus \mathbf{b}^n_i\rangle & \text{ if }(\mathbf{e} \oplus \mathbf{b}^n_i) \cdot \mathbf{v} = 0 \mod 2\\\omega^q |\mathbf{e} \oplus \mathbf{b}^n_i\rangle & \text{ otherwise}.\end{cases}\\
    \omega^q X_i\text{RP}_N(-q,\mathbf{v})|\mathbf{e}\rangle &= \begin{cases}\omega^q |\mathbf{e} \oplus \mathbf{b}^n_i\rangle & \text{ if }\mathbf{e} \cdot \mathbf{v} = 0 \mod 2\\\omega^q\omega^{-q} |\mathbf{e} \oplus \mathbf{b}^n_i\rangle & \text{ otherwise}.\end{cases}
\end{align}
Since, by assumption $\mathbf{v}[i] =1$, $\mathbf{e} \cdot \mathbf{v} = 0 \mod 2 \iff (\mathbf{e} \oplus \mathbf{b}^n_i) \cdot \mathbf{v} = 1 \mod 2$. Hence, the action on computational basis vectors is identical and the result follows.
\end{proof}

\begin{proposition}[Commutation Relation for Controlled-Phase Operators]\label{prop_CP_relation}
\begin{align}
\text{CP}_N(q,\mathbf{v}) X_i  = \begin{cases}X_i\text{CP}_N(q, \mathbf{v}) & \text{ if } \mathbf{v}[i] = 0\\X_i\text{CP}_N(-q,\mathbf{v})\text{CP}_N(q,\mathbf{v}\oplus \mathbf{b}^n_i) & \text{ otherwise}\end{cases}
\end{align}
Where $\mathbf{b}^n_i$ is the length $n$ binary vector which is zero apart from component $i$ which is one.
\end{proposition}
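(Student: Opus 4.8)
The plan is to prove the identity by comparing the action of both sides on an arbitrary computational basis vector $\ket{\mathbf{e}}$ with $\mathbf{e}\in\mathbb{Z}_2^n$, using the diagonal phase description $\text{CP}_N(q,\mathbf{v})\ket{\mathbf{e}} = \omega^{q\,p_\mathbf{v}(\mathbf{e})}\ket{\mathbf{e}}$ from \Cref{eq_CP_phase} together with $X_i\ket{\mathbf{e}} = \ket{\mathbf{e}\oplus\mathbf{b}^n_i}$. Every operator occurring in the statement sends $\ket{\mathbf{e}}$ to a scalar multiple of $\ket{\mathbf{e}\oplus\mathbf{b}^n_i}$, so the claim collapses to an identity between the accumulated phase exponents, which I will verify as an equality of integers (hence, a fortiori, modulo $2N$, which is all that is needed since $\omega^{2N}=1$).

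The case $\mathbf{v}[i]=0$ is immediate: $i\notin\text{supp}(\mathbf{v})$, so $\text{CP}_N(q,\mathbf{v})$ acts trivially on the coordinate flipped by $X_i$, and the two operators commute. For the case $\mathbf{v}[i]=1$, I would set $\mathbf{w}:=\mathbf{v}\oplus\mathbf{b}^n_i$, so that $\mathbf{w}[i]=0$ and $\text{supp}(\mathbf{v})=\text{supp}(\mathbf{w})\cup\{i\}$. This yields the factorisation $p_\mathbf{v}(\mathbf{e}) = \mathbf{e}[i]\,p_\mathbf{w}(\mathbf{e})$, and, crucially, $p_\mathbf{w}$ does not depend on the $i$-th coordinate, so $p_\mathbf{w}(\mathbf{e}\oplus\mathbf{b}^n_i)=p_\mathbf{w}(\mathbf{e})$. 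The left-hand side then applies phase exponent $q\,p_\mathbf{v}(\mathbf{e}\oplus\mathbf{b}^n_i) = q\,(1-\mathbf{e}[i])\,p_\mathbf{w}(\mathbf{e})$, while the right-hand side applies $-q\,p_\mathbf{v}(\mathbf{e})+q\,p_{\mathbf{v}\oplus\mathbf{b}^n_i}(\mathbf{e}) = -q\,\mathbf{e}[i]\,p_\mathbf{w}(\mathbf{e})+q\,p_\mathbf{w}(\mathbf{e}) = q\,(1-\mathbf{e}[i])\,p_\mathbf{w}(\mathbf{e})$. These coincide, so both sides act identically on every $\ket{\mathbf{e}}$ and the identity follows.

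The only genuine bookkeeping point — and the nearest thing to an obstacle — is spotting the factorisation $p_\mathbf{v}(\mathbf{e}) = \mathbf{e}[i]\,p_{\mathbf{v}\oplus\mathbf{b}^n_i}(\mathbf{e})$ together with the independence of $p_{\mathbf{v}\oplus\mathbf{b}^n_i}$ from coordinate $i$; once these are in place the computation is a one-liner. An alternative route would be to derive the relation from the phase-rotation commutation relation of \Cref{prop_RP_algebra} via the duality of \Cref{prop_duality}, but the direct basis-vector calculation is shorter and sidesteps the need to track the constants $(-2)^{\text{wt}(\mathbf{u})-1}$ appearing there.
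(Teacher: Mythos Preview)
Your proof is correct and follows essentially the same approach as the paper: both verify the identity by comparing the action of each side on an arbitrary computational basis vector $\ket{\mathbf{e}}$. The only difference is cosmetic: the paper works with the $\preccurlyeq$-based definition \Cref{eq_CP_def} and enumerates the cases $\mathbf{e}[i]\in\{0,1\}$ and whether $(\mathbf{v}\oplus\mathbf{b}^n_i)\preccurlyeq\mathbf{e}$, whereas your use of the product formula \Cref{eq_CP_phase} and the factorisation $p_\mathbf{v}(\mathbf{e})=\mathbf{e}[i]\,p_{\mathbf{v}\oplus\mathbf{b}^n_i}(\mathbf{e})$ compresses that case analysis into a single algebraic line.
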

\begin{proof}
If $\mathbf{v}[i] = 0$ then $\text{CP}_N(q,\mathbf{v})$ has no support in common with $ X_i$ so the operators commute. Now assume $\mathbf{v}[i] = 1$ then the operator on the LHS acts on the  computational basis element $|\mathbf{e}\rangle$ as follows:
\begin{align}
    \text{CP}_N(q,\mathbf{v}) X_i|\mathbf{e}\rangle &= \text{CP}_N(q,\mathbf{v}) |\mathbf{e} \oplus \mathbf{b}^n_i\rangle\\
    &= \begin{cases}\omega^q|\mathbf{e} \oplus \mathbf{b}^n_i\rangle & \text{ if } \mathbf{v} \preccurlyeq (\mathbf{e} \oplus \mathbf{b}^n_i)\\|\mathbf{e} \oplus \mathbf{b}^n_i\rangle & \text{ otherwise}.\end{cases}
\end{align}
Since by assumption $\mathbf{v}[i] = 1$, a phase of $\omega^q$ is applied $\iff \mathbf{v} \preccurlyeq (\mathbf{e} \oplus \mathbf{b}^n_i) \iff \mathbf{e}[i] = 0\text{ AND } (\mathbf{v} \oplus \mathbf{b}^n_i) \preccurlyeq \mathbf{e}$. Now consider the RHS and assume $\mathbf{e}[i] = 0 \text{ AND } (\mathbf{v} \oplus \mathbf{b}^n_i) \preccurlyeq \mathbf{e}$. In this case, we do not have $\mathbf{v}\preccurlyeq \mathbf{e}$ because $\mathbf{v}[i] = 1$ but $\mathbf{e}[i] = 0$. Hence:
\begin{align}
    X_i\text{CP}_N(-q,\mathbf{v})\text{CP}_N(q,\mathbf{v}\oplus \mathbf{b}^n_i)|\mathbf{e}\rangle &= \omega^{q}X_i\text{CP}_N(-q,\mathbf{v})|\mathbf{e}\rangle\\
    &= \omega^{q}X_i|\mathbf{e}\rangle = \omega^{q}|\mathbf{e}\oplus \mathbf{b}^n_i\rangle 
\end{align}
We now show that all other cases result in a trivial phase. Assume $\mathbf{e}[i] = 1 \text{ AND } (\mathbf{v} \oplus \mathbf{b}^n_i) \preccurlyeq \mathbf{e}$. In this case, $\mathbf{v}\preccurlyeq \mathbf{e}$ and so:
\begin{align}
    X_i\text{CP}_N(-q,\mathbf{v})\text{CP}_N(q,\mathbf{v}\oplus \mathbf{b}^n_i)|\mathbf{e}\rangle &= \omega^{q}X_i\text{CP}_N(-q,\mathbf{v})|\mathbf{e}\rangle\\
    &= \omega^{q}\omega^{-q}X_i|\mathbf{e}\rangle = |\mathbf{e}\oplus \mathbf{b}^n_i\rangle 
\end{align}

Now assume that $(\mathbf{v} \oplus \mathbf{b}^n_i) \preccurlyeq \mathbf{e}$ is not true. In this case, we can never have $\mathbf{v} \preccurlyeq \mathbf{e}$ and so neither of the controlled-phase operators apply a phase, regardless of the value of $\mathbf{e}[i]$. Hence the LHS and RHS have the same action on computational basis elements and the result follows.
\end{proof}

\begin{example}[Commutation Relation for Controlled-Phase Operators]\label{eg_CP_conj}
Using \Cref{prop_CP_relation}, we can conjugate controlled-phase operators by strings of X operators and vice versa. We first compute $\text{CS}_{01}X_1\text{CS}_{01}^{-1}$ where $\text{CS}_{01}$ is a controlled-$S$ operator on qubits 0 and 1. Using the notation of \Cref{eq_CP_def}:
\begin{align}
\text{CS}_{01}X_1\text{CS}_{01}^{-1} &= \text{CP}_8(4, 11) X_1 \text{CP}_8(-4, 11)\\
&= X_1 \text{CP}_8(-4, 11) \text{CP}_8(4, 10) \text{CP}_8(-4, 11)\\
&= X_1 \text{CP}_8(-8, 11) \text{CP}_8(4, 10)\\
&= X_1 \text{CZ}_{01}S_0
\end{align}
We now compute $(X_0X_1X_2) \text{CCZ}_{012} (X_0X_1X_2)^{-1}$. Using \Cref{eq_CP_relation} with $\mathbf{x} = \mathbf{v} = 111$, and letting $\mathbf{w} := \mathbf{u} \oplus \mathbf{v}$:
\begin{align}
    (X_0X_1X_2) \text{CCZ}_{012} (X_0X_1X_2)^{-1} &= 
    XP_2(0|111|\mathbf{0})\text{CP}_8(8, 111) XP_2(0|111|\mathbf{0})\\
    &=  \prod_{\mathbf{u} \preccurlyeq 111}\text{CP}_8(8 \cdot (-1)^{3+\text{wt}(\mathbf{u})},\mathbf{v}\oplus\mathbf{u})\\
    &= \prod_{0 \le \text{wt}(\mathbf{w}) \le 3}\text{CP}_8(8,\mathbf{w})\\
     &= CP_8(8, \mathbf{0}) \prod_{0 < \text{wt}(\mathbf{w}) \le 3}\text{CP}_8(8,\mathbf{w})\\
     &= -Z_0Z_1Z_2\text{CZ}_{01}\text{CZ}_{02}\text{CZ}_{12}\text{CCZ}_{012}
\end{align}
Interactive versions of these examples are available in the \href{https://github.com/m-webster/CSSLO/blob/main/notebooks/10.2_CP_commutation.ipynb}{linked Jupyter notebook}.
\end{example}

\section{Additional Details for Logical Operator Algorithms}

This Appendix provides further details on the various logical operator algorithms.  
We first prove results that reduce the complexity of the logical action and logical operator test algorithms of \Cref{sec_ker_search,sec_logical_action,sec_LO_test}. 
We then show how to calculate valid $\mathbf{z}$ vectors that result in diagonal operators that commute with the $X$-checks up to a logical identity for use in \Cref{sec_comm_LO}. 
We then demonstrate a method for more efficiently searching for depth-one logical operators composed of multi-qubit controlled phase gates for use in \Cref{sec_depth_one_algorithm}.
We then show that the embedding operator of \Cref{sec_embedded_codes} acts as a group homomorphism on the group generated by phase-rotation and Pauli $X$ operators. 
Finally, we show that the canonical form of \Cref{sec_CP_canonical} results in a logical operator with the required action.

\subsection{Reducing Complexity of Logical Action Algorithms}

In this Section we show how to reduce the complexity of algorithms which work with the logical action of diagonal XP operators on the canonical codewords.
If $B:= XP_N(0|\mathbf{0|z})$ is a diagonal logical operator of precision $N:=2^t$, then the action of $B$ on the computational basis vectors $\mathbf{e_{uv}} := \mathbf{u}S_X + \mathbf{v}L_X$ making up the canonical codewords of \Cref{eq_codewords} can be written as $B\ket{\mathbf{e_{uv}}} = \omega^{2\mathbf{e_{uv}} \cdot \mathbf{z}}\ket{\mathbf{e_{uv}}}$. In the proposition below, we show that the phase component $\mathbf{e_{uv}} \cdot \mathbf{z}$ is completely determined by terms of form $\mathbf{e_{u'v'}} \cdot \mathbf{z}$ where $\text{wt}(\mathbf{u'}) + \text{wt}(\mathbf{v'}) \le t$. As a result, when working with logical actions, we do not need to consider all $2^{k+r}$ of the $\mathbf{e_{uv}}$ vectors, just a limited set which is of size polynomial in $k$ and $r$. This reduces the computational complexity of the algorithms in  \Cref{sec_ker_search} and \Cref{sec_logical_action}. 

\begin{proposition}\label{prop_action_weight}
Let $N := 2^t$ and $\mathbf{z} \in \mathbb{Z}_N^n$. The phase component $\mathbf{e_{uv}} \cdot \mathbf{z}$ can be expressed as a $\mathbb{Z}_N$ linear combination of $\mathbf{e_{u'v'}} \cdot \mathbf{z}$ where $\text{wt}(\mathbf{u'}) + \text{wt}(\mathbf{v'}) \le t$.

\end{proposition}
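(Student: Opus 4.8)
The plan is to derive the claim from the sum/product duality of binary vectors (\Cref{prop_product_sum}), applied twice. First I would stack the checks and logicals into a single binary matrix: let $M$ be the $(r+k)\times n$ matrix whose first $r$ rows are those of $S_X$ and whose last $k$ rows are those of $L_X$. Writing $\mathbf{w} := (\mathbf{u}\,|\,\mathbf{v}) \in \mathbb{Z}_2^{r+k}$, we have $\mathbf{e_{uv}} = \mathbf{u}S_X + \mathbf{v}L_X = \mathbf{w}M \bmod 2 = \mathbf{s_w}(M)$ in the notation of \Cref{prop_product_sum}, and $\text{wt}(\mathbf{w}) = \text{wt}(\mathbf{u}) + \text{wt}(\mathbf{v})$. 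So it suffices to show that $\mathbf{s_w}(M)\cdot\mathbf{z} \bmod N$ is a $\mathbb{Z}_N$-linear combination of terms $\mathbf{s_b}(M)\cdot\mathbf{z}$ with $\text{wt}(\mathbf{b}) \le t$, since each such $\mathbf{s_b}(M)$ is itself an $\mathbf{e_{u'v'}}$ with $\text{wt}(\mathbf{u'}) + \text{wt}(\mathbf{v'}) = \text{wt}(\mathbf{b}) \le t$ (split $\mathbf{b}$ into its first $r$ and last $k$ coordinates).

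Next I would apply the first identity of \Cref{prop_product_sum} to expand $\mathbf{s_w}(M) = \sum_{\mathbf{0}\ne\mathbf{a}\preccurlyeq\mathbf{w}}(-2)^{\text{wt}(\mathbf{a})-1}\mathbf{p_a}(M)$ over the integers, take the dot product with $\mathbf{z}$, and reduce modulo $N = 2^t$: every term with $\text{wt}(\mathbf{a}) \ge t+1$ has coefficient divisible by $2^t = N$ and drops out, leaving
\begin{align}
\mathbf{e_{uv}}\cdot\mathbf{z} \equiv \sum_{\substack{\mathbf{0}\ne\mathbf{a}\preccurlyeq\mathbf{w}\\ \text{wt}(\mathbf{a})\le t}}(-2)^{\text{wt}(\mathbf{a})-1}\big(\mathbf{p_a}(M)\cdot\mathbf{z}\big) \pmod N.
\end{align}
Then for each remaining $\mathbf{a}$ with $1 \le \text{wt}(\mathbf{a}) \le t$, I would apply the second identity of \Cref{prop_product_sum} (with $L = M$) to write $2^{\text{wt}(\mathbf{a})-1}\mathbf{p_a}(M) = \sum_{\mathbf{0}\ne\mathbf{b}\preccurlyeq\mathbf{a}}(-1)^{\text{wt}(\mathbf{b})-1}\mathbf{s_b}(M)$, hence $(-2)^{\text{wt}(\mathbf{a})-1}\mathbf{p_a}(M) = (-1)^{\text{wt}(\mathbf{a})-1}\sum_{\mathbf{0}\ne\mathbf{b}\preccurlyeq\mathbf{a}}(-1)^{\text{wt}(\mathbf{b})-1}\mathbf{s_b}(M)$. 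Substituting this into the displayed congruence expresses $\mathbf{e_{uv}}\cdot\mathbf{z} \bmod N$ as an integer (hence $\mathbb{Z}_N$) linear combination of $\mathbf{s_b}(M)\cdot\mathbf{z} = \mathbf{e_{u'v'}}\cdot\mathbf{z}$, where $\mathbf{b} \preccurlyeq \mathbf{a}$ forces $\text{wt}(\mathbf{u'}) + \text{wt}(\mathbf{v'}) = \text{wt}(\mathbf{b}) \le \text{wt}(\mathbf{a}) \le t$, which is exactly the claimed form.

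The only arithmetic input is the divisibility $2^t \mid (-2)^{\text{wt}(\mathbf{a})-1}$ whenever $\text{wt}(\mathbf{a}) > t$; everything else is bookkeeping of the two nested sums. The main point to get right is that after the second duality step the index $\mathbf{b}$ genuinely ranges over subsets of a weight-$\le t$ support, so that $\mathbf{s_b}(M)$ is a legitimate low-weight $\mathbf{e_{u'v'}}$ rather than something of uncontrolled weight; the factorisation of $\mathbf{b}$ into a $\mathbb{Z}_2^r$-part and a $\mathbb{Z}_2^k$-part makes this immediate. (The case $\text{wt}(\mathbf{w}) \le t$ is trivial — take $\mathbf{e_{u'v'}} = \mathbf{e_{uv}}$ — and is also subsumed by the computation above.) I do not anticipate any obstacle beyond this indexing care.
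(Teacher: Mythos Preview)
Your proposal is correct and follows essentially the same route as the paper: both combine $S_X$ and $L_X$ into a single matrix, apply the first identity of \Cref{prop_product_sum} to expand $\mathbf{e_{uv}}=\mathbf{s_w}(M)$ into products, truncate at weight $t$ using $N=2^t$, and then apply the second identity to rewrite the surviving products back as low-weight sums. The only differences are cosmetic (variable names and how explicitly the $(-2)^{\text{wt}(\mathbf{a})-1}=(-1)^{\text{wt}(\mathbf{a})-1}2^{\text{wt}(\mathbf{a})-1}$ factorisation is spelled out).
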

\begin{proof}
Let $G_X:= \begin{pmatrix}S_X\\L_X\end{pmatrix}$ and let $\mathbf{a}:=(\mathbf{u|v})$. 
Noting that $\mathbf{e_{uv}} = \mathbf{a}G_X = \mathbf{s_a}(G_X)$ and applying \Cref{prop_product_sum} we have:
\begin{align}
    \mathbf{a}G_X = \mathbf{s_a}(G_X)  &= \sum_{\mathbf{0} \ne \mathbf{b} \preccurlyeq \mathbf{a}} (-2)^{\text{wt}(\mathbf{b})-1} \mathbf{p_b}(G_X)
\end{align}
Terms with $\text{wt}(\mathbf{b}) > t$ disappear modulo $N = 2^t$. Using the linearity of dot product and expressing the $\mathbf{p_b}(G_X)$ in terms of $\mathbf{s_c}(G_X)$:
\begin{align}
\mathbf{a}G_X \cdot \mathbf{z} \mod N &= \Big(\sum_{\substack{\mathbf{0} \ne \mathbf{b}\preccurlyeq \mathbf{a}\\\text{wt}(\mathbf{a})\le t}}(-2)^{\text{wt}(\mathbf{b})-1}\mathbf{p_b}(G_X)\Big) \cdot \mathbf{z} \mod N\\
&= \Big(\sum_{\substack{\mathbf{0} \ne \mathbf{b}\preccurlyeq \mathbf{a}\\\text{wt}(\mathbf{a})\le t}}(-2)^{\text{wt}(\mathbf{b})-1}\sum_{\mathbf{0} \ne \mathbf{c}\preccurlyeq \mathbf{b}}(-1)^{\text{wt}(\mathbf{c})-1}\mathbf{s_c}(G_X)\Big) \cdot \mathbf{z} \mod N\\
&= \Big(\sum_{\substack{\mathbf{0} \ne \mathbf{b}\preccurlyeq \mathbf{a}\\\text{wt}(\mathbf{a})\le t}}(-2)^{\text{wt}(\mathbf{b})-1}\sum_{\mathbf{0} \ne \mathbf{c}\preccurlyeq \mathbf{b}}(-1)^{\text{wt}(\mathbf{c})-1}\mathbf{c}G_X\cdot \mathbf{z} \Big) \mod N
\end{align}
As $\text{wt}(\mathbf{c}) \le \text{wt}(\mathbf{b}) \le \text{wt}(\mathbf{a}) \le t$, the result follows.
\end{proof}

\subsection{Test for Diagonal Logical XP Operators}
In this Section, we prove that the algorithm in \Cref{sec_LO_test} correctly identifies diagonal logical operators of XP form. We first show that if the group commutator of an operator $B$ with each of the logical identities $A_1, A_2$ is a logical identity, then the group commutator of the product $A_1A_2$ is a logical identity. 
 

\begin{proposition}[Commutators of Logical Identities]\label{prop_comm_X_checks}
Let $\mathcal{I}_\text{XP}$ be the logical identity group as defined in \Cref{sec_XP_groups} and let $A_1, A_2 \in \mathcal{I}_\text{XP}$. Let $B$ be an XP operator such that $[[A_1, B]]$ and $[[A_2, B]] \in \mathcal{I}_\text{XP}$. Then $[[A_1A_2, B]] \in\mathcal{I}_\text{XP}$
\end{proposition}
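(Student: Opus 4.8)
The plan is to reduce the statement to a standard group-commutator identity together with the fact that $\mathcal{I}_\text{XP}$ is a subgroup of $\mathcal{XP}^n_N$. Writing the group commutator as $[[A,B]] := ABA^{-1}B^{-1}$ as in \Cref{sec_LO_test}, I would first record the identity
\begin{align}
[[A_1 A_2, B]] = A_1\,[[A_2,B]]\,A_1^{-1}\cdot[[A_1,B]],
\end{align}
which holds in any group. It is verified by substituting $[[A_2,B]] = A_2 B A_2^{-1}B^{-1}$ and $[[A_1,B]] = A_1 B A_1^{-1}B^{-1}$ into the right-hand side: the cancellations $A_1^{-1}A_1$ and $B^{-1}B$ reduce it to $A_1 A_2 B A_2^{-1} A_1^{-1} B^{-1} = (A_1A_2)B(A_1A_2)^{-1}B^{-1}$, which is the left-hand side.

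With this identity in hand the argument is immediate. By hypothesis $[[A_1,B]]\in\mathcal{I}_\text{XP}$ and $[[A_2,B]]\in\mathcal{I}_\text{XP}$, and $A_1\in\mathcal{I}_\text{XP}$; since $\mathcal{I}_\text{XP}$ is a group, the conjugate $A_1\,[[A_2,B]]\,A_1^{-1}$ again lies in $\mathcal{I}_\text{XP}$, and hence so does its product with $[[A_1,B]]$. Therefore $[[A_1A_2,B]]\in\mathcal{I}_\text{XP}$. A routine induction on the number of factors then gives $[[A_1A_2\cdots A_m,B]]\in\mathcal{I}_\text{XP}$ whenever each $[[A_j,B]]\in\mathcal{I}_\text{XP}$, which is the form actually needed so that the test of \Cref{sec_LO_test} may be applied to a full generating set of the logical identity group rather than to every element of it.

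There is no substantive obstacle here; the only point deserving a word of care is the verification that $\mathcal{I}_\text{XP}$ is genuinely a subgroup of $\mathcal{XP}^n_N$ (so that it is closed under the conjugation and multiplication used above), but this follows directly from its definition as the set of XP operators of precision $N$ fixing every element of the codespace: if two such operators each fix every codeword, then so do their product and their inverses.
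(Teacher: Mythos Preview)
Your proof is correct and follows essentially the same approach as the paper: both use the identity $[[A_1A_2,B]] = A_1\,[[A_2,B]]\,A_1^{-1}\cdot[[A_1,B]]$ and then close up under conjugation and multiplication in $\mathcal{I}_\text{XP}$. Your version is in fact a bit cleaner, since you state the commutator identity explicitly and justify it in one line, whereas the paper carries out the same manipulation step by step (and its opening remark that $A_1A_2 = C A_2A_1$ for some $C\in\mathcal{I}_\text{XP}$ is never actually used).
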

\begin{proof}
As $\mathcal{I}_\text{XP}$ is a group, $A_1, A_2 \in \mathcal{I}_\text{XP} \implies [[A_1, A_2]] \in \mathcal{I}_\text{XP}$. Hence we can write $A_1 A_2 = C A_2A_1$ for some $C \in \mathcal{I}_\text{XP}$.
Calculating $[[A_1A_2, B]]$, for some $C, C', C'' \in\mathcal{I}_\text{XP}$:
\begin{align}
[[A_1A_2, B]] &= A_1A_2BA_2^{-1}A_1^{-1}B^{-1}\\
&= A_1(A_2BA_2^{-1}B^{-1}) B A_1^{-1}B^{-1}\\
&= A_1 C B A_1^{-1}B^{-1}\\ 
&= C' A_1  B A_1^{-1}B^{-1}\\
&= C' C'' \in \mathcal{I}_\text{XP}
\end{align}
\end{proof}
We now show that for a diagonal XP operator $B$, it is sufficient to check group commutators with the $r := |S_X|$  operators of form $XP_N(0|\mathbf{x}_i|\mathbf{0})$ where $\mathbf{x}_i$ are the rows of the $X$-checks $S_X$.
\begin{proposition}
Let $\mathbf{C}$ be a CSS code with $X$-checks $S_X$ and logical identity XP group $\mathcal{I}_{XP}$ of precision $N$.
    Let $B:= XP_N(0|\mathbf{0|z})$ be a diagonal XP operator.  $B$ is a logical operator if and only if $[[XP_N(0|\mathbf{x}_i|\mathbf{0}),B]] \in \mathcal{I}_\text{XP}$ for all rows $\mathbf{x}_i$ of $S_X$.
    \begin{proof}
$B$ is a logical operator if and only if $[[A,B]] \in \mathcal{I}_\text{XP}$ for all $A \in \mathcal{I}_\text{XP}$. If $B$ is a logical operator then $[[XP_N(0|\mathbf{x}_i|\mathbf{0}),B]] \in \mathcal{I}_\text{XP}$ because $XP_N(0|\mathbf{x}_i|\mathbf{0}) \in \mathcal{I}_\text{XP}$. 

Conversely, assume $[[XP_N(0|\mathbf{x}_i|\mathbf{0}),B]] \in \mathcal{I}_\text{XP}$ for all rows $\mathbf{x}_i$ of $S_X$. Let $K_M$ be the matrix whose rows are a generating set of the $Z$-components of the logical identities as defined in \Cref{sec_LI}. Any logical identity $A$ can be written as a product of terms of form $XP_N(0|\mathbf{x_i|0})$ and $XP_N(0|\mathbf{0}|\mathbf{z}_j)$ where $\mathbf{z}_j$ is a row of $K_M$.
By assumption, $[[XP_N(0|\mathbf{x}_i|\mathbf{0}),B]] \in \mathcal{I}_\text{XP}$ and $[[XP_N(0|\mathbf{0}|\mathbf{z}_j),B]] = I$. Due to \Cref{prop_comm_X_checks}, the commutator of the product is a logical identity and the result follows.
\end{proof}
\end{proposition}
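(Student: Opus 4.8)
The plan is to reduce the statement to two facts already available: the commutator characterisation of logical operators (Proposition~E.2 of Ref.~\cite{XP}), which states that $B$ is a logical operator if and only if $[[A,B]] \in \mathcal{I}_\text{XP}$ for every $A \in \mathcal{I}_\text{XP}$; and the generating structure of the logical identity group of a CSS code, namely that every element of $\mathcal{I}_\text{XP}$ factorises as a product of the non-diagonal generators $XP_N(0|\mathbf{x}_i|\mathbf{0})$, for $\mathbf{x}_i$ a row of $S_X$, together with the diagonal generators $XP_N(0|\mathbf{0}|\mathbf{z}_j)$, for $\mathbf{z}_j$ a row of the matrix $K_M$ produced by Algorithm~1 (\Cref{sec_LI}).

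First I would settle the forward implication, which is essentially immediate: each $X$-check operator $XP_N(0|\mathbf{x}_i|\mathbf{0})$ lies in the stabiliser group, hence in $\mathcal{I}_\text{XP}$, so if $B$ is a logical operator the commutator characterisation directly gives $[[XP_N(0|\mathbf{x}_i|\mathbf{0}),B]] \in \mathcal{I}_\text{XP}$. For the converse, assume $[[XP_N(0|\mathbf{x}_i|\mathbf{0}),B]] \in \mathcal{I}_\text{XP}$ for all rows $\mathbf{x}_i$ of $S_X$. Since $B = XP_N(0|\mathbf{0}|\mathbf{z})$ is diagonal it commutes with every diagonal XP operator, so $[[XP_N(0|\mathbf{0}|\mathbf{z}_j),B]] = I \in \mathcal{I}_\text{XP}$ for each row $\mathbf{z}_j$ of $K_M$. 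Thus the group commutator of $B$ with each generator of $\mathcal{I}_\text{XP}$ is a logical identity. Writing an arbitrary $A \in \mathcal{I}_\text{XP}$ as a word in these generators, I would induct on the word length, applying \Cref{prop_comm_X_checks} at each step with $A_1$ the leading generator and $A_2$ the remaining product, to conclude $[[A,B]] \in \mathcal{I}_\text{XP}$ for all $A \in \mathcal{I}_\text{XP}$; Proposition~E.2 of Ref.~\cite{XP} then gives that $B$ is a logical operator.

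The step I expect to need the most care is the generating claim — that $\mathcal{I}_\text{XP}$ for a CSS code is generated precisely by the $r$ operators $XP_N(0|\mathbf{x}_i|\mathbf{0})$ together with the diagonal operators coming from the rows of $K_M$. This is where the CSS hypothesis enters: one must argue that no XP operator whose $X$-component lies outside $\langle S_X\rangle$ can fix the codespace (such an operator either sends a canonical codeword to a superposition of distinct computational basis states or acts nontrivially on the $L_X$ logical degrees of freedom), and that for a fixed $X$-component the remaining freedom in the $Z$-component is exactly the Howell span of $K_M$. Rather than reprove this, I would invoke it as the CSS specialisation of the logical identity construction of Section~6.2 of Ref.~\cite{XP}; the rest of the argument is then just the short induction described above together with the observation that $B$ being diagonal trivialises its commutators with the diagonal generators.
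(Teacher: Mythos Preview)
Your proposal is correct and follows essentially the same approach as the paper's proof: the forward direction via Proposition~E.2, the converse by noting $B$ commutes with the diagonal generators from $K_M$ and by assumption has logical-identity commutators with the $X$-check generators, then extending to arbitrary products via \Cref{prop_comm_X_checks}. You are slightly more explicit than the paper about the induction on word length and about the generating claim for $\mathcal{I}_\text{XP}$ requiring the CSS structure (the paper simply asserts the factorisation), but the logical skeleton is identical.
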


\subsection{Algorithm to Determine Commutators of a Given $X$-Check}\label{sec_comm_x}
In the method of \Cref{sec_comm_LO}, for a given $X$-check $\mathbf{x} \in S_X$ we seek all $Z$-components $\mathbf{z} \in \mathbb{Z}_N^n$ such that the group commutator $[[XP_N(0|\mathbf{0|z}),XP_N(0|\mathbf{x|0})]]$ is a logical identity. This reduces to solving for $\mathbf{z}$ such that both $\mathbf{x\cdot z} = 0 \mod N$ and $2\mathbf{xz} \in \langle K_M\rangle_{\mathbb{Z}_N}$ where the rows of $K_M$ are the $Z$-components of the diagonal logical identities as in \Cref{sec_LI}. In this Section, we show how to solve for these constraints using linear algebra modulo $N$. The method is as follows:

Without loss of generality, reorder qubits so that the first $m$ components of $\mathbf{x}$ are one and the remaining $n-m$ components are zero. In the matrices of form $(\mathbf{a|b})$ below, the first component has $m$ columns corresponding to the non-zero components of $\mathbf{x}$, the next $n-m$ columns correspond to the zero components of $\mathbf{x}$. For $\mathbf{v} \in \mathbb{Z}_N$, let $\mathbf{v \cdot 1} := (\sum_{0 \le i < n} \mathbf{v}[i]) \mod N$.
\begin{enumerate}
    \item The vector $2\mathbf{xz}$ is of the form $(2\mathbf{u|0})$ where $\mathbf{u}$ is of length $m$ and the row span of  $C_0 := (2I_m | 0 )$ over $\mathbb{Z}_N$ represents all vectors of this form;
    \item Group commutators which are also logical identities are in $\langle C_0 \rangle_{\mathbb{Z}_N} \bigcap \langle K_M \rangle_{\mathbb{Z}_N}$ and a Howell basis $C_1$ is calculated via the intersection of spans method in Appendix 4.1 of \cite{XP};
    \item The rows of $C_1$ are of form $(\mathbf{u|0}) \in \langle K_M \rangle_{\mathbb{Z}_N}$ for $\mathbf{u}$ divisible by $2$ modulo $N$. Now let $\mathbf{v} := \mathbf{u}/2$. Because $2(\mathbf{v} + N/2) = 2\mathbf{v} = \mathbf{u} \mod N$,  $(\mathbf{v}\cdot \mathbf{1}) \mod N$ is either $0$ or $N/2$. Adjust the $m$th component of $\mathbf{v}$ by subtracting $(\mathbf{v}\cdot \mathbf{1}) \mod N$. Let $C_2$ be the matrix formed from rows of form $(\mathbf{v|0})$;
    \item Adding pairs of $N/2$ to the first $m$ components does not change $2\mathbf{x}\mathbf{z}$ or $\mathbf{x}\cdot \mathbf{z} \mod N$. Let $A$ be $I_{m-1}$ with a column of ones appended. Add the rows $(N/2 \cdot A|\mathbf{0})$ to $C_2$;
    \item Columns $i$ where $\mathbf{x}[i] = 0$ can have arbitrary values, as these do not contribute to $2\mathbf{x}\mathbf{z}$ or $\mathbf{x}\cdot \mathbf{z}$. Add the rows $(\mathbf{0}|I_{n-m})$ to $C_2$;
    \item Return qubits to their original order. The valid $Z$-components are given by the row span of $C_2$ over $\mathbb{Z}_N$.
\end{enumerate}

\subsection{Algorithm for Depth-One Operators}\label{app_depth_one_algorithm}
In the depth-one algorithm, we find transversal logical operators by starting with a level-$t$ logical XP operator acting on the embedded codespace, then multiplying by all possible elements of the diagonal logical XP group of the embedded code. If the order of the diagonal logical XP group is large, this method can be computationally expensive. In this Section, we demonstrate an algorithm for more efficiently exploring the search space and checking if an operator acting on the embedded codespace acts transversally on the codespace.  We use the residue function defined in Eq. 142 of \cite{XP} - we say that $\mathbf{z}' = \text{Res}_{\mathbb{Z}_N}(K_L,\mathbf{z})$ if:
\begin{align}
    \text{How}_{\mathbb{Z}_N}\begin{pmatrix}
        1 & \mathbf{z}\\
        0 & K_L
    \end{pmatrix} = \begin{pmatrix}
        1 & \mathbf{z}'\\
        0 & K_L'
    \end{pmatrix}
\end{align}
The input to this algorithm is the following:
\begin{itemize}
    \item A binary matrix $V$ for the embedding operator $\mathcal{E}_V$;
    \item A matrix $K_L$ representing the $Z$-components of the generators of the diagonal logical XP group of the embedded code (see \Cref{sec_comm_LO}).
    \item A row vector $\mathbf{z}$ of $K_L$ which represents the $Z$-component of a non-trivial logical operator at level $t$ of the diagonal Clifford hierarchy acting on the embedded codespace. This corresponds to a product of phase-rotation gates acting on the original codespace;

\end{itemize}
The output is a depth-one implementation of a non-trivial logical operator at level $t$ of the diagonal Clifford hierarchy, or \texttt{FALSE} if no such operator exists.
The algorithm method is as follows:
\begin{enumerate}
    \item Remove $\mathbf{z}$ from $K_L$;
    \item Let \texttt{todo} be a list containing only the all ones vector of length $|V|$. The vectors $\mathbf{a}$ in \texttt{todo} have columns indexed by rows of $V$ and represent partial partitions of the $n$ qubits. The value of $\mathbf{a}[\mathbf{v}]$ encodes the following information:
    \begin{itemize}
        \item $0$: $\text{supp}(\mathbf{v})$ is \text{not} a partition;
        \item $1$: whether $\text{supp}(\mathbf{v})$ is a partition has not yet been determined;
        \item $2$: $\text{supp}(\mathbf{v})$ is a partition. For depth-one operators, any $\mathbf{u}$ with overlapping support (i.e.\ $\mathbf{u}\cdot \mathbf{v} \ge 0$ is \text{not} a partition).
    \end{itemize}
    \item While \texttt{todo} is not empty:
    \begin{enumerate}
    \item Pop the vector $\mathbf{a}$ from the end of \texttt{todo};
        \item Reorder the columns of $\mathbf{z}$ and $K_L$ by moving the columns with $\mathbf{a}[\mathbf{v}] = 0$ to the far left, the columns with $\mathbf{a}[\mathbf{v}] = 1$ to the middle and the columns with $\mathbf{a}[\mathbf{v}] = 2$ to the far right. 
        \item Calculate $\mathbf{z}' := \text{Res}_{\mathbb{Z}_N}(K_L,\mathbf{z})$. If $\mathbf{z}'[\mathbf{v}] > 0$ for any $\mathbf{v}$ where $\mathbf{a}[\mathbf{v}] = 0$ then the partition is not valid. This is because taking the residue will eliminate the leftmost entries of $\mathbf{z}$ if possible by adding rows of $K_L$;
        \item If the partition is valid, find the the first $\mathbf{v}$ such that $\mathbf{z}'[\mathbf{v}] > 0$ and $\mathbf{a}[\mathbf{v}] = 1$;
        \item If there is no such $\mathbf{v}$, we have a depth-one implementation. Return the qubits to their original order and return $\mathbf{z}'$;
        \item Otherwise, let $\mathbf{a}_1$  be the same as $\mathbf{a}$ but where $\mathbf{a}[\mathbf{v}] = 2$ and $\mathbf{a}[\mathbf{u}] = 0$ for all $\mathbf{u}$ such that $\mathbf{u}\cdot \mathbf{v} > 0$. Let $\mathbf{a}_2$ be the same as $\mathbf{a}$ but with $\mathbf{a}[\mathbf{v}] = 2$. The vectors $\mathbf{a}_1$ and $\mathbf{a}_2$ represent the two possible scenarios where either $\mathbf{v}$ is or is not a partition - append them to \texttt{todo};
        \item Return to Step 3.
    \end{enumerate}  
    \item Return \texttt{FALSE} as all possible configurations have been explored.
\end{enumerate}
The above algorithm yields depth-one operators composed of physical phase-rotation gates. If implementations using physical controlled-phase gates are required, convert $\mathbf{z}$ and $K_L$ to controlled-phase representations using the method in \Cref{sec_duality}. If we require a logical operator with exactly the same action as the original operator with $Z$-component $\mathbf{z}$, substitute the $Z$-components of the diagonal logical identity generators $K_M$ of \Cref{sec_LI} for $K_L$. 

\subsection{Representation of  Controlled-Phase Operators as XP Operators via Embedding Operator}
In this Section we prove that the phase-rotation operators of \Cref{sec_phase_rotation} acting on a codespace correspond to diagonal XP operators in the embedded codespace defined in \Cref{sec_embedding_operators}. We do this by demonstrating that the mapping of phase-rotation operators acting on the codespace to XP operators in the embedded codespace of \Cref{sec_embedding_operators} is a group homomorphism. 

\begin{proposition}[Embedding operator induces a group homomorphism]\label{prop_emb_hom}
The embedding operator $\mathcal{E}_V$ defined as follows is a group homomorphism between $\mathcal{XRP}^V_N$ and $\mathcal{XP}_N^{|V|}$:
\begin{align}
    \mathcal{E}_V(\text{XRP}^V_N(p|\mathbf{x}|\mathbf{q})) := XP_N(p|\mathbf{x}V^T|\mathbf{q})
\end{align}
\end{proposition}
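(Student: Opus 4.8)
The plan is to verify directly that $\mathcal{E}_V$ respects the group multiplication. Since $\mathcal{XRP}^V_N$ and $\mathcal{XP}^{|V|}_N$ are groups and, as noted in \Cref{sec_embedding_operators}, every element of $\mathcal{XRP}^V_N$ is uniquely specified by its component triple $(p|\mathbf{x}|\mathbf{q})$ with $p\in\mathbb{Z}_{2N}$, $\mathbf{x}\in\mathbb{Z}_2^n$, $\mathbf{q}\in\mathbb{Z}_N^{|V|}$, the map $\mathcal{E}_V$ is well-defined by its stated formula, and it suffices to show $\mathcal{E}_V(g_1 g_2) = \mathcal{E}_V(g_1)\mathcal{E}_V(g_2)$ for all $g_1,g_2\in\mathcal{XRP}^V_N$; preservation of the identity and of inverses then follows automatically.

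First I would work out the multiplication rule in $\mathcal{XRP}^V_N$ in component form. Writing $g_a = \text{XRP}^V_N(p_a|\mathbf{x}_a|\mathbf{q}_a) = \omega^{p_a}\big(\prod_i X_i^{\mathbf{x}_a[i]}\big)\big(\prod_{\mathbf{v}\in V}\text{RP}_N(2\mathbf{q}_a[\mathbf{v}],\mathbf{v})\big)$, the product $g_1g_2$ is obtained by commuting the phase-rotation factors of $g_1$ rightward past the $X$-string $\prod_i X_i^{\mathbf{x}_2[i]}$ of $g_2$. The only tool needed is \Cref{prop_RP_algebra} (\Cref{eq_relation_RP}): passing a single $X_i$ through $\text{RP}_N(q,\mathbf{v})$ does nothing if $\mathbf{v}[i]=0$, and yields $\omega^q X_i\,\text{RP}_N(-q,\mathbf{v})$ if $\mathbf{v}[i]=1$. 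A short induction on the number of overlapping indices then shows that moving the whole string $\prod_i X_i^{\mathbf{x}_2[i]}$ past $\text{RP}_N(2\mathbf{q}_1[\mathbf{v}],\mathbf{v})$ replaces it by $\text{RP}_N\big(2(-1)^{\mathbf{x}_2\cdot\mathbf{v}}\mathbf{q}_1[\mathbf{v}],\mathbf{v}\big)$ and contributes the scalar $\omega^{2\mathbf{q}_1[\mathbf{v}]\,((\mathbf{x}_2\cdot\mathbf{v})\bmod 2)}$ (each pass flips the sign of the parameter, so only the parity of $\mathbf{x}_2\cdot\mathbf{v}$ survives). Using $X_i^2=I$ to get $\prod_i X_i^{\mathbf{x}_1[i]}\prod_i X_i^{\mathbf{x}_2[i]}=\prod_i X_i^{(\mathbf{x}_1\oplus\mathbf{x}_2)[i]}$ and the projector form of \Cref{prop_RP_projector} to get $\text{RP}_N(a,\mathbf{v})\text{RP}_N(b,\mathbf{v})=\text{RP}_N(a+b,\mathbf{v})$, I obtain
\begin{align}
g_1 g_2 = \text{XRP}^V_N\Big(p_1+p_2+2\!\!\sum_{\mathbf{v}\in V}\!\mathbf{q}_1[\mathbf{v}]\big((\mathbf{x}_2\cdot\mathbf{v})\bmod 2\big)\ \Big|\ \mathbf{x}_1\oplus\mathbf{x}_2\ \Big|\ \big((-1)^{\mathbf{x}_2\cdot\mathbf{v}}\mathbf{q}_1[\mathbf{v}]+\mathbf{q}_2[\mathbf{v}]\big)_{\mathbf{v}\in V}\Big),
\end{align}
with the phase taken mod $2N$ and the $Z$-type components mod $N$.

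Next I would compute $\mathcal{E}_V(g_1)\mathcal{E}_V(g_2)=XP_N(p_1|\mathbf{x}_1V^T|\mathbf{q}_1)\,XP_N(p_2|\mathbf{x}_2V^T|\mathbf{q}_2)$ using the XP multiplication rule of Table~4 of \cite{XP} (which is structurally identical, since \Cref{eq_relation_RP} mirrors \Cref{eq_XP_relation}): the $X$-components add mod $2$, the $Z$-components combine as $\mathbf{q}'[j]=(-1)^{(\mathbf{x}_2V^T)[j]}\mathbf{q}_1[j]+\mathbf{q}_2[j]$, and the phase picks up $2\sum_j\mathbf{q}_1[j]\,(\mathbf{x}_2V^T)[j]$. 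Matching the two expressions uses only the identity $(\mathbf{x}\,V^T)[\mathbf{v}]=\mathbf{x}\cdot\mathbf{v}\bmod 2$ (the definition of $V$, whose rows are the $\mathbf{v}\in V$) together with $\mathbb{Z}_2$-linearity of $\mathbf{x}\mapsto\mathbf{x}V^T$, which gives $\mathbf{x}_1V^T\oplus\mathbf{x}_2V^T=(\mathbf{x}_1\oplus\mathbf{x}_2)V^T$: since $(-1)^{(\mathbf{x}_2V^T)[\mathbf{v}]}=(-1)^{\mathbf{x}_2\cdot\mathbf{v}}$ and $\sum_j\mathbf{q}_1[j](\mathbf{x}_2V^T)[j]=\sum_{\mathbf{v}}\mathbf{q}_1[\mathbf{v}]((\mathbf{x}_2\cdot\mathbf{v})\bmod 2)$, the $X$-, $Z$-, and phase components all agree, so $\mathcal{E}_V(g_1g_2)=\mathcal{E}_V(g_1)\mathcal{E}_V(g_2)$ and $\mathcal{E}_V$ is a homomorphism.

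The main obstacle is the second step: keeping the accumulated $\omega$-phases and sign flips consistent while iterating \Cref{prop_RP_algebra}, and in particular observing that conjugation by an $X$-string never changes the support vector $\mathbf{v}$ of a phase-rotation factor. This last point is precisely what makes $\mathcal{XRP}^V_N$ a group and makes the component map well-defined as stated; once the $\mathcal{XRP}^V_N$ multiplication rule above is established, Steps~3--4 are routine matching against the known XP multiplication rule.
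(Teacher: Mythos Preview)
Your proof is correct, but the route differs from the paper's. The paper argues on generators: it verifies $\mathcal{E}_V(AB)=\mathcal{E}_V(A)\mathcal{E}_V(B)$ for each ordered pair $A,B$ drawn from $\{\omega I,X_i,\text{RP}_N(2,\mathbf{v})\}$, the nontrivial case being $\text{RP}_N(2,\mathbf{v})X_i$ with $\mathbf{v}[i]=1$, where \Cref{eq_relation_RP} and \Cref{eq_XP_relation} match up exactly. It then concludes that since the relations among generators are preserved, the map is a homomorphism. You instead derive the closed-form multiplication law for arbitrary elements of $\mathcal{XRP}^V_N$ in component coordinates and check it matches the XP multiplication rule under $\mathbf{x}\mapsto\mathbf{x}V^T$. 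Your approach is more explicit and sidesteps the question of whether the generator-pair checks constitute a complete set of relations; the paper's approach is shorter but leans on that implicit claim. Both ultimately rest on the same observation, namely that $(\mathbf{x}V^T)[\mathbf{v}]=\mathbf{x}\cdot\mathbf{v}\bmod 2$ intertwines the two commutation relations.
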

\begin{proof}
We prove this by considering generators of the group $\mathcal{XRP}^V_N$. Let $\mathbf{b}^n_i$ be the length $n$ binary vector which is zero apart from component $i$ which is one and consider $X_i, X_j$ for $0 \le i, j < n$:
\begin{align}
    \mathcal{E}_V(X_i X_j) &= \mathcal{E}_V\big(\text{XRP}_V(0|\mathbf{b}^n_i + \mathbf{b}^n_j|\mathbf{0})\big)\\
    &= XP_N(0|(\mathbf{b}^n_i + \mathbf{b}^n_j)V^T|\mathbf{0})\\
    &= XP_N(0|(\mathbf{b}^n_iV^T|\mathbf{0})XP_N(0|(\mathbf{b}^n_jV^T|\mathbf{0})\\
    &= \mathcal{E}_V(X_i)\mathcal{E}_V(X_j).
\end{align}
By a similar argument, $\mathcal{E}_V\big(\text{RP}_N(2,\mathbf{u})  \text{RP}_N(2,\mathbf{v})\big) = \mathcal{E}_V\big(\text{RP}_N(2,\mathbf{u})\big)\mathcal{E}_V\big(\text{RP}_N(2,\mathbf{v})\big)$ for $\mathbf{u}, \mathbf{v} \in V$.  Where $X$ operators precede diagonal operators we have:
\begin{align}
\mathcal{E}_V(X_i\text{RP}_N(2,\mathbf{v})) &= \mathcal{E}_V(\text{XRP}_V(0|\mathbf{b}^n_i|\mathbf{b}^{|V|}_\mathbf{v}))\\
&= XP_N(0|\mathbf{b}^n_iV^T|\mathbf{b}^{|V|}_\mathbf{v})\\
&= \mathcal{E}_V(X_i)\mathcal{E}_V\big(\text{RP}_N(2,\mathbf{v})\big)
\end{align}
Where diagonal operators precede X operators, we first consider the case where $\mathbf{v}[i]=0$. In this case, the operators commute so we can swap the order of operators so that the X operators precede the diagonal operator. Now consider the case $\mathbf{v}[i]=1$ where the operators do not commute:
\begin{align}
    \mathcal{E}_V\big(\text{RP}_N(2,\mathbf{v})X_i\big) &= \mathcal{E}_V\big(\omega^2 X_i\text{RP}_N(-2,\mathbf{v})\big)\\
    &= XP_N(2|\mathbf{b}^n_iV^T|-\mathbf{b}^{|V|}_\mathbf{v})
\end{align}
Due to the commutation relation of \Cref{eq_XP_relation} and because $(\mathbf{b}^n_iV^T) \mathbf{b}^{|V|}_\mathbf{v} = \mathbf{b}^{|V|}_\mathbf{v}$ when $
\mathbf{v}[i] = 1$:
\begin{align}
    \mathcal{E}_V\big(\text{RP}_N(2,\mathbf{v})\big)\mathcal{E}_V(X_i) &= XP_N(0|\mathbf{0}|\mathbf{b}^{|V|}_\mathbf{v})XP_N(0|\mathbf{b}^n_iV^T|\mathbf{0})\\
    &= XP_N(2|\mathbf{b}^n_iV^T|-\mathbf{b}^{|V|}_\mathbf{v})\\
    &= \mathcal{E}_V(\text{RP}_N(2,\mathbf{v})X_i)
\end{align}
Because group operations are preserved for generators of the group, the embedding is a group homomorphism.
\end{proof}
\subsection{Canonical Form of Logical Phase-Rotation Operators}
In this Section, we show that the canonical form of logical phase-rotation operators discussed in  \Cref{sec_canonical_logical_operators} acts as a logical operator as claimed.
\begin{proposition}[Logical Phase Rotation Operator]\label{prop_logical_RP}
Let $L_Z$ be a binary matrix representing the $Z$-components of logical Z operators such that $L_Z^TL_X = I_k$ and let $\mathbf{w}$ be a binary vector of length $k$.

The operator $\text{RP}_N(2,\mathbf{w}L_Z)$ acts as a logical $\overline{\text{RP}_N(2,\mathbf{w})}$ operator.
\end{proposition}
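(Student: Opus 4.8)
The plan is to verify the logical-operator condition $\overline{B}\,\mathcal{C}=\mathcal{C}\,B$ of \Cref{eq_LO_condition} directly on computational basis states of $\mathcal{H}_2^k$, mirroring the argument for \Cref{prop_logical_P}. Throughout write $B:=\text{RP}_N(2,\mathbf{w})$ and $\overline{B}:=\text{RP}_N(2,\mathbf{w}L_Z)$, and recall the encoding $\mathcal{C}\ket{\mathbf{v}}=\ket{\mathbf{v}}_L$ for $\mathbf{v}\in\mathbb{Z}_2^k$.

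First I would evaluate both sides on $\ket{\mathbf{v}}$. By \Cref{prop_RP_action}, $B\ket{\mathbf{v}}=\omega^{2(\mathbf{w}\cdot\mathbf{v}\bmod 2)}\ket{\mathbf{v}}$, so using \Cref{eq_codewords},
\begin{align}
\mathcal{C}B\ket{\mathbf{v}} = \omega^{2(\mathbf{w}\cdot\mathbf{v}\bmod 2)}\ket{\mathbf{v}}_L = \sum_{\mathbf{u}\in\mathbb{Z}_2^r}\omega^{2(\mathbf{w}\cdot\mathbf{v}\bmod 2)}\ket{\mathbf{e_{uv}}}.
\end{align}
On the other hand $\overline{B}\,\mathcal{C}\ket{\mathbf{v}}=\overline{B}\ket{\mathbf{v}}_L=\sum_{\mathbf{u}\in\mathbb{Z}_2^r}\overline{B}\ket{\mathbf{e_{uv}}}$, and \Cref{prop_RP_action} applied to each term gives $\overline{B}\ket{\mathbf{e_{uv}}}=\omega^{2(\mathbf{e_{uv}}\cdot(\mathbf{w}L_Z)\bmod 2)}\ket{\mathbf{e_{uv}}}$. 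Comparing the two expansions term by term, the proposition reduces to the purely linear-algebraic claim that
\begin{align}
\mathbf{e_{uv}}\cdot(\mathbf{w}L_Z)\equiv \mathbf{w}\cdot\mathbf{v}\pmod 2\qquad\text{for all }\mathbf{u}\in\mathbb{Z}_2^r,\ \mathbf{v}\in\mathbb{Z}_2^k.
\end{align}

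To establish this I would expand $\mathbf{e_{uv}}\cdot(\mathbf{w}L_Z)=(\mathbf{u}S_X+\mathbf{v}L_X)L_Z^{T}\mathbf{w}^{T}\bmod 2$ and invoke the two defining properties of the $Z$-logicals $L_Z$: first that the rows of $L_Z$ are genuine logical $Z$ operators, hence commute with every $X$-check, so $S_X L_Z^{T}\equiv 0\pmod 2$ and the $\mathbf{u}$-dependent term vanishes; and second that $L_Z L_X^{T}\equiv I_k\pmod 2$, so that $\mathbf{v}L_X L_Z^{T}\mathbf{w}^{T}\equiv\mathbf{v}\cdot\mathbf{w}$. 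Combining these gives the displayed congruence, hence $\overline{B}\ket{\mathbf{e_{uv}}}=\omega^{2(\mathbf{w}\cdot\mathbf{v}\bmod 2)}\ket{\mathbf{e_{uv}}}$ for every $\mathbf{u}$; the two sums then agree term by term and $\overline{B}\,\mathcal{C}=\mathcal{C}\,B$, as required.

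The one delicate point — the ``main obstacle,'' modest as it is — is making sure that $\overline{B}$ applies the \emph{same} phase to every basis vector $\ket{\mathbf{e_{uv}}}$ occurring in a fixed codeword $\ket{\mathbf{v}}_L$; this $\mathbf{u}$-independence is precisely what $S_X L_Z^{T}\equiv 0$ buys, and it is the reason the correct $Z$-component is $\mathbf{w}L_Z$ rather than some arbitrary $\mathbf{z}$ that matches $\mathbf{w}\cdot\mathbf{v}$ only after summing over the codeword. I would also remark that reducing the entries of $\mathbf{w}L_Z$ modulo $2$ before forming $\text{RP}_N(2,\cdot)$ is harmless, since it leaves every dot product unchanged modulo $2$.
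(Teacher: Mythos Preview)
Your proof is correct and follows essentially the same route as the paper: both reduce the claim to checking that $\mathbf{e_{uv}}\cdot(\mathbf{w}L_Z)\equiv\mathbf{w}\cdot\mathbf{v}\pmod 2$ by appealing to the action of $\text{RP}_N$ on computational basis states, and both deduce this from the fact that each row $\mathbf{z}_j$ of $L_Z$ satisfies $\mathbf{e_{uv}}\cdot\mathbf{z}_j\equiv\mathbf{v}[j]$. Your argument is slightly more explicit in spelling out the two ingredients $S_XL_Z^T\equiv 0$ and $L_XL_Z^T\equiv I_k$, whereas the paper simply cites \Cref{prop_logical_P} for the same conclusion, but the substance is identical.
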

\begin{proof}
    This can be seen by considering the action of the operator on the computational basis element $|\mathbf{e}_\mathbf{uv}\rangle$ where $\mathbf{e}_\mathbf{uv} := \mathbf{u}S_X + \mathbf{v}L_X$. From the argument in \Cref{prop_logical_P},  $\mathbf{e}_\mathbf{uv} \cdot \mathbf{z}_j \mod 2 = 1 \iff \mathbf{v}[j] = 1$. Hence:
\begin{align}
    \mathbf{e}_\mathbf{uv} \cdot (\bigoplus_{j \preccurlyeq \mathbf{w}}\mathbf{z}_j) \mod 2 = 1 &\iff \bigoplus_{j \preccurlyeq \mathbf{w}} \mathbf{v}[j] = 1 \\&\iff \mathbf{v}\cdot \mathbf{w} \mod 2 = 1 
\end{align}
Hence, the phases applied by the operators are the same and the result follows.
\end{proof}

\section{Application of Methods to Non-CSS Stabiliser Codes}\label{app_non_css}
In this work, we have focused on identifying diagonal logical operators for CSS codes in the form defined in \Cref{sec_css_codes}. In this Section, we show how to find diagonal logical operators for arbitrary non-CSS stabiliser codes. We will prove the following main proposition:
\begin{proposition}[Mapping Non-CSS Stabiliser Codes to CSS Codes]\label{prop_non_css}
    Let $\mathbf{C}$ be the codespace of a Pauli stabiliser code on $n$ qubits. There exists a CSS code on $n$ qubits with codespace $\mathbf{C}'$ such that $\mathbf{C} = DQ \mathbf{C}'$ where $Q:= XP_2(0|\mathbf{q}|\mathbf{0})$, $\mathbf{q}$ is a length $n$ binary vector and $D$ is a diagonal level 2 Clifford operator. Furthermore, a diagonal operator  $\overline{B}$ is a logical $B$ operator of $\mathbf{C}'$ if and only if $Q\overline{B}Q^{-1}$ is a logical $B$ operator of $\mathbf{C}$.
\end{proposition}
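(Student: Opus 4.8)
The plan is to work in the binary symplectic representation of the Pauli group and to exploit the fact that a diagonal level-2 Clifford $D_A$ (a product of $S$ gates and $CZ$ gates encoded by a symmetric binary matrix $A$) acts on a Pauli by $XP_2(p|\mathbf{x}|\mathbf{z}) \mapsto XP_2(p'|\mathbf{x}|\mathbf{z}+\mathbf{x}A)$, i.e.\ it fixes the $X$-component and shears the $Z$-component; a Pauli $X$-string $Q:=XP_2(0|\mathbf{q}|\mathbf{0})$ fixes all symplectic components and only flips the signs of the purely-$Z$ generators, while a Pauli $Z$-string only flips the signs of the purely-$X$ generators.

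For the first statement, I would take a generating set of the stabiliser group $\mathcal{S}$ of $\mathbf{C}$ and row-reduce so that the $X$-components form a block matrix $\begin{pmatrix}X_1\\0\end{pmatrix}$ with $X_1$ of full row rank $r$; writing the corresponding $Z$-components as $\begin{pmatrix}Z_1\\Z_2\end{pmatrix}$, the bottom block consists of purely-$Z$ generators, and $Z_2$ still has full row rank. The abelian condition on $\mathcal{S}$ forces $X_1 Z_2^{T}=0$ and $X_1 Z_1^{T}$ symmetric over $\mathbb{Z}_2$. I would then produce a symmetric binary $A$ with $X_1 A = Z_1 \pmod 2$: picking any right inverse $R$ of $X_1$ (so $X_1 R = I_r$), the matrix $A := R Z_1 + Z_1^{T} R^{T} + R\,X_1 Z_1^{T}R^{T}$ is symmetric (using symmetry of $X_1 Z_1^{T}$) and satisfies $X_1 A = Z_1 + 2 X_1 Z_1^{T} R^{T} = Z_1$ over $\mathbb{Z}_2$. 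Conjugating $\mathcal{S}$ by $D_A$ turns each top-block generator into $\pm X^{\mathbf{x}_i}$ (it must be $\pm X^{\mathbf{x}_i}$ once the $Z$-component is killed, since the conjugate of a Hermitian Pauli is Hermitian) and leaves each bottom-block generator as $\pm Z^{\mathbf{z}_j}$; because $X_1 Z_2^{T}=0$ these commute, so this is a CSS stabiliser group ``up to signs''. Conjugating further by a $Z$-string (solvable because the $\mathbf{x}_i$ are independent) makes all $X$-type signs $+1$, and then by an $X$-string $Q$ (solvable because the $\mathbf{z}_j$ are independent) makes all $Z$-type signs $+1$, yielding a genuine CSS code $\mathbf{C}'$ with $S_X := X_1$, $S_Z := Z_2$, and $L_X$ chosen to complete $S_X$ to a basis of $\langle S_Z\rangle^{\perp}$. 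Collecting the transformations, and using that all diagonal operators commute so that $D_A$ and the $Z$-string fuse into a single diagonal level-2 Clifford $D$, gives $\mathbf{C} = DQ\,\mathbf{C}'$ up to an irrelevant global phase.

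For the second statement, equip $\mathbf{C}$ with the encoding operator $\mathcal{C} := DQ\,\mathcal{C}'$ induced from the encoding $\mathcal{C}'$ of $\mathbf{C}'$. Then $Q\overline{B}Q^{-1}$ is a logical $B$ operator of $\mathbf{C}$ iff $Q\overline{B}Q^{-1}DQ\,\mathcal{C}' = DQ\,\mathcal{C}'B$, and multiplying on the left by $(DQ)^{-1}$ this is equivalent to $(Q^{-1}D^{-1}Q)\,\overline{B}\,(Q^{-1}D^{-1}Q)^{-1}\mathcal{C}' = \mathcal{C}'B$. Since $Q$ permutes the computational basis and $D$ is diagonal, $Q^{-1}D^{-1}Q$ is again diagonal; it therefore commutes with the diagonal operator $\overline{B}$, and the condition collapses to $\overline{B}\,\mathcal{C}' = \mathcal{C}'B$, i.e.\ $\overline{B}$ is a logical $B$ operator of $\mathbf{C}'$.

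I expect the real content to lie in the first statement, specifically the claim that a diagonal level-2 Clifford alone suffices to bring the symplectic form of an arbitrary stabiliser code into CSS shape; the closed form for $A$ settles the linear algebra, but care is needed to verify that the row-reduction isolating $X_1$ keeps the purely-$Z$ block independent, that the phases picked up by $D_A$ never obstruct putting the $X$-type generators into the form $+X^{\mathbf{x}_i}$, and that the two sign-fixing steps are simultaneously solvable (which follows from independence of the $\mathbf{x}_i$ and of the $\mathbf{z}_j$). The second statement is then a short computation whose only essential ingredient is that conjugating a diagonal operator by a Pauli $X$-string yields a diagonal operator.
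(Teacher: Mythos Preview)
Your argument is correct and follows a genuinely different route from the paper. The paper obtains $D$ by invoking the weighted-hypergraph-state machinery of Ref.~\cite{XP}: it forms the state $\ket{S}$ stabilised by $\mathbf{S}_X,\mathbf{S}_Z,\mathbf{L}_X$, observes that as an XP state of precision $2$ it can be written (by Algorithm~5.3.1 of \cite{XP}) as $DQ\sum_{\mathbf{u},\mathbf{v}}\ket{\mathbf{u}S_X+\mathbf{v}L_X}$ for some diagonal level-$2$ Clifford $D$, and then reads off $\ket{\mathbf{v}}_L = DQ\ket{\mathbf{v}}'_L$. The vector $\mathbf{q}$ is produced first, by solving $p_i+\mathbf{q}\cdot\mathbf{z}_i=0\bmod 2$ for the diagonal generators. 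Your approach instead stays entirely inside symplectic linear algebra: the closed form $A=RZ_1+Z_1^TR^T+R\,X_1Z_1^TR^T$ is an explicit witness that a symmetric $A$ with $X_1A=Z_1$ always exists, so that a single product of $S$ and $CZ$ gates shears away the $Z$-components of the non-diagonal generators, after which two Pauli-string conjugations clean up signs. This is more elementary and self-contained (no appeal to hypergraph-state results), and it makes transparent exactly which ingredient---symmetry of $X_1Z_1^T$ from commutativity---is doing the work. The paper's route, by contrast, packages the construction of $D$ into an existing algorithm and treats the codeword transformation more directly, which ties in naturally with its XP-formalism framework. For the second claim both arguments are essentially the same short computation: conjugation of a diagonal operator by $Q$ remains diagonal, so the factor $D$ drops out.
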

The CSS code $\mathbf{C}'$ in \Cref{prop_non_css} may have different error correction properties to $\mathbf{C}$ (i.e. weight of stabiliser generators and logical operators), but allows us to determine the diagonal logical operator structure of $\mathbf{C}$.
In this Section, we first introduce some background material on non-CSS stabiliser codes. CSS codes of the form of \Cref{sec_css_codes} have diagonal stabiliser generators with zero phase components and non-diagonal stabiliser generators with zero phase and $Z$-components. This is not the case for arbitrary stabiliser codes, and we show how to eliminate these components in two steps to yield the operators $Q$ and $D$ in the above proposition.  We illustrate \Cref{prop_non_css} by applying it to the perfect 5-qubit code of Ref.~\cite{5_qubit_code}.

\subsection{Background on Non-CSS Codes}
Arbitrary Pauli stabiliser codes have stabiliser generators from the Pauli group $\langle iI, X, Z\rangle^{\otimes n} = \mathcal{XP}_2^n$. 
A method of determining a canonical set of independent stabiliser generators, logical $X$ and logical $Z$ operators is given on  page 477 of Ref.~\cite{nielsen_chuang_2010}. 
Let $\mathbf{S}_X$ and $\mathbf{S}_Z$ be the canonical stabiliser generators and let $\mathbf{L}_X$ be the  canonical logical $X$ operators.
Elements of $\mathbf{S}_Z$ may have signs of $\pm 1$ and elements of $\mathbf{S}_X$ may have non-trivial phase and $Z$-components.
For \Cref{prop_non_css}, we require that $\mathbf{C}' := (DQ)^{-1}\mathbf{C}$ is stabilised by diagonal generators with trivial phase components and non-diagonal  generators with trivial phase and $Z$-components.

We now set out a canonical form for the codewords of the stabiliser code $\mathbf{C}$. 
Let $r$ be the number of operators in $\mathbf{S}_X$ and $k$ the number of operators in $\mathbf{L}_X$ and let $\mathbf{v} \in \mathbb{Z}_2^k$.  
Let $\mathbf{q}$ be a binary vector of length $n$ such that $B\ket{\mathbf{q}} = \ket{\mathbf{q}}$ for all $B \in \mathbf{S}_Z$.
Define $\mathbf{L}_X^\mathbf{v}:= \prod_{i \preccurlyeq \mathbf{v}}\mathbf{L}_X[i]$ where $\mathbf{L}_X[i]$ is the $i$th operator in $\mathbf{L}_X$. 
Due to the arguments in Sections 4.2 and 6.2 of \cite{XP}, the following codewords span the codespace $\mathbf{C}$ and define the encoding map $C$ of $\mathbf{C}$ (\Cref{sec_LO_of_CSS_codes}):
\begin{align}
    \ket{\mathbf{v}}_L = \sum_{\mathbf{u}\in \mathbb{Z}_2^r}\mathbf{S}_X^{\mathbf{u}}\mathbf{L}_X^{\mathbf{v}}\ket{\mathbf{q}}.\label{eq_non_css_codewords}
\end{align}

We now discuss how the codewords and logical operators of a stabiliser code $\mathbf{C}$ transform when the codespace is acted upon by a unitary operator $U$. The codewords of the transformed code $\mathbf{C}' := U\mathbf{C}$ are given by $U\ket{\mathbf{v}}_L$ so the encoding map of $\mathbf{C}'$ is given by $U\mathcal{C}$. The operator $A$ is a logical identity of $\mathbf{C}$ if and only if $UAU^{-1}$ is a logical identity of $\mathbf{C}'$. This is because:
\begin{align}(UAU^{-1})U\ket{\mathbf{v}}_L = UA\ket{\mathbf{v}}_L = U\ket{\mathbf{v}}_L.\end{align}
As the stabiliser generators $\mathbf{S}_X$ and $\mathbf{S}_Z$ are elements of the logical identity group, they also update via conjugation.
The operator $\overline{B}$ is a logical $B$ operator on $\mathbf{C}$ if and only if $U\overline{B}U^{-1}$ is a logical $B$ operator on $\mathbf{C}'$ because for all logical identities $A$ of $\mathbf{C}$ the requirements of \Cref{sec_LO_test} and \Cref{sec_LO_of_CSS_codes} are met as follows:
\begin{align}
[[ U\overline{B}U^{-1}, U A U^{-1} ]] &= U[[\overline{B},A]]U^{-1}; \text{ and}\\
(U\overline{B}U^{-1})U\mathcal{C} &= U \overline{B} \mathcal{C} = (U\mathcal{C})B.
\end{align}

\subsection{Eliminating Phase Components from Diagonal Stabiliser Generators}
We now show how to find the vector $\mathbf{q}$ in \Cref{eq_non_css_codewords} which allow us to eliminate signs from the diagonal stabiliser generators of the non-CSS code $\mathbf{C}$ via conjugation by the operator $Q:=XP_2(0|\mathbf{q}|\mathbf{0})$. 

The canonical diagonal stabiliser generators $\mathbf{S}_Z$ are of form $XP_2(2p_i|\mathbf{0}|\mathbf{z}_i)$ where $p_i \in \mathbb{Z}_2$ and $\mathbf{z}_i\in\mathbb{Z}_2^n$. Let $E_s$ be the binary matrix with rows of form $(p_i|\mathbf{z}_i)$ and let $K_s := \ker_{\mathbb{Z}_2}(E_s)$. If $p_i = 1$ for any $i$, the top row of $K_s$ is of form $(1|\mathbf{q})$ and satisfies $p_i + \mathbf{q}\cdot \mathbf{z}_i = 0 \mod 2$ for all $i$. 
Now let $Q:= XP_2(0|\mathbf{q}|\mathbf{0})$ then we also have $Q XP_2(2p_i|\mathbf{0}|\mathbf{z}_i) Q = XP_2(2p_i + 2\mathbf{q}\cdot\mathbf{z}_i|\mathbf{0}|\mathbf{z}_i) = XP_2(0|\mathbf{0}|\mathbf{z}_i)$. Hence conjugation by $Q$ eliminates the phase components of the diagonal stabiliser generators as required. 
As $Q$ is non-diagonal, the diagonal logical operators and identities may update on conjugation by $Q$.

\subsection{Eliminating Phase and $Z$-Components from Non-Diagonal Stabiliser Generators}
We now show how to find a diagonal level 2 Clifford operator $D$ from \Cref{prop_non_css} which allows us to eliminate the phase and $Z$-components of the non-diagonal stabilisers $\mathbf{S}_X$. 
Let $\ket{S}$ be the state stabilised by the set of $n$ independent operators $\mathbf{S}_X, \mathbf{S}_Z$ and $ \mathbf{L}_X$. 
We can write $\ket{S}$ as follows:
\begin{align}
    \ket{S} = \sum_{\mathbf{u}\in \mathbb{Z}_2^r,\mathbf{v}\in \mathbb{Z}_2^k}\mathbf{S}_X^{\mathbf{u}}\mathbf{L}_X^{\mathbf{v}}\ket{\mathbf{q}} = \sum_{\mathbf{v}} \ket{\mathbf{v}}_L.
\end{align}       
Let $S_X$ and $L_X$ be the binary matrices formed from the X-components of $\mathbf{S}_X$ and $\mathbf{L}_X$ respectively. Using the terminology of Proposition 5.1 of Ref.~\cite{XP}, $\ket{S}$ is an XP state of precision $N=2$ and so is a weighted hypergraph state of form:
\begin{align}
    \ket{S} = D\sum_{\mathbf{u,v}}\ket{\mathbf{u}S_X + \mathbf{v}L_X + \mathbf{q}}= DQ\sum_{\mathbf{u,v}}\ket{\mathbf{u}S_X + \mathbf{v}L_X}.
\end{align}
The operator $D$ is a product of diagonal level 2 Clifford operators and can calculated via the method in Algorithm 5.3.1 of Ref.~\cite{XP}. 
Now let $\mathbf{C}'$ be the CSS code specified by the X-checks $S_X$ and X-logicals $L_X$. Due to \Cref{sec_css_codes}, codewords of $\mathbf{C}'$ are of form $\ket{\mathbf{v}}'_L:= \sum_{\mathbf{u}}\ket{\mathbf{u}S_X + \mathbf{v}L_X}$ and so the codewords of $\mathbf{C}$ can be written:
\begin{align}
    \ket{\mathbf{v}}_L = DQ\sum_{\mathbf{u}}\ket{\mathbf{u}S_X + \mathbf{v}L_X} =  DQ\ket{\mathbf{v}}'_L.
\end{align}
Hence, $\mathbf{C} = DQ\mathbf{C}'$ as required. Transforming a CSS code $\mathbf{C}'$ by the diagonal operator $D$ has no effect on the diagonal stabiliser generators, logical identities or logical operators. However, it can increase the weight of non-diagonal stabiliser generators and logical X operators, and so increase the code distance.
\begin{example}[Perfect 5-Qubit Code]
Let $\mathbf{C}$ be the perfect 5-qubit code of \cite{5_qubit_code} with stabiliser generators and logical $X$ operator as follows:
\begin{align}
    \mathbf{S} &:= \begin{pmatrix}
        XZZXI\\
        IXZZX\\
        XIXZZ\\
        ZXIXZ
    \end{pmatrix};&
    \overline{X} &:= ZIIZX
    \end{align}.
Let $\mathbf{C}'$ be the CSS code with $X$-checks and $X$-logicals:
\begin{align}
    S_X &:= \begin{pmatrix}
        10010\\
        01001\\
        10100\\
        01010
    \end{pmatrix};& 
    L_X &:= \begin{pmatrix}
        00001
    \end{pmatrix}.
\end{align}
We find that $D = CZ_{01}CZ_{12}CZ_{23}CZ_{01}CZ_{34}CZ_{40}$ satisfies $\mathbf{C} = D\mathbf{C}'$ using the conjugation rule $CZ_{01} X_0 CZ_{01} = X_0 Z_1$. Whilst $\mathbf{C}$ has distance 3, $\mathbf{C}'$ has distance 1. In the \href{https://github.com/m-webster/CSSLO/blob/main/notebooks/10.3_non-CSS.ipynb}{linked Jupyter notebook}, users can use the above method to find $D,Q$ and $\mathbf{C}'$  for various non-CSS stabiliser codes from \href{http://www.codetables.de}{www.codetables.de}.
\end{example}

\medskip

\bibliographystyle{unsrt}

\begin{thebibliography}{10}

\bibitem{google}
Rajeev Acharya, Igor Aleiner, Richard Allen, et~al.
\newblock Suppressing quantum errors by scaling a surface code logical qubit.
\newblock {\em Nature}, 614:676–681, Feb 2023.

\bibitem{quantinuum}
Ciarán Ryan-Anderson, Justin~G. Bohnet, et~al.
\newblock Realization of real-time fault-tolerant quantum error correction.
\newblock {\em Phys. Rev. X}, 11:041058, Dec 2021.

\bibitem{eth}
Sebastian Krinner, Nathan Lacroix, Ants Remm, et~al.
\newblock Realization of real-time fault-tolerant quantum error correction.
\newblock {\em Nature}, 605:669--674, May 2022.

\bibitem{css}
Robert Calderbank and Peter~W. Shor.
\newblock Good quantum error-correcting codes exist.
\newblock {\em Phys. Rev. A}, 54:1098--1105, Aug 1996.

\bibitem{eastin_knill}
Bryan Eastin and Emanuel Knill.
\newblock Restrictions on transversal encoded quantum gate sets.
\newblock {\em Phys. Rev. Lett.}, 102:110502, Mar 2009.

\bibitem{clifford_hierarchy}
Daniel Gottesman and Isaac~L. Chuang.
\newblock Demonstrating the viability of universal quantum computation using
  teleportation and single-qubit operations.
\newblock {\em Nature}, 402:390--393, Nov 1999.

\bibitem{clifford_universality}
Gabriele Nebe, E.~M. Rains, and N.~J.~A. Sloane.
\newblock The invariants of the clifford groups.
\newblock {\em Designs, Codes and Cryptography}, 24:99--122, Sep 2001.

\bibitem{steane_code}
Andrew~M. Steane.
\newblock Error correcting codes in quantum theory.
\newblock {\em Phys. Rev. Lett.}, 77:793--797, Jul 1996.

\bibitem{2d_colour}
Hector Bombin and Miguel~A. Martin-Delgado.
\newblock Topological quantum distillation.
\newblock {\em Phys. Rev. Lett.}, 97:180501, Oct 2006.

\bibitem{triorthogonal}
Sergey Bravyi and Jeongwan Haah.
\newblock Magic-state distillation with low overhead.
\newblock {\em Phys. Rev. A}, 86:052329, Nov 2012.

\bibitem{moussa}
Jonathan~E. Moussa.
\newblock Transversal clifford gates on folded surface codes.
\newblock {\em Phys. Rev. A}, 94:042316, Oct 2016.

\bibitem{fold_transversal}
Nikolas~P. Breuckmann and Simon Burton.
\newblock Fold-transversal clifford gates for quantum codes.
\newblock \href{https://arxiv.org/abs/2202.06647}{arXiv:2202.06647 [quant-ph]},
  Feb 2022.

\bibitem{partitioning_qubits}
Armanda~O. Quintavalle, Paul Webster, and Michael Vasmer.
\newblock Partitioning qubits in hypergraph product codes to implement logical
  gates.
\newblock \href{https://arxiv.org/abs/2204.10812}{arXiv:2204.10812 [quant-ph]},
  Apr 2022.

\bibitem{XP}
Mark~A. Webster, Benjamin~J. Brown, and Stephen~D. Bartlett.
\newblock The {XP} {S}tabiliser {F}ormalism: a {G}eneralisation of the {P}auli
  {S}tabiliser {F}ormalism with {A}rbitrary {P}hases.
\newblock {\em {Quantum}}, 6:815, Sep 2022.

\bibitem{GF4}
Robert Calderbank, Eric~M. Rains, P.M. Shor, and N.J.A. Sloane.
\newblock Quantum error correction via codes over gf(4).
\newblock {\em IEEE Transactions on Information Theory}, 44(4):1369--1387, Jul
  1998.

\bibitem{automorphism_complexity}
László Babai, Paolo Codenotti, Joshua~A. Grochow, and Youming Qiao.
\newblock Code equivalence and group isomorphism.
\newblock In {\em Proceedings of the 2011 Annual ACM-SIAM Symposium on Discrete
  Algorithms (SODA)}, pages 1395--1408, Jan 2011.

\bibitem{unifying_clifford}
Narayanan Rengaswamy, Robert Calderbank, and Henry~D. Pfister.
\newblock Unifying the clifford hierarchy via symmetric matrices over rings.
\newblock {\em Phys. Rev. A}, 100:022304, Aug 2019.

\bibitem{synthesis}
Narayanan Rengaswamy, Robert Calderbank, Swanand Kadhe, and Henry~D. Pfister.
\newblock Logical clifford synthesis for stabilizer codes.
\newblock {\em IEEE Transactions on Quantum Engineering}, 1:1--17, Sep 2020.

\bibitem{climbing}
Jingzhen Hu, Qingzhong Liang, and Robert Calderbank.
\newblock Climbing the diagonal clifford hierarchy.
\newblock \href{https://arxiv.org/abs/2110.11923}{arXiv:2110.11923 [quant-ph]},
  Oct 2021.

\bibitem{cui}
Shawn~X. Cui, Daniel Gottesman, and Anirudh Krishna.
\newblock Diagonal gates in the clifford hierarchy.
\newblock {\em Phys. Rev. A}, 95:012329, Jan 2017.

\bibitem{anderson2014classification}
Jonas~T. Anderson and Tomas Jochym-O'Connor.
\newblock Classification of transversal gates in qubit stabilizer codes.
\newblock {\em Quantum Info. Comput.}, 16(9–10):771–802, Jul 2016.

\bibitem{hypercube}
Earl~T. Campbell.
\newblock The smallest interesting colour code.
\newblock Blog Post, Sep 2016.

\bibitem{Kubica1}
Aleksander Kubica, Beni Yoshida, and Fernando Pastawski.
\newblock Unfolding the color code.
\newblock {\em New Journal of Physics}, 17(8):083026, Aug 2015.

\bibitem{LDPC_colour}
Waldir~Silva Soares and Eduardo~Brandani Da~Silva.
\newblock Hyperbolic quantum color codes.
\newblock {\em Quantum Info. Comput.}, 18(3–4):306–318, Mar 2018.

\bibitem{poset_codes}
Yansheng Wu and Yoonjin Lee.
\newblock Self-orthogonal codes constructed from posets and their applications
  in quantum communication.
\newblock {\em Mathematics}, 8(9), Sep 2020.

\bibitem{howell}
John~A. Howell.
\newblock Spans in the module $(\mathbb{Z}_m)^s$.
\newblock {\em Linear and Multilinear Algebra}, 19(1):67--77, Jan 1986.

\bibitem{small_triorthogonal}
Sepehr Nezami and Jeongwan Haah.
\newblock Classification of small triorthogonal codes.
\newblock {\em Phys. Rev. A}, 106:012437, Jul 2022.

\bibitem{morphing}
Michael Vasmer and Aleksander Kubica.
\newblock Morphing quantum codes.
\newblock {\em PRX Quantum}, 3:030319, Aug 2022.

\bibitem{litinski}
Daniel Litinski.
\newblock A game of surface codes: Large-scale quantum computing with lattice
  surgery.
\newblock {\em Quantum}, 3:128, Mar 2019.

\bibitem{NMR}
Lieven M.~K. Vandersypen and Isaac~L. Chuang.
\newblock Nmr techniques for quantum control and computation.
\newblock {\em Rev. Mod. Phys.}, 76:1037--1069, Jan 2005.

\bibitem{quantum_dots}
Tristan Meunier, Victor~E. Calado, and Lieven M.~K. Vandersypen.
\newblock Efficient controlled-phase gate for single-spin qubits in quantum
  dots.
\newblock {\em Phys. Rev. B}, 83:121403, Mar 2011.

\bibitem{coherent_noise}
Jingzhen Hu, Qingzhong Liang, Narayanan Rengaswamy, and Robert Calderbank.
\newblock Mitigating coherent noise by balancing weight-2 z-stabilizers.
\newblock {\em IEEE Transactions on Information Theory}, 68(3):1795--1808, Mar
  2022.

\bibitem{logical_CZ}
Rui Chao and Ben~W. Reichardt.
\newblock Fault-tolerant quantum computation with few qubits.
\newblock {\em npj Quantum Information}, 4(1), Sep 2018.

\bibitem{balanced_product}
Nikolas~P. Breuckmann and Jens~N. Eberhardt.
\newblock Balanced product quantum codes.
\newblock {\em {IEEE} Transactions on Information Theory}, 67(10):6653--6674,
  Oct 2021.

\bibitem{5_qubit_code}
Raymond Laflamme, Cesar Miquel, Juan~Pablo Paz, and Wojciech~Hubert Zurek.
\newblock Perfect quantum error correcting code.
\newblock {\em Phys. Rev. Lett.}, 77:198--201, Jul 1996.

\bibitem{nielsen_chuang_2010}
Michael~A. Nielsen and Isaac~L. Chuang.
\newblock {\em Quantum Computation and Quantum Information: 10th Anniversary
  Edition}.
\newblock Cambridge University Press, Dec 2010.

\end{thebibliography}

\end{document}